\newcommand\range{\mathrm{range}}
\newcommand\poly{\mathrm{poly}}
\newcommand\ranges{\mathrm{ranges}}
\newcommand\rng[1]{\mathrm{range}_{#1}}
\newcommand\nnz{\mathtt{nnz}}
\newcommand{\hide}[1]{}
\newtheorem{mydef}{Definition}[section]
\newtheorem{lem}[mydef]{Lemma}      
\newtheorem{thm}{Theorem}
\newtheorem{pro}[mydef]{Proposition}
\newtheorem{cor}[mydef]{Corollary}
\newtheorem{ass}{Assumption}[section]
\title{\vspace{-2em}Oblivious sketching for logistic regression}
\author{Alexander Munteanu\thanks{Dortmund Data Science Center, Faculties of Statistics and Computer Science, TU Dortmund University, Dortmund, Germany. Email: \texttt{alexander.munteanu@tu-dortmund.de}.}
\and Simon Omlor \thanks{Faculty of Statistics, TU Dortmund University, Dortmund, Germany. Email: \texttt{simon.omlor@tu-dortmund.de}.}
\and David Woodruff \thanks{Department of Computer Science, Carnegie Mellon University, Pittsburgh, PA, USA. Email: \texttt{dwoodruf@cs.cmu.edu}.}}
\begin{document}
\maketitle




\begin{abstract}
What guarantees are possible for solving logistic regression in one pass over a data stream? To answer this question, we present the first data oblivious sketch for logistic regression. Our sketch can be computed in input sparsity time over a turnstile data stream and reduces the size of a $d$-dimensional data set from $n$ to only $\operatorname{poly}(\mu d\log n)$ weighted points, where $\mu$ is a useful parameter which captures the complexity of compressing the data. Solving (weighted) logistic regression on the sketch gives an $O(\log n)$-approximation to the original problem on the full data set. We also show how to obtain an $O(1)$-approximation with slight modifications. Our sketches are fast, simple, easy to implement, and our experiments demonstrate their practicality.
\end{abstract}

\section{Introduction}
Sketches and coresets are arguably the most promising and widely used methods to facilitate the analysis of massive data with provable accuracy guarantees \cite{Phillips17,MunteanuS18,Feldman20}. 
Sketching has become a standard tool in core research areas such as data streams \cite{Muthukrishnan05} and numerical linear algebra \cite{Mahoney11,Woodruff14}, and constantly paves its way into diverse areas including computational geometry \cite{BravermanCKWY19,MeintrupMR19}, computational statistics \cite{GeppertIMQS17,Munteanu19}, machine learning \cite{Nelson20} and artificial intelligence \cite{BrandPSW20,GajjarM20,MolinaMK18}. Following the sketch-and-solve paradigm, we first apply a simple and fast dimensionality reduction technique to compress the data to a significantly smaller \emph{sketch} of polylogarithmic size. In a second step we feed the sketch to a standard solver for the problem, that needs little or no modifications. The theoretically challenging part is to prove an approximation guarantee for the solution obtained from the sketch with respect to the original large data set. 

\subsection{Related work}
\textbf{Deficiencies of coreset constructions.}
Most works on logistic regression have studied \emph{coresets} as a data reduction method. Those are small subsets of the data, often obtained by subsampling from a properly designed importance sampling distribution \cite{HugginsCB16,TolochinskiF18,MunteanuSSW18,TukanMF20,SamadianPMIC20}.
Those results often rely on regularization as a means to obtain small coresets. This changes the sampling distribution such that they do not generally apply to the unregularized setting that we study. The above coreset constructions usually require random access to the data and are thus not directly suitable for streaming computations. Even where row-order processing is permissible, at least two passes are required, one for calculating or approximating the probabilities and another for subsampling and collecting the data, since the importance sampling distributions usually depend on the data. A widely cited general scheme for making static (or multi-pass) constructions streamable in one pass is the merge \& reduce framework \cite{BentleyS80}. However, this comes at the cost of additional polylogarithmic overhead in the space requirements and also in the update time. The latter is a severe limitation when it comes to high velocity streams that occur for instance in large scale physical experiments such as the large hadron collider, where up to 100\,GB/s need to be processed and data rates are anticipated to grow quickly to several TB/s in the near future \cite{rohr2018data}. While the amortized insertion time of merge \& reduce is constant for some problems, in the worst case $\Theta(\log n)$ repeated coreset constructions are necessary for the standard construction to propagate through the tree structure; see e.g. \cite{FeldmanSS20}. This poses a prohibitive bottleneck in high velocity applications. Any data that passes and cannot be processed in real time will be lost forever.

Another limitation of coresets and the merge \& reduce scheme is that they work only in insertion streams, where the data is presented row-by-row. However it is unclear how to construct coresets when the data comes in column-wise order, e.g., when we first obtain the incomes of all individuals, then receive their heights and weights, etc. A similar setting arises when the data is distributed \emph{vertically} on numerous sites \cite{StolpeBDM13}. Sensor networks are another example where each sensor is recording only a single or a small subset of features (columns), e.g., each at one of many different production stages in a factory. Also the usual form of storing data in a table either row- or column-wise is not appropriate or efficient for extremely massive databases. The data is rather stored as a sequence of $(key, value)$ pairs in an arbitrary order in big unstructured databases \cite{GessertWFR17,SiddiqaKG17}.

The only work that can be simulated in a turnstile stream to tackle the extreme settings described above, is arguably \cite{SamadianPMIC20} via uniform subsampling. Their coreset size is roughly $\Theta(d\sqrt n)$ and works only when the problem is regularized very strongly such that the loss function is within constant factors to the regularizer, and thus widely independent of the input data. Consequently, the contribution of each point becomes roughly equal and thus makes uniform sampling work. However, those arguments do not work for unconstrained logistic regression, where each single point can dominate the cost and thus no sublinear compression below $\Omega(n)$ is possible in the worst case, as was shown in \cite{MunteanuSSW18}. To cope with this situation, the authors of \cite{MunteanuSSW18} introduced a complexity parameter $\mu(A)$ that is related to the statistical modeling of logistic regression, and is a useful measure for capturing the complexity of compressing the dataset $A$ for logistic regression. They developed a coreset construction of size $\tilde O(\mu d^{3/2}\sqrt{n})$. Although calculating their sampling distribution can be simulated in a row-order stream, the aforementioned limitation to two passes is an unsolved open problem. The coreset size was reduced to $\poly(\mu d \log n)$ but only at the cost of even more row-order passes to compute repeatedly a coreset from a coreset, $O(\log\log n)$ times.\\

\textbf{On the importance of data oblivious sketching.}
Oblivious sketching methods are much better positioned for handling high velocity streams, as well as highly unstructured and arbitrarily distributed data. Linear sketches allow efficient applications in single pass sequential streaming and in distributed environments, see, e.g. \cite{ClarksonW09,WoodruffZ13,KannanVW14}. Linear sketches can be updated in the most flexible dynamic setting, which is commonly referred to as the \emph{turnstile} model, see, e.g., \cite{Muthukrishnan05} for a survey. In this model we initialize a matrix $A$ to the all-zero matrix. The stream consists of $(key, value)$ updates of the form $(i, j, v)$, meaning that $A_{ij}$ will be updated to $A_{ij} + v$. A single entry can be defined by a single update or by a subsequence of not necessarily consecutive updates. For instance, a sequence $\ldots,(i,j,27),\ldots,(i,j,-5),\ldots$ will result in $A_{ij} = 22$. Deletions are possible in this setting by using negative updates matching previous insertions. At first glance this model might seem technical or unnatural but we stress that for dealing with the aforementioned unstructured data, the design of algorithms working in the turnstile model is of high importance. We will see how any update can be calculated in $O(1)$ basic operations so it becomes applicable in high velocity real-time applications. Additionally, due to linearity, oblivious sketching algorithms can be represented as linear maps, i.e., sketching matrices $S$. In particular they support several operations such as adding, subtracting, and scaling databases $A_j$ efficiently in the sketch space, since $SA=S\sum_{j}\alpha_j A_j = \sum_{j}\alpha_j SA_j$. For instance, if $A_{t_1}$ and $A_{t_2}$ are balances of bank accounts at time steps $t_1 < t_2$, then $SB=SA_{t_2}-SA_{t_1}$ is a sketch of the changes in the period $t\in (t_1, t_2]$.\\

\textbf{Data oblivious sketching for logistic regression.}
In this paper we deal with unconstrained logistic regression in one pass over a turnstile data stream. As most known turnstile data stream algorithms are linear sketches (and there is some evidence that linear sketches are optimal for such algorithms in certain conditions \cite{LiNW14,AiHLW16}), it is natural for achieving our goals to look for a distribution over random matrices that can be used to sketch the data matrix such that the (optimal) cost of logistic regression is preserved up to constant factors. Due to the aforementioned impossibility result, the reduced sketching dimension will depend polynomially on the mildness parameter $\mu(A)$, and thus we need $\mu(A)$ to be small, which is common under usual modeling assumptions in statistics \cite{MunteanuSSW18}. In this setting, logistic regression becomes similar to an $\ell_1$-norm regression problem for the subset of misclassified inputs, and a uniform sample suffices to approximate the contribution of the other points. 

Known linear \emph{subspace embedding} techniques for $\ell_1$ based on Cauchy ($1$-stable) random variables \cite{SohlerW11} or exponential random variables \cite{WoodruffZ13} have a dilation of $O(d\log d)$ or higher polynomials thereof, and nearly tight lower bounds for this distortion exist \cite{WangW19}. While a contraction factor of $(1-\varepsilon)$ seems possible over an entire linear subspace, a constant dilation bound for an arbitrary but fixed vector (e.g., the optimal solution) are the best we can hope for \cite{Indyk06,ClarksonW15,lwy21}. A general sketching technique was introduced by \cite{ClarksonW15} that achieves such a \emph{lopsided} result for all regression loss functions that grow at least linearly and at most quadratically (the quadratic upper bound condition is necessary for a sketch with a sketching dimension sub-polynomial in $n$ to exist \cite{BravermanO10a}) and have properties of norms such as symmetry, are non-decreasing in the absolute value of their argument, and have $f(0)=0$, which is true for a class of robust $M$-estimators, though not for all. For example, the Tukey regression loss has zero growth beyond some threshold. The above sketching technique has been generalized to cope with this problem \cite{ClarksonWW19}. However the latter work still relies on a symmetric and non-decreasing loss function $f$ with $f(0)=0$.

For the logistic regression loss $\ell(v)=\ln(1+\exp(v))$, we note that it does not satisfy the above norm-like conditions since $\ell(0)=\ln(2)$, $\ell(x)\neq \ell(-x)$, and while it is linearly increasing on the positive domain, it is decreasing exponentially to zero on the negative domain. Indeed, the class of monotonic functions has linear lower bounds for the size of any coreset and more generally for any sketch \cite{TolochinskiF18,MunteanuSSW18}, where the unboundedness of the ratio $\ell(x)/ \ell(-x)$ plays a crucial role.

\subsection{Our contributions}
In this paper we develop the first oblivious sketching techniques for a generalized linear model, specifically for logistic regression. Our \texttt{LogReg}-sketch is algorithmically similar to the $M$-estimator sketches of \cite{ClarksonW15,ClarksonWW19}. However, there are several necessary changes and the theoretical analyses need non-trivial adaptations to address the special necessities of the logistic loss function. The sketching approach is based on a combination of subsampling at different levels and hashing the coordinates assigned to the same level uniformly into a small number of buckets \cite{IndykW05,VerbinZ12}. Collisions are handled by summing all entries that are mapped to the same bucket, which corresponds to a variant of the so-called \texttt{CountMin}-sketch \cite{CormodeM05}, where the sketch $S_h$ on each level is presented only a fraction of all coordinates.

More precisely, we define an integer branching parameter $b$ and a parameter $h_{\max}=O(\log_b n)$, and each row of our data matrix gets assigned to level $h \leq h_{\max}$ with probability proportional to $b^{-h}$.
The row is then assigned one of the $N$ buckets on level $h$ uniformly at random and added to that bucket. The new matrix that we obtain consists of $h_{\max}$ blocks, where each block consists of $N$ rows. The weight of a row is proportional to $b^h$.
The formal definition of the sketch is in Section \ref{secapprox}. This scheme is complemented by a row-sampling matrix $T$ which takes a small uniform sample of the data, which will be dealt with in Section \ref{sectUniform}.
\[ S = \left[ \begin{matrix} S_0 \\ S_1 \\ \vdots \\ S_{h_{\max}} \\ T \end{matrix} \right] \]
The intuition behind this approach is that coordinates are grouped according to \emph{weight classes} of similar loss which can be handled separately in the analysis. Weight classes with a small number of members will be approximated well on sketching levels with a large number of elements since roughly all members need to be subsampled to obtain a good estimate. Weight classes with many members will be approximated well on levels with a smaller number of subsamples, because if too many members survive the subsampling there will also be too many collisions under the uniform hashing, which would either lead to a large overestimate when those add up, or, due to asymmetry, would cancel each other and lead to large underestimations. The asymmetry problem is also one of the main reasons why we need a new analysis relying on the \texttt{CountMin}-sketch as a replacement for the \texttt{Count}-sketch previously used in \cite{ClarksonW15,ClarksonWW19}. The reason is that \texttt{Count}-sketch assigns random signs to the coordinates before summing them up in a bucket. The error could thus not be bounded if the sign of an element is changed since the ratio $\ell(x)/ \ell(-x)$ is unbounded for unconstrained logistic regression. Finally, since there could be too many small contributions near zero and logistic regression, unlike a normed loss function, assigns a non-zero but constant loss to them, their contribution can become significant. This is taken care of by the small uniform sample of size $\tilde O(\mu d)$.

Our main result is the following theorem, where $\nnz(A)$ denotes the number of non-zero entries in $A$ or in a data stream it corresponds to the number of updates, 
\[  f_w(A x)=\sum\limits_{i\in [n]} w_i\cdot \ln\left(1+\exp({a_i x})\right) \] denotes the weighted logistic loss function, and $f(Ax)$ is the unweighted case where $w$ is the all $1$s vector. It also assumes that the data is $\mu$-complex for a small value $\mu$ meaning that $\mu(A)\leq \mu$ as in \cite{MunteanuSSW18}, see Section \ref{sec:prelim} for a formal definition:
\begin{thm}\label{mainthm2}
Let $A \in \mathbb{R}^{n \times d}$ be a $\mu$-complex matrix for bounded $\mu< \infty $.
Then there is a distribution over sketching matrices $S\in \mathbb{R}^{r \times n}$ with
$r = \poly\left(\mu d \log(n)\right)$, and a corresponding weight vector $w \in \mathbb{R}^r$, for which $B=SA$ can be computed in $O(\nnz(A))$ time over a turnstile data stream and for which if $x'$ is the minimizer to $\min_x f_w(Bx)$, then with constant probability it holds that 
\begin{align*}
f(Ax')\leq {O}(\log n) \min_{x\in \mathbb{R}^d} f(Ax).
\end{align*}
Further, there is a convex function $f_{w, c}$ such that for the minimizer $x''$ to $\min_x f_{w, c}(Bx)$ it holds that 
\begin{align*}
f(Ax'')\leq {O}(1) \min_{x\in \mathbb{R}^d} f(Ax)
\end{align*}
with constant probability.
\end{thm}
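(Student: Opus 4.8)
The plan is to analyze the two sketching components—the multi-level hashed subsampling blocks $S_0,\dots,S_{h_{\max}}$ and the uniform row sample $T$—separately, and then argue that their concatenation preserves the logistic loss of any fixed vector up to a bounded factor. The first step is to fix an arbitrary $x\in\mathbb{R}^d$ (we will ultimately instantiate it at the optimum $x^*$ of the original problem and at the sketch optimum $x'$) and partition the coordinates $i\in[n]$ into \emph{weight classes} according to the magnitude of the per-row loss $\ln(1+\exp(a_i x))$, or equivalently according to $|a_i x|$ on the positive side. Coordinates contributing near-constant loss $\Theta(\ln 2)$ form one group handled entirely by $T$; by a Chernoff/Bernstein bound on the uniform sample of size $\tilde O(\mu d)$, their total contribution is preserved up to $(1\pm\varepsilon)$ for all $x$ in a net over the $d$-dimensional solution space, where the $\mu$-complexity bound is what makes the number of such ``small'' rows controllable relative to the total loss and gives the union-bound budget. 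For the heavy coordinates, the key observation (as in the $\ell_1$ reduction alluded to in the introduction) is that $\ln(1+\exp(v))\approx \max(v,0)$ up to an additive $\ln 2$, so on the positive side the loss behaves like a one-sided $\ell_1$ norm, and we must show the hashed-subsampled sketch preserves one-sided $\ell_1$ mass.

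The core of the argument is the level-by-level analysis of the \texttt{CountMin}-style blocks. For a weight class of size $s$, I would show that on the level $h$ with $b^h\approx s/\poly(\mu d\log n)$ the expected number of surviving rows is $\poly(\mu d\log n)$, which is small enough that, when hashed into $N=\poly(\mu d\log n)$ buckets, with good probability the heavy rows of that class land in distinct buckets and dominate their buckets; hence summing entries in a bucket and rescaling by $b^h$ recovers $\Theta(1)$ times the class's contribution. The contraction direction (not overshooting the loss of the full data) follows because the expected contribution per level is exactly right by linearity of expectation over the subsampling, and a one-sided Markov/Bernstein bound controls the upward deviation; the dilation direction (not undershooting the optimum's loss) is where the asymmetry of the logistic loss bites. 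This is precisely why \texttt{CountMin} (summing without random signs) replaces \texttt{Count}-sketch: with signed sums, a positive-loss row and a (near-zero-loss) negative-$v$ row colliding could cancel or, worse, flip the sign of the effective argument, and since $\ell(x)/\ell(-x)$ is unbounded this is uncontrollable; summing unsigned entries means collisions can only increase $a_i x$ within a bucket, so a bucket's loss lower-bounds the sum of its rows' losses up to the additive $N\ln 2$ slack absorbed by the uniform sample. One then takes a union bound over an $\varepsilon$-net of candidate solutions $x$, using that the sketch dimension $r=\poly(\mu d\log n)$ gives enough room; a standard net-to-all-vectors argument (bounding how much the loss can change off the net, again via the $\mu$-complexity controlling $\|Ax\|$ in terms of the loss) upgrades this to all $x$, and in particular to $x'$, yielding $f(Ax')\le O(\log n)\, f(Ax^*)$—the $\log n$ factor coming from summing the $h_{\max}=O(\log_b n)$ levels' individual $O(1)$ guarantees, or from the worst-case number of levels a single heavy class can spread across.

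For the $O(1)$-approximation refinement, the idea is that the extra $\log n$ is an artifact of not knowing which level carries a given weight class, so we replace $f_w$ by a modified convex objective $f_{w,c}$ that, roughly, takes a maximum (or soft-max / capped sum) over levels rather than a plain sum, so that each weight class is counted once at its ``correct'' level rather than $O(\log n)$ times; one must check that $f_{w,c}$ is still convex (a max of convex functions, or a suitably clipped sum, is convex) so that the sketch problem remains efficiently solvable, and that the same net argument goes through with the level-counting overhead removed. The main obstacle I anticipate is the dilation analysis for the hashed blocks under the logistic asymmetry: showing that with high probability \emph{no} candidate solution $x$ in the net has its heavy rows destroyed by adversarial collisions with many small-but-nonzero-loss rows—controlling the accumulated $\ln 2$ additive terms across $N$ buckets and $h_{\max}$ levels against the (possibly small) optimal loss—and doing so with a union bound over the net requires the $\mu$-complexity parameter to simultaneously bound the number of small rows, the norm of $Ax$ on the net, and the net's cardinality, which is the delicate bookkeeping that makes $\mu$ enter polynomially in $r$.
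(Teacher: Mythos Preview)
Your high-level decomposition---hashed multi-level blocks for the heavy positive part plus a uniform sample for the near-constant part, analyzed via weight classes and a net---matches the paper. However, there is a genuine gap in your lower-bound argument and a misdescription of the $O(1)$ mechanism.

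Your key claim for the lower bound on the sketch is that ``summing unsigned entries means collisions can only increase $a_i x$ within a bucket, so a bucket's loss lower-bounds the sum of its rows' losses.'' This is false: the entries $a_ix$ are signed, and a heavy positive entry can be annihilated by negative entries hashed to the same bucket. Dropping the random signs of the \texttt{Count}-sketch prevents a single heavy $z_p$ from being \emph{flipped} to $-z_p$ (the unbounded-ratio catastrophe), but it does not make bucket sums monotone. Consequently the bound $f_w(Bx)\gtrsim f(Ax)$ for all $x$ cannot be read off from positivity; the paper has to work for it. For each important weight class it exhibits representatives that, at the appropriate level $h(q)$, land in buckets whose residual $\ell_1$ mass is at most $O(\varepsilon)|z_p|$, using a leverage-score count on large coordinates (Lemma~\ref{Lem3.5}) together with Bernstein bounds on the colliding mass (Lemmas~\ref{Lem3.2}, \ref{Lem3.9}, \ref{Lem3.10}); only then does the net argument go through. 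Relatedly, your terminology is inverted: the paper calls this for-all-$x$ lower bound the \emph{contraction} (high-probability per net point, then union bound), while the \emph{dilation} is the upper bound $\mathbb E[G^+(Sz)]\le h_{\max}G^+(z)$, which is needed only at the fixed optimum $x^*$ via Markov and is where the $O(\log n)$ factor actually originates.

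For the $O(1)$ refinement, the paper does \emph{not} take a max or soft-max over levels. It keeps the sum over all levels but, within each level, retains only the $K$ largest bucket values (a Ky-Fan-type truncation, hence convex). The improved dilation bound (Theorem~\ref{Thm3.2}) then shows that for each weight class $q$ the expected contribution at levels far below $h(q)$ is $O(h_{\max}^{-1})$ after truncation, at levels far above $h(q)$ it is zero with high probability (Lemma~\ref{LemE'}), and only $O(1)$ levels near $h(q)$ contribute $O(1)$ each---so the level sum collapses from $O(h_{\max})$ to $O(1)$. A max over levels would not work directly, since different weight classes are well-estimated at different levels and cannot be separated in the sketch.
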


The first item is a sketch-and-solve result in the sense that first, the data is sketched and then the sketch is put into a standard solver for weighted logistic regression. The output is guaranteed to be an $O(\log n)$ approximation. The second item requires a stronger modification which can be handled easily for instance with a subgradient based solver for convex functions. The individual loss for $f_{w,c}$ remains the original logistic loss as in $f_w$ for each point. However for a fixed $x$ occurring in the optimization, the loss and gradient are evaluated only on the $K$ largest entries on each level of the sketch $Bx$ (except for the uniform sample $T$), for a suitable $K<N$. This preserves the convexity of the problem, and guarantees a constant approximation. The details are given in Sections \ref{sec:dilation}, and \ref{sec:logreg}.\\

\textbf{Overview of the analysis.}
The rest of the paper is dedicated to proving Theorem \ref{mainthm2}. We first show that the logistic loss function can be split into two parts $f(Ax)\approx G^+(Ax)+f((Ax)^-)$, which can be handled separately while losing only an approximation factor of two; see Section \ref{sec:prelim}.

The first part is $G^+(y):=\sum_{y_i \geq 0}y_i$, the sum of all positive entries which can be approximated by the aforementioned collection of sketches $S_h$.
Here we show that with high probability no solution becomes much cheaper with respect to the objective function and that with constant probability, the cost of some good solution does not become too much larger, which can be bounded by a factor of at most $O(\log n)$ or $O(1)$ depending on which of our two algorithms we use.
We prove this in Section \ref{secapprox}. First, we bound the contraction. To do so we define \emph{weight-classes} of similar loss. For weight classes with a small number of members, a leverage-score argument yields that there cannot be too many influential entries. For larger weight classes, we show that there exists a small subset of influential entries that on some subsampling level do not collide with any other influential entry when hashing into buckets, and thus represent their weight class well. This concludes the handling of the so-called \emph{heavy-hitters}, see Section \ref{sec:contraction}. Those arguments hold with very high probability, and so we can union bound over a net and relate the remaining points to their closest point in the net. This yields the contraction bound for the entire solution space, see Section \ref{sec:netargument}. Although the high-level outline is similar to \cite{ClarksonW15}, several non-trivial adaptations are necessary to deal with the assymmetric $G^+$ function that is not a norm and has zero growth on the negative domain.
The $O(\log n)$ dilation bound follows by a calculation of the expected value on each level and summing over $h_{\max}=O(\log n)$ levels. The $O(1)$ bound requires the aforementioned clipping of small contributions on each level. Each weight class $q$ makes a main contribution on some level $h(q)$. The argument is now that it can have a significant impact only on levels in a small region of size $k=O(1)$ around $h(q)\pm k$. Further,  with high probability for $h>h(q)+ k$ there will be no element of the same weight class, so that the contribution to the expectation is zero, and for $h<h(q)- k$ the contribution can be bounded by $O(h_{\max}^{-1})$, so that for all three cases the expected contribution is at most $O(1)$ after summing over all levels.

The second part is $f^-:=f((Ax)^-)$ which maps any misclassified point to $\ell(0)=\ln(2)$ and the remaining points to the usual logistic loss of a point, i.e., $\ell(a_ix)=\log(1+\exp(a_ix))$.
Here we prove that for $\mu$-complex data sets the worst case contribution of any point can be bounded by roughly $\mu/n$ and thus a uniform sample of $\tilde O(\mu)$ can be used to approximate $f^-$ well. This will be done via the well-known sensitivity framework \cite{LangbergS10} in Section \ref{sectUniform}. 
We put everything together to prove Theorem \ref{mainthm2} in Section \ref{sec:logreg}.
In Section \ref{sec:experiments} our experimental results demonstrate that our sketching techniques are useful and competitive to uniform sampling, SGD, and an adaptive coreset construction. We show in some settings the oblivious sketch performs almost the same or better, but is never much worse. We stress that neither SGD nor the coreset allow the desired turnstile streaming capabilities.
We finally conclude in Section \ref{sec:conclusion}.\\
{Omitted proofs and details can be found in the supplementary material.}


\section{Preliminaries}\label{sec:prelim}

\subsection{Notation}

In logistic regression we are usually given a data matrix $X\in\mathbb{R}^{n\times d}$ and a label vector $L\in \{-1,1\}^n$. For notational brevity and since a data point always appears together with its label, we technically work with a data matrix $A\in\mathbb{R}^{n\times d}$ where each row $a_i$ for $i\in [n] $ is defined as $a_i:=-l_ix_i$. We set $g(v)=\ln(1+\exp({v}))$ for $v \in \mathbb{R}$. Our goal is to find $x \in \mathbb{R}^d$ that minimizes the logistic loss given by
\[  f(A x)=\sum\limits_{i\in [n]} g({a_i x})=\sum\limits_{i\in [n]} \ln\left(1+\exp({a_i x})\right).  \] 

We parameterize our analysis by
\[\mu_A=\sup_{x \in \mathbb{R}^d\setminus\{0\}}\frac{\Vert (A x)^+ \Vert_1}{\Vert( A x)^- \Vert_1}
\]
where for $y \in \mathbb{R}^n$, the vector $y^+$ (resp. $y^-$) denotes the vector with all negative (resp. positive) entries replaced by $0$. This definition of $\mu$ has been introduced before in \cite{MunteanuSSW18} and is a useful parameter to bound the amount of data reduction possible for logistic regression.
In the remainder we assume that $A$ is $\mu$-complex, meaning that $\mu_A\leq \mu$ for some $1\leq \mu \leq \infty$. 
For any vector $y$ we further define $G^+(y):=\sum_{y_i \geq 0}y_i$ to be the sum of all positive entries. Also we define $G(y):=\|y\|_1$. 
Note that by definition of $\mu$ the supremum considers for each $x\in \mathbb{R}^d$, also $-x$. Therefore, it holds for all $x \in \mathbb{R}^d$ that
\[\mu^{-1} \Vert (A x)^- \Vert_1 \leq \Vert (A x)^+ \Vert_1 \leq \mu \Vert (A x)^- \Vert_1.\]
In particular, the property 
$G^+(Ax)=\Vert (A x)^+ \Vert_1 \geq \frac{\Vert (A x)^- \Vert_1}{\mu}$ will often be used.

\subsection{Initial approach}

Our first idea is that we can split $f$ into two parts which can be approximated independently.

\begin{lem}\label{splitlemma}
For  all $x \in \mathbb{R}^d$ it holds that 
\[ f(A x)~\geq~ \frac{1}{2}\left(f\left((A x)^-\right)+G^+(A x)\right) \]
and
\[
f(A x) ~\leq~  f\left((A x)^-\right)+G^+(A x) . \]
\end{lem}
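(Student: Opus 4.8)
The plan is to prove both inequalities \emph{coordinatewise}, i.e. to establish them for each summand $g(a_i x)=\ln(1+\exp(a_i x))$ separately and then sum over $i\in[n]$. Write $v:=a_i x$ for the $i$-th coordinate of $Ax$. The contribution of coordinate $i$ to $f(Ax)$ is $g(v)$; its contribution to $G^+(Ax)$ is $v$ if $v\ge 0$ and $0$ otherwise; and its contribution to $f((Ax)^-)$ is $g(v)$ if $v<0$ and $g(0)=\ln 2$ if $v\ge 0$ (since $(Ax)^-$ replaces positive entries by $0$). Hence it suffices to verify, for every $v\in\mathbb{R}$, the scalar inequalities obtained by matching these three contributions; additivity of all three functionals over coordinates then gives the lemma.

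For $v<0$ both scalar inequalities are immediate: the upper bound reads $g(v)\le g(v)+0$, and the lower bound reads $\tfrac12\bigl(g(v)+0\bigr)\le g(v)$, which holds because $g\ge 0$. For $v\ge 0$ we must show $\tfrac12(\ln 2+v)\le g(v)\le \ln 2+v$. The upper bound follows from $1+e^v\le 2e^v$ (valid since $e^v\ge 1$), giving $g(v)=\ln(1+e^v)\le\ln(2e^v)=\ln 2+v$. For the lower bound, the key observation is that $g(v)\ge\max\{v,\ln 2\}$ on $v\ge 0$: indeed $g(v)=\ln(1+e^v)\ge\ln(e^v)=v$ and also $g(v)=\ln(1+e^v)\ge\ln 2$. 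Since the maximum of two nonnegative numbers is at least their average, $g(v)\ge\tfrac12(v+\ln 2)$.

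Summing the scalar inequalities over all $i\in[n]$ — positive coordinates feeding $G^+(Ax)$ and the constant $\ln 2$ part of $f((Ax)^-)$, negative coordinates feeding the $g$-part of $f((Ax)^-)$ — reassembles exactly the two claimed bounds. There is essentially no obstacle here; the only place requiring a moment of care is the positive-coordinate lower bound, where one cannot use $g(v)\ge v$ alone (it is too weak near $v=0$) nor $g(v)\ge\ln 2$ alone (too weak for large $v$), but must combine them via $g(v)\ge\max\{v,\ln 2\}\ge\tfrac{v+\ln 2}{2}$.
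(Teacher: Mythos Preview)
Your proof is correct and follows essentially the same coordinatewise strategy as the paper. The only minor difference is in how the scalar lower bound for $v\ge 0$ is obtained: the paper integrates the derivative bound $g'(y)\ge\tfrac12$ on $[0,v]$ to get the slightly sharper $g(v)\ge \ln 2+\tfrac{v}{2}$, whereas you use $g(v)\ge\max\{v,\ln 2\}\ge\tfrac12(v+\ln 2)$; both suffice for the lemma.
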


{
\begin{proof}
Let $v \in \mathbb{R}_{\geq 0}$. Then it holds that $g(v)=g(0)+\int_{0}^{v}g'(y)dy=\ln(2)+\int_{0}^{v}g'(y)dy$. Note that $g'(y)=\frac{\exp(y)}{1+\exp(y)}$, and thus for any $y \in \mathbb{R}$ we have $g'(y) \leq 1$ and for any $y \in [0, \infty )$ we have $g'(y)\geq \frac{\exp(y)}{2\exp(y)}=\frac{1}{2} $. We conclude that
\[ g(v)=g(0)+\int_{0}^{v}g'(y)dy \geq g(0)+\int_{0}^{v}\frac{1}{2} dy=g(0)+\frac{1}{2}v  \]
and
\[ g(v)=g(0)+\int_{0}^{v}g'(y)dy \leq g(0)+\int_{0}^{v}1 dy=g(0)+v . \]
Recall that in $(A x)^-$ each coordinate $a_ix>0$ is replaced by zero. Thus, if $a_ix>0$ then $g(a_ix)=g(0)+\int_{0}^{a_ix}g'(y)dy=g((Ax)^-_i)+\int_{0}^{a_ix}g'(y)dy$.
Hence, it holds that 
\begin{align*}
f(A x)&=\sum_{a_ix < 0}g(a_ix) ~+~ \sum_{a_ix \geq 0}g(a_ix) \\
&\leq \sum_{a_ix < 0}g(a_ix)  ~+~  \sum_{a_ix \geq 0}g(0)+ a_ix\\
&=\sum_{a_ix < 0}g(a_ix)  ~+~  \sum_{a_ix \geq 0}g(0) ~+~ \sum_{a_ix \geq 0}a_ix\\
&=  f((A x)^-)+G^+(A x) 
\end{align*}
and similarly
\begin{align*}
f(A x)&=f((A x)^-)+\sum_{a_i x>0}\int_{0}^{a_i x}g'(y)dy \\
&\geq  f((A x)^-)+\frac{1}{2} G^+(A x)\\
&\geq \frac{1}{2}\left(f\left((A x)^-\right)+G^+(A x)\right) . 
\end{align*}
\end{proof}
}

Next we show that for $\min_{x \in \mathbb{R}}f(A x)$, there is a non-trivial lower bound that will later be used to show that $ f((A x)^-)$ can be approximated well enough.

\begin{lem}\label{minvalue}
For all $x\in \mathbb{R}^d$ it holds that 
\[f(A x) \geq \frac{n}{2\mu} \left(1+\ln(\mu)\right) = \Omega \left(\frac{n}{\mu}(1+\ln(\mu))\right) .\]
\end{lem}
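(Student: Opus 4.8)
The plan is to lower bound $f(Ax)=\sum_{i}g(a_ix)$ by splitting the coordinates into those with $a_ix\ge 0$ and those with $a_ix<0$, handling the positive part via the $\mu$‑complexity constraint $\Vert(Ax)^+\Vert_1\ge\mu^{-1}\Vert(Ax)^-\Vert_1$ and the negative part via convexity, and then optimizing the resulting trade‑off. First I would dispose of the trivial cases: if $\mu=\infty$ the right‑hand side is $0$, and if every $a_ix\ge 0$ then $f(Ax)\ge n\ln 2\ge\frac{n}{2\mu}(1+\ln\mu)$, since $(1+\ln\mu)/(2\mu)$ is decreasing in $\mu\ge 1$ with value $\tfrac12<\ln 2$ at $\mu=1$. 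So assume $1\le\mu<\infty$ and set $n_-=|\{i:a_ix<0\}|\ge 1$, $n_+=n-n_-$, $t_i=-a_ix$, and $S=\Vert(Ax)^-\Vert_1=\sum_{i:a_ix<0}t_i$.

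For the positive coordinates, I would use $g(v)\ge g(0)+v/2=\ln 2+v/2$ for $v\ge 0$ (which follows from $g'(y)\ge\tfrac12$ on $[0,\infty)$, exactly as in the proof of Lemma~\ref{splitlemma}) together with $\Vert(Ax)^+\Vert_1\ge\mu^{-1}S$, giving $\sum_{a_ix\ge 0}g(a_ix)\ge n_+\ln 2+\tfrac12\Vert(Ax)^+\Vert_1\ge n_+\ln 2+\tfrac{S}{2\mu}$. For the negative coordinates, $g(a_ix)=\ln(1+e^{-t_i})$ and the map $t\mapsto\ln(1+e^{-t})$ is convex (its second derivative is $e^t/(e^t+1)^2>0$), so Jensen's inequality yields $\sum_{a_ix<0}g(a_ix)\ge n_-\ln\bigl(1+e^{-S/n_-}\bigr)$. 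Adding the two bounds, with $s:=S/n_-\ge 0$,
\[
f(Ax)\ \ge\ n_+\ln 2\ +\ n_-\Bigl[\tfrac{s}{2\mu}+\ln(1+e^{-s})\Bigr].
\]

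The heart of the argument is to minimize $\psi(s):=\tfrac{s}{2\mu}+\ln(1+e^{-s})$ over $s\ge 0$. Setting $\psi'(s)=\tfrac1{2\mu}-\tfrac1{e^s+1}=0$ gives $s^\ast=\ln(2\mu-1)\ge 0$ (this is where $\mu\ge 1$ is used), and a short computation identifies the minimum value as the natural‑log binary entropy $H(p)=-p\ln p-(1-p)\ln(1-p)$ evaluated at $p=\tfrac1{2\mu}$. I would then bound $H(\tfrac1{2\mu})$ from below: $-\,p\ln p=\tfrac{\ln(2\mu)}{2\mu}=\tfrac{\ln 2+\ln\mu}{2\mu}$, and $-(1-p)\ln(1-p)\ge(1-p)p=\tfrac1{2\mu}(1-\tfrac1{2\mu})$ by $-\ln(1-y)\ge y$; summing and using $\ln 2\ge\tfrac1{2\mu}$ for $\mu\ge 1$ gives $H(\tfrac1{2\mu})\ge\tfrac{1+\ln\mu}{2\mu}$. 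Combined with $n_+\ln 2\ge n_+\tfrac{1+\ln\mu}{2\mu}$, this gives $f(Ax)\ge(n_++n_-)\tfrac{1+\ln\mu}{2\mu}=\tfrac{n}{2\mu}(1+\ln\mu)$, as claimed.

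The main obstacle is making the trade‑off tight enough to reach the stated constant: a per‑coordinate bound, or routing everything through the factor‑$\tfrac12$ estimate of Lemma~\ref{splitlemma}, loses a constant and in fact already fails at $\mu=1$ (it would only give $\tfrac{n\ln 2}{2}<\tfrac n2$), so it is essential to aggregate the logistic contribution of the negative coordinates via Jensen \emph{before} trading it against the linear term $S/(2\mu)$ coming from the $\mu$‑constraint. The only other slightly delicate point is the elementary estimate $H(\tfrac1{2\mu})\ge(1+\ln\mu)/(2\mu)$; everything else is routine calculus.
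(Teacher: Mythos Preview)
Your proof is correct and follows the same overall strategy as the paper: bound the contribution of the positive coordinates linearly via the $\mu$-constraint, aggregate the negative coordinates by Jensen, and optimize the resulting one-variable trade-off. The difference is in how the negative part is handled. The paper first relaxes pointwise, using $g(v)\ge \tfrac12 e^{v}$ for $v\le 0$ (together with $g(v)\ge \tfrac12(1+v)$ for $v\ge 0$), obtains $f(z)\ge \tfrac12\bigl(\sum_i e^{\min\{z_i,0\}}+G^+(z)\bigr)$, applies Jensen to $\exp$, and minimizes $\tfrac12\bigl(ne^{-y}+\tfrac{ny}{\mu}\bigr)$, which lands directly on $\tfrac{n}{2\mu}(1+\ln\mu)$ at $y=\ln\mu$. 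You instead apply Jensen to $t\mapsto\ln(1+e^{-t})$ without any preliminary relaxation, which is tighter in principle but yields the less explicit minimum $H(1/(2\mu))$ and then requires the extra elementary estimate $H(1/(2\mu))\ge (1+\ln\mu)/(2\mu)$ to match the stated constant. So the paper's route is slightly more streamlined (the optimization delivers the target constant on the nose), while yours avoids the $g(v)\ge e^v/2$ step at the cost of the entropy inequality; both reach the same bound. One small remark: your comment that ``routing everything through the factor-$\tfrac12$ estimate of Lemma~\ref{splitlemma}'' would fail at $\mu=1$ overshoots a bit, since the paper does essentially route through that estimate (combined with $g(v)\ge e^v/2$ and Jensen on $\exp$) and still recovers $n/2$ at $\mu=1$; what would fail is a purely per-coordinate bound without the convexity aggregation.
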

{
\begin{proof}
For any $w \geq 1$ it holds that $ \ln(w)=\int_{1}^{w}\frac{1}{y}dy$.
Thus for $v\leq 0$ we have $g(v)=\ln(1+e^{v})=\int_{1}^{1+e^{v}}\frac{1}{y}dy\geq \frac{ e^v}{2}$ since for $1 \leq y \leq e^{v}+1 \leq 2$ we have $\frac{1}{y} \in [\frac{1}{2}, 1]$.
Let $z=A x$.
Using this fact and Lemma \ref{splitlemma} we get $f(z) \geq \frac{1}{2}(\sum_{i} \exp(\min\{z_i, 0\})+G^+(z))$.
Since $\exp(v)$ is convex, Jensen's inequality implies $\sum_{i} \exp(\min\{z_i, 0\})=n\sum_{i} \frac{1}{n} \exp(\min\{z_i, 0\}) \geq  n\exp(\frac{1}{n}\sum_{i} \min\{z_i, 0\})$.
Using this argument we get for $y=\frac{\Vert z^-\Vert_1}{n}$ that $ \sum_{i} \exp(\min\{z_i, 0\})\geq n \exp(-y)$.
Recall that $G^+(z)\geq \frac{yn}{\mu}$ holds by definition of $\mu$.

Using Lemma \ref{splitlemma} we conclude that $f(z)\geq \frac{1}{2}(n\exp(-y)+\frac{yn}{\mu})$.
The function $(n\exp(-y)+\frac{yn}{\mu})$ is minimized over $y$ if its first derivative is zero, i.e., if
\[ n\exp(-y)=\frac{n}{\mu} \]
which is equivalent to $y=\ln(\mu )$.
Hence $f(z)\geq \frac{1}{2} \left(\frac{n}{\mu}+\frac{n \ln(\mu)}{\mu}\right)$. \qedhere
\end{proof}
}

\section{Approximating \texorpdfstring{$G^+(A x)$}{G+(A x)}}\label{secapprox}

Here we focus on approximating $G^+(A x)=\sum_{a_i x>0}a_i x$.
We develop a sketching method similar to the approach of \cite{ClarksonW15}.
This gives us a new matrix $A'=S A$, referred to as the sketch, for which we will show that with high probability it holds for all $x \in \mathbb{R}^d$, that $G^+(A' x)\geq (1-\varepsilon) G^+(A x)$ for $\varepsilon>0$, and we also have $\mathbb{E} (G^+(A' x))\leq C G^+(A x)$ for some constant $C>1$.
We show the following result:
\begin{thm}\label{ThmG+}
Given $\varepsilon>0$, and $\delta>0$ we set 
$r={O}\left( d^5 \left(\frac{\mu}{\varepsilon}\right)^{7}\delta^{-2}\ln^4\left(\frac{n \mu}{\delta \varepsilon}\right) \right)$.
Then there is a random matrix $S \in \mathbb{R}^{r \times n}$ such that for $A'=SA$ and a convex function $G^+_c$, it holds that $G^+_c(A' x)\geq (1-\varepsilon) G^+(A x)$ and $\mathbb{E} (G^+_c(A' x))\leq C G^+(A x)$ for some constant $C>1$ and for all $x \in \mathbb{R}^d$. The failure probability of this event is at most $\delta$.
\end{thm}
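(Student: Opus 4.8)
The plan is to instantiate the multi‑level hashing sketch sketched in the introduction and analyze its contraction and dilation behaviour on $G^+$ separately. Concretely, $S$ consists of blocks $S_0,\dots,S_{h_{\max}}$ with $h_{\max}=O(\log_b n)$ and $N$ rows per block (so $r=(h_{\max}+1)N$); each row $a_i$ of $A$ is independently assigned a level $H_i=h$ with probability $\propto b^{-h}$, then a uniformly random bucket $\beta_i\in[N]$ on level $h$, and added there with weight $b^h$. For $B=SA$ the sketched objective is $G^+_w(Bx)=\sum_{h}b^h\sum_{j\in[N]}\big((Bx)_{h,j}\big)^+$, and the clipped $G^+_c$ retains, on each level, only the $K$ largest of the $N$ terms. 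Since the dilation claim is only in expectation, it needs no net; the contraction claim must hold for all $x$ simultaneously and is obtained by a union bound over an $\varepsilon$-net. I would do the dilation first, then the contraction, then the net.

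\textbf{Dilation.} Fix $x$. Using $(\sum_i y_i)^+\le\sum_i y_i^+$ coordinate‑wise on each bucket, $G^+_w(Bx)\le\sum_h b^h\sum_{i:\,H_i=h}(a_ix)^+$; taking expectations and using $\Pr[H_i=h]\propto b^{-h}$ yields $\mathbb E[G^+_w(Bx)]\le O(h_{\max})\,G^+(Ax)=O(\log n)\,G^+(Ax)$, which already proves a version of the theorem with $C=O(\log n)$. For the genuine $O(1)$ bound I would partition the positive coordinates into weight classes $Q_q=\{i:\ a_ix\in[2^q,2^{q+1})\}$ with mass $m_q=\sum_{i\in Q_q}a_ix$, and let $h(q)$ be the level where $|Q_q|\,b^{-h(q)}=\Theta(K)$. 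A Chernoff bound shows that on levels $h>h(q)+O(1)$ essentially no element of $Q_q$ survives subsampling (even after charging the $O(\log n)$ classes), so after clipping their expected contribution is $0$; on levels $h<h(q)-O(1)$ there are many survivors but the top‑$K$ clipping discards all but an $O(1/h_{\max})$ fraction of the expected weighted mass of $Q_q$; and the $O(1)$ levels near $h(q)$ contribute $O(m_q)$ each. Summing over the $h_{\max}$ levels and then over the $O(\log n)$ weight classes gives $\mathbb E[G^+_c(Bx)]\le O(1)\,G^+(Ax)$.

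\textbf{Contraction.} Fix $x$, scaled so that $G^+(Ax)=1$. Inside $Q_q$ call a row \emph{influential} if $a_ix$ is among the $\poly(d\mu/\varepsilon)\ln(n\mu/(\delta\varepsilon))$ largest values of the class. An $\ell_1$-sensitivity / leverage-score argument bounds the number of rows that can be made influential (the ``heavy hitters'' range over a $\poly(d)$-sized object), so the non‑influential rows of each class carry in aggregate at most an $\varepsilon$-fraction of $m_q$; it therefore suffices to preserve the influential rows. For a class with few influential members they almost surely all survive on the high‑$N$ levels and, being at most $N^{1-\Omega(1)}$ in number, hash to pairwise distinct buckets with probability $1-1/\poly(n)$; since such a bucket holds one nonnegative value $a_ix$ plus other mass that is small relative to $a_ix$ when $N$ is polynomially large and the weight classes are geometrically separated, its weighted positive part is $\ge(1-\varepsilon)\,a_ix$ and it is retained by any top‑$K$ clipping. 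For a populous class the same statement holds on level $h(q)$, where only $\Theta(K)\ll N$ survivors appear. Combining, $G^+_c(Bx)\ge(1-\varepsilon)\sum_q m_q=(1-\varepsilon)G^+(Ax)$ with probability $1-1/\poly(n)$. Finally I would promote this to all $x$: the relevant ``direction'' lives in a $\poly(d\mu/\varepsilon)$-complex set, so an $\varepsilon$-net of size $n^{\poly(d\mu/\varepsilon)}$ suffices; union-bounding the $1/\poly(n)$ failure probability over the net, and bounding $|G^+_c(Bx)-G^+_c(Bx')|$ between $x$ and its nearest net point $x'$ via Lipschitz-continuity of $G^+$ together with the already-proved Markov-type upper bound on $S$, gives the claim for all $x$. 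Choosing $r=O\big(d^5(\mu/\varepsilon)^7\delta^{-2}\ln^4(n\mu/(\delta\varepsilon))\big)$ makes the net cardinality and the per-event failure probability combine to at most $\delta$.

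\textbf{Main obstacle.} The crux is the contraction step, and specifically controlling how the \emph{negative} mass of $Ax$ interacts with an isolated influential positive row in its bucket: because $G^+$ sees only positive bucket entries, a single negative contribution landing on a positive heavy hitter can cancel it, which is precisely why \texttt{CountMin}-style summation (no random signs) rather than \texttt{Count}-sketch is forced, and why the analysis must show not just that influential rows avoid one another but that the \emph{entire} remaining mass in their bucket is a small fraction of their value. Making this quantitative over a net is what pushes $N$, and hence $r$, to the large polynomial in $d,\mu,1/\varepsilon$ stated in the theorem. A secondary difficulty is making the weight‑class/level bookkeeping in the $O(1)$ dilation bound hold uniformly across all $O(\log n)$ weight classes at once, which requires the high-probability ``no survivors above $h(q)+O(1)$'' estimates to be strong enough to survive a union bound.
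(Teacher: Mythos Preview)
Your plan matches the paper's proof closely: the same multi-level \texttt{CountMin} sketch, the same weight-class decomposition, $\ell_1$-leverage-score heavy-hitter isolation together with Bernstein-type bucket-noise bounds for contraction (then extended via a net), and the same three-regime level analysis for the $O(1)$ dilation of the clipped $G^+_c$. The one structural nuance you gloss over is that the paper first conditions on an $x$-\emph{independent} event $\mathcal{E}$ (separation of the top-leverage rows into distinct buckets, holding with probability $1-\delta$) and only then proves the per-$x$ contraction with probability $e^{-\Omega(m\varepsilon^2)}$ small enough to survive the net union bound; your write-up merges these two layers, but keeping them separate is what makes the $\delta$-dependence and the net size decouple correctly.
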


\subsection{The sketching algorithm}

The idea is to hash the rows of $A$ uniformly into buckets.
The rows that are assigned to the same bucket are added to obtain a row of $A'$ which can also be written as $A'=SA$ for a suitable matrix $S$.
To avoid that for a given $z=Ax$, there are too many entries of $z$ that cancel with each other, we assign a level to each bucket.
The level of a bucket determines how many coordinates are assigned to it (in expectation).
Buckets with fewer coordinates are given a larger weight.
In this way, large entries of $Ax$ are preserved in buckets with many coordinates, up to a small error, while the contribution of many small entries of $Ax$ is preserved by buckets with few coordinates, but high weights.

More precisely, the sketching algorithm is defined as follows: we define $N$ to be the number of buckets in each level and $h_{\max}$ to be the number of levels.
Let $b$ be a branching parameter that determines how the (expected) number of coordinates changes between different levels.

Then each coordinate $p \in [n]$ is hashed uniformly at random to bucket $g_p \in [N]$ at level $h_p \in [h_{\max}]$, where we set $h_p=h$ with probability $\frac{1}{\beta b^h}$, for $0 \leq h \leq h_{\max}=\log_b(\frac{n}{m})$ and some $b>2$ and $\beta=\frac{b-b^{-h_{\max}}}{b-1}$.
The weight of $z_p$ is given by $w_p=b^{h_p}\beta$.
The sketching matrix is given by $S \in \mathbb{R}^{h_{\max} N \times n}$, where $(S)_{jp}=b^{h_p}\beta$ if $j=g_p+h_p N$ and $(S)_{jp}=0$ otherwise.

Assume we are given some error parameter $\varepsilon' \in (0, \frac{1}{3})$ and set $\varepsilon=\frac{\varepsilon'}{\mu'}$, where $\mu'=\mu+1$.
Then we have $ G^+(z) \geq \frac{G(z^-)}{\mu}=\frac{G(z)-G(z^+)}{\mu}$, which is equivalent to $\frac{(\mu +1)G^+(z)}{\mu}=G^+(z)+\frac{G^+(z)}{\mu}\geq \frac{G(z)-G(z^+)}{\mu}  $.
Multiplying by $\mu$ gives us $ G^+(z) \geq \frac{G(z)}{\mu+1}= \frac{G(z)}{\mu'}$.
Let $\delta<1$ be a failure probability.
Let $m$ be a parameter which determines whether a set of coordinates is considered large.

\subsection{Outline of the analysis}

Instead of explaining how the sketch is applied to $A$, we will explain how the sketch is applied to $z:=Ax$ for a fixed $x$. Note that $(SA)x=S(Ax)$.
We assume without loss of generality that $G(z)= \Vert z \Vert_1=1 $.
This can be done since for any $\lambda>0$, we have $G(S \lambda z)=\lambda G(S  z)$ and $G^+(S \lambda z)=\lambda G^+(S  z)$.

We split the entries of $z$ into weight classes and derive bounds for the contribution of each individual weight class.
The goal is to show that for each $x \in \mathbb{R}^d$ the entries of $z$ that can be large are the same, and for the remaining entries we can find a set of representatives which are in buckets of appropriate weights and are large in contrast to the remaining entries in their buckets.
Therefore we let $Z=\{z_p ~|~ p\in [n]\}$ be the multiset of values appearing in $z$.
We define weight classes as follows:\\
For $q \in \mathbb{N}$ we set $W_q^+=\{ z_p \in Z ~|~ 2^{-q}< z_p \leq 2^{-q+1} \}$ to be the positive weight class of $q$.
Similarly we define $W_q=\{ z_p \in Z ~|~ 2^{-q}< |z_p| \leq 2^{-q+1} \}$ to be the weight class of $q$.
Since we are also interested in the number of elements in each weight class we define $h(q):=\lfloor \log_b(\frac{|W_q^+|}{\beta m}) \rfloor$ if $|W_q^+|\geq \beta m $ and $h(q)=0$ otherwise.
This way we have $\beta m b^{h(q)} \leq |W_q^+| \leq \beta m b^{h(q)+1} $ and thus $h(q)$ is the largest index such that the expected number of entries from $W_q$ at level $h$ is at least $\beta m$.
Note that the contribution of weight class $W_q$ is at most $2^{-q+1} n$.
Thus we set $q_{\max}=\log_2 (\frac{n}{\varepsilon})$ and will ignore weight classes with $q > q_{\max}$ as their contribution is smaller than $\varepsilon$.

Our first goal will be to show that there exists an event $\mathcal{E}$ with low failure probability (which will be defined later) such that if $\mathcal{E}$ holds then $G^+(SAx)$ gives us a good approximation to $G^+(Ax) $ with very high probability.
More precisely:

\begin{thm}\label{Thm3.1}
If $\mathcal{E}$ holds then we have $G^+( SAx )\geq (1-60\varepsilon')G^+(Ax)$ for any fixed $x \in \mathbb{R}^d$ with failure probability at most $e^{-m\varepsilon^2/2}$.
\end{thm}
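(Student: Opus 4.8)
The plan is to prove the contraction bound $G^+(SAx)\geq (1-60\varepsilon')G^+(Ax)$ by showing that a sufficiently large portion of the positive mass of $z=Ax$ survives the hashing intact. Working under the normalization $G(z)=\|z\|_1=1$, I would partition the positive entries into the weight classes $W_q^+$ for $1\le q\le q_{\max}$, and discard the tail $q>q_{\max}$, which by the choice $q_{\max}=\log_2(n/\varepsilon)$ contributes at most $\varepsilon$ to $G^+(z)$. For each surviving weight class, the point is to identify a set of \emph{representative} coordinates that (i) land on a level $h$ where their expected number is moderate (not too many collisions), and (ii) within their own bucket dominate the absolute sum of all other coordinates routed there, so that when we sum the bucket entry and multiply by the weight $w_p=b^{h_p}\beta$, the representative's contribution is essentially recovered with only a small multiplicative loss. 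The natural level to target for class $W_q^+$ is $h(q)$, which by construction satisfies $\beta m b^{h(q)}\le |W_q^+|\le \beta m b^{h(q)+1}$, so in expectation roughly $\Theta(m)$ elements of $W_q$ reach level $h(q)$.

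The key steps, in order, would be: (1) fix $x$, normalize, and set up the weight-class decomposition and the tail truncation; (2) for each class with $|W_q^+|\ge \beta m$, condition on the level assignments and use a Chernoff bound to argue that the number of elements of $W_q$ at level $h(q)$ is $\Theta(m)$ with probability $1-e^{-\Omega(m\varepsilon^2)}$; (3) among those, use the uniform hashing into $N$ buckets together with a counting/Markov argument to show that a $(1-O(\varepsilon'))$-fraction of them are "isolated enough" — i.e., the total absolute contribution of all \emph{other} coordinates (from every weight class, including negative ones) mapped to the same bucket is at most an $O(\varepsilon')$-fraction of the representative's value — here one bounds the expected cross-contamination in a bucket by $G(z)/N=1/N$ and picks $N$ large enough relative to the per-class scale $2^{-q}$; (4) for classes with $|W_q^+|<\beta m$, fall back on the level-$0$ buckets and a leverage-score / $\ell_1$-Auerbach-basis argument to bound the number of genuinely influential small-weight-class entries, so that again most of their mass is hashed cleanly; (5) define the bad event $\mathcal{E}$ as the complement of all these structural properties holding, absorb its probability into the $\delta$ budget, and on $\mathcal{E}^c$ sum the surviving per-class contributions, tracking the constants to land at $1-60\varepsilon'$; finally (6) observe the claimed failure probability $e^{-m\varepsilon^2/2}$ comes from the Chernoff step after the union bound over the $q_{\max}=\poly$ classes has been folded into $\mathcal{E}$.

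The main obstacle is step (3)–(4): controlling the interference inside a shared bucket. Because the loss is asymmetric and we are forced to use \texttt{CountMin}-style summation (no random signs) rather than \texttt{Count}-sketch, a representative positive entry $z_p\approx 2^{-q}$ sitting in a bucket that also receives many negative entries could be \emph{cancelled} — its bucket value drifting toward zero — which would destroy the contraction. So the argument cannot merely bound variance; it must show that with high probability each representative's bucket has small total \emph{absolute} foreign mass, which requires choosing $N$ polynomially large in $\mu d/\varepsilon$ (hence the $d^5(\mu/\varepsilon)^7$-type bound on $r$) and a careful per-weight-class accounting so that, after summing over all $q_{\max}$ classes and all levels, the accumulated error stays within the $60\varepsilon'$ slack. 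The second delicate point is that this must hold not just for the mass but for enough \emph{representatives} per class that their weighted sum reconstitutes $(1-O(\varepsilon'))|W_q^+|\cdot 2^{-q}$, which is where the lower bound $\beta m b^{h(q)}\le |W_q^+|$ and the Chernoff concentration at scale $m$ are essential, and is also the source of the $e^{-m\varepsilon^2/2}$ failure probability in the statement.
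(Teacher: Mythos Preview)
Your outline is essentially the paper's approach: weight-class decomposition, representatives at level $h(q)$ for the dense classes $Q_1$, a leverage-score/Auerbach argument for the sparse heavy classes $Q_2$, and per-bucket isolation bounds. Two points need correcting, however.

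First, your characterization of $\mathcal{E}$ is off. In the paper $\mathcal{E}$ (there called $\mathcal{E}_1$) is \emph{only} the event that the rows of $A$ with the $N_1$ largest $\ell_1$-leverage scores land in buckets containing no other top-$N_2$ leverage-score row. This event depends on $A$ and on the hashing but \emph{not} on $x$, and it is the only thing absorbed into the $\delta$ budget. The per-class concentration events in your steps (2)--(3) depend on $z=Ax$ and must remain \emph{outside} $\mathcal{E}$; they are precisely what produce the per-$x$ failure probability $e^{-m\varepsilon^2/2}$ after a union bound over the $O(q_{\max})$ important classes. If you fold $x$-dependent events into $\mathcal{E}$ as you propose in (5), the net argument of the next theorem collapses, since $\mathcal{E}$ would then change with $x$ and could not be fixed once before quantifying over the net.

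Second, a ``counting/Markov argument'' in step (3) is not strong enough. Markov on the expected foreign mass $G(z)/N=1/N$, or on the expected number of non-isolated representatives, gives only a polynomially small failure bound. The paper instead applies Bernstein's inequality several times per class: for the number of $W_q^+$ members landing at level $h(q)$, for $|Y_q|$ at that level, for the number of representatives colliding with $Y_q$, and for the absolute foreign mass in each representative's bucket, each step failing with probability at most $e^{-\Omega(m\varepsilon^2)}$. This exponential tail is what lets the union bound over $O(q_{\max})$ classes still land at $e^{-m\varepsilon^2/2}$, and it is essential downstream: that per-$x$ failure must survive a further union bound over a net of size $\exp(d\ln(1+n/(m\varepsilon)))$, which a Markov-based bound cannot do.
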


This will suffice to proceed with a net argument, i.e., we show that there exists a finite set $N \subset \mathbb{R}^d$ such that if we have $G^+( SAx )\geq (1-\varepsilon')G^+(Ax)$ for all $x \in N$ then it holds that $G^+( SAx )\geq (1-4\varepsilon')G^+(Ax)$ for all $x \in \mathbb{R}^d$, and thus we obtain the desired contraction bound.

\begin{thm}\label{netthm}
We have $G^+( SAx )\geq (1-240\varepsilon')G^+(Ax)$ for every $x \in \mathbb{R}^d$ with failure probability at most $2\delta$.
\end{thm}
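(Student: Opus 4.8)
The plan is to upgrade the single-vector contraction bound of Theorem~\ref{Thm3.1}, which holds for a fixed $x$ with failure probability at most $e^{-m\varepsilon^2/2}$, to a uniform bound over all of $\mathbb{R}^d$ via a net argument. First I would reduce to the sphere: since $G^+(SAx)$ and $G^+(Ax)$ are both positively homogeneous of degree $1$ in $x$, the inequality $G^+(SAx) \ge (1-c\varepsilon') G^+(Ax)$ is scale invariant, so it suffices to prove it for all $x$ with $\|Ax\|_1 = 1$ (we may also factor out the kernel of $A$, restricting to the row space). On this set we have $\|x\|$ bounded in terms of the smallest nonzero singular value of $A$, so the relevant parameter domain is a bounded region of dimension at most $d$.

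Next I would build a finite $\gamma$-net $N$ of this bounded region for a sufficiently fine $\gamma = \poly(\varepsilon'/(n\mu d))$; standard volumetric bounds give $|N| \le (1/\gamma)^{O(d)} = \exp(O(d \log(n\mu d/\varepsilon')))$. Applying Theorem~\ref{Thm3.1} together with the definition of event $\mathcal{E}$ and a union bound over $N$, the choice $m = \Theta(\varepsilon^{-2} \cdot d \log(n\mu/(\delta\varepsilon)))$ (which is consistent with the value of $r$ in Theorem~\ref{ThmG+}, recalling $\varepsilon = \varepsilon'/\mu'$) makes the total failure probability of ``$\mathcal{E}$ fails or contraction fails at some net point'' at most $2\delta$, giving $G^+(SAx) \ge (1-\varepsilon')G^+(Ax)$ simultaneously for every $x \in N$ — here I would absorb the constant, i.e. run Theorem~\ref{Thm3.1} with $\varepsilon'$ rescaled by $1/60$ so the net points satisfy the clean $(1-\varepsilon')$ bound, which only changes constants in $r$.

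Then comes the off-net step: given arbitrary $x$ on the sphere, pick the nearest net point $x_0$ and write $x = x_0 + (x - x_0)$ with $\|A(x-x_0)\|_1$ tiny. The key estimate is a Lipschitz-type bound showing $|G^+(SAx) - G^+(SAx_0)|$ and $|G^+(Ax) - G^+(Ax_0)|$ are both at most $O(\varepsilon' \cdot G^+(Ax))$: since $G^+(\cdot)$ is $1$-Lipschitz with respect to $\|\cdot\|_1$, we get $|G^+(Ax)-G^+(Ax_0)| \le \|A(x-x_0)\|_1 \le \gamma'$, and for the sketched side $|G^+(SAx)-G^+(SAx_0)| \le \|S A(x-x_0)\|_1$, which we control by bounding $\|S\|_1$-type operator behavior — but $S$ has entries as large as $b^{h_{\max}}\beta = O(n/m)$, so the crude bound is $\|SAy\|_1 \le (n/m)\,\|Ay\|_1$ up to the hashing structure, still only polynomial in $n$, hence absorbed by taking $\gamma$ small enough. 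Finally, combining $G^+(SAx) \ge G^+(SAx_0) - O(\varepsilon' G^+(Ax)) \ge (1-\varepsilon')G^+(Ax_0) - O(\varepsilon' G^+(Ax)) \ge (1-O(\varepsilon'))G^+(Ax)$, and tracking the accumulated constants from the three sources (the $60$ in Theorem~\ref{Thm3.1}, the net rescaling, and the off-net slack), yields the stated $(1-240\varepsilon')$ bound with failure probability $2\delta$.

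The main obstacle I anticipate is the off-net Lipschitz argument for the sketched function: naively $\|S\|$ is huge because of the large weights $b^{h_{\max}}\beta$, so a careless bound would force $\gamma$ (and hence $\log|N|$) to blow up. The fix is that this only appears inside a logarithm in $r$, so any $\poly(n,\mu,d,1/\varepsilon')$ operator-norm bound on $S$ is harmless; one must just be careful that the union bound over the net is still compatible with the exponent $m\varepsilon^2/2$ in Theorem~\ref{Thm3.1} and with event $\mathcal{E}$ holding — i.e. that $m$ as dictated by $r$ in Theorem~\ref{ThmG+} genuinely dominates $d\log|N|$ plus the $\ln(1/\delta)$ terms. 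A secondary subtlety is ensuring the net is taken in the correct (row-space) coordinates so that $\|x - x_0\|$ small translates to $\|A(x-x_0)\|_1$ small with only a $\poly$ loss, which is handled by the singular-value bound on $A$ restricted to its row space.
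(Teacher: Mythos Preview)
Your proposal is correct and follows essentially the same approach as the paper: reduce by homogeneity to $\|Ax\|_1=1$, take a net of size $\exp(O(d\log(n/(m\varepsilon))))$, union-bound Theorem~\ref{Thm3.1} over the net (using Assumption~\ref{Ass3.1} so that $m\varepsilon^2/2$ dominates $d\ln(1+n/(m\varepsilon))-\ln\delta$), and then do a Lipschitz off-net step using that $G(Se)\le b^{h_{\max}}G(e)$ since $S$ has one nonzero per column of size at most $b^{h_{\max}}\beta$. The paper packages the off-net step as Lemma~\ref{netlem} and nets directly in the $d$-dimensional column space of $A$ (in the variable $z=Ax$), which sidesteps your singular-value concern entirely; it also does not rescale $\varepsilon'$ but simply carries the $60$ from Theorem~\ref{Thm3.1} through the factor $4$ in Lemma~\ref{netlem} to get $240$.
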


Finally we show that in expectation $G^+( SAx )$ is upper bounded by $h_{\max}G^+( Ax )={O}(\log(n))$.
Further we show that there is a convex function $G_c^+$ and an event $\mathcal{E}'$ with low failure probability such that we have $G_c^+( SAx ) \geq  (1-\varepsilon') G^+( Ax )$ with high probability, and in expectation $G_c^+( SAx )$ is upper bounded by $C G^+( Ax )$ for constant $C$.

\begin{thm}\label{Thm3.2}
There is a constant $C>1$ such that if $\mathcal{E}'$ holds then
$\mathbb{E}(G_c^+(Sz)) \leq C G^+(z).$ 
\end{thm}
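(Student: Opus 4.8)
The plan is to analyze the sketch level by level and control the expected contribution of each weight class $W_q^+$ to $\mathbb{E}(G_c^+(Sz))$, where $G_c^+$ is the "clipped" variant that on each level $h$ sums only the $K$ largest bucket values (for a suitable $K<N$) instead of all of them. Fix a weight class $W_q^+$ with $q\le q_{\max}$ and recall that its total contribution to $G^+(z)$ is $\Theta(|W_q^+|\,2^{-q})$, and that the critical level is $h(q)$ defined so that $\beta m b^{h(q)} \le |W_q^+| \le \beta m b^{h(q)+1}$. On level $h$, each element of $W_q^+$ survives subsampling with probability $1/(\beta b^h)$ and is then hashed uniformly into one of $N$ buckets, and a surviving element is reweighted by $b^h\beta$. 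So the expected number of surviving elements of $W_q^+$ on level $h$ is $|W_q^+|/(\beta b^h)$, which is $\Theta(m b^{h(q)-h})$.

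The argument then splits into three regimes in $h$ relative to $h(q)$, exactly as sketched in the overview. First, the "main" region $|h-h(q)| \le k$ for a constant $k=O(1)$: here the expected reweighted mass of $W_q^+$ on level $h$ is $\Theta(2^{-q}|W_q^+|)$, i.e. within a constant of the true contribution, and since there are only $O(1)$ such levels the total is still $O(1)\cdot 2^{-q}|W_q^+|$ — and the clipping only removes mass, so it cannot increase this. Second, the "light" region $h < h(q)-k$: here the expected number of survivors is large, $\Theta(m b^{h(q)-h})\gg m$, so the survivors spread across the $N$ buckets, each bucket receives $\Theta(m b^{h(q)-h}/N)$ of them on average, and after reweighting by $b^h\beta$ a typical bucket holds $\Theta(2^{-q}|W_q^+|/N)$ of this weight class's mass; summing only the top $K$ buckets contributes at most $\Theta(K/N)\cdot 2^{-q}|W_q^+|$ per level, and summing over the $O(h_{\max})$ such levels gives $O((K h_{\max}/N)\,2^{-q}|W_q^+|)$, which is $O(h_{\max}^{-1})\cdot 2^{-q}|W_q^+|$ once we take $N$ large enough relative to $K h_{\max}^2$ (this is where the polynomial factors in $r$ and the choice of $K$ get used); summing that over the $O(h_{\max})$ light levels leaves an $O(1)$ factor. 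Here one must be slightly careful that positive and negative entries of other weight classes landing in the same bucket do not inflate $G_c^+$ — but since $G_c^+$ takes only positive bucket values and we are bounding an expectation from above, we can bound the positive part of a bucket by the sum of positive contributions, which is controlled by $G^+(z)=1$ via the same per-level spreading estimate, conditioned on event $\mathcal{E}'$ that rules out pathological concentration. Third, the "heavy" region $h > h(q)+k$: here the expected number of survivors of $W_q^+$ is $\Theta(m b^{h(q)-h})\ll 1$, so by a Markov/Chernoff bound, on event $\mathcal{E}'$ (which we define to include "no element of $W_q$ survives on any level $h>h(q)+k$", a statement holding with probability $1-\mathrm{poly}(\delta\varepsilon/(n\mu))$ after a union bound over the $O(q_{\max} h_{\max})$ pairs), the contribution is exactly zero.

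Assembling the three regimes, $\mathbb{E}(G_c^+(Sz)\mid \mathcal{E}') \le \sum_{q\le q_{\max}} O(1)\cdot 2^{-q}|W_q^+| + \varepsilon = O(1)\cdot G^+(z)$, where the $+\varepsilon$ absorbs the truncated tail $q>q_{\max}$ using $G(z)=1$ and $G^+(z)\ge G(z)/\mu'$, which after rescaling back gives $\mathbb{E}(G_c^+(Sz)) \le C\,G^+(z)$ for an absolute constant $C>1$. The main obstacle I expect is the light-region bucket-level estimate: bounding the expected \emph{positive part} of a bucket (not its signed sum) uniformly well enough that, after summing top-$K$ over all $O(h_{\max})$ light levels, the accumulated error is a genuine constant and not something like $O(\log n)$. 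This is precisely the place where the asymmetry of the logistic loss bites — one cannot use sign cancellation — so the estimate must rely on the clipping parameter $K$ together with a large bucket count $N$, and on event $\mathcal{E}'$ encoding that no level sees an anomalously heavy bucket. The rest (the main region and the heavy region) is a routine expectation computation plus a Chernoff/Markov union bound defining $\mathcal{E}'$.
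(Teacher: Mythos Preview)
Your proposal is essentially correct and follows the same three-regime decomposition as the paper's proof: a main region of $O(1)$ levels around $h(q)$ contributing $O(1)\|W_q^+\|_1$, a light region $h<h(q)-O(1)$ where clipping to the top $K$ buckets bounds the per-level contribution by $O(h_{\max}^{-1})\|W_q^+\|_1$, and a heavy region $h>h(q)+O(1)$ where $\mathcal{E}'$ guarantees zero contribution.

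The one minor divergence is in the light region: you propose to absorb the ``no anomalous bucket concentration'' event into $\mathcal{E}'$, whereas the paper keeps $\mathcal{E}'$ minimal (only the heavy-region event of Lemma~\ref{LemE'}) and handles light-region concentration \emph{inside} the expectation, via a Bernstein bound giving a failure probability $P_2=O(h_{\max}^{-1})$ which is multiplied by the worst-case contribution $2^{1-q}\cdot O(|W_q^+|)\cdot b^h\beta$. Both routes work; the paper's avoids enlarging $\mathcal{E}'$ with a union bound over all $(q,h)$ pairs. Your stated worry about bounding the positive part of a bucket is exactly what this Bernstein step addresses: one upper-bounds the $W_q^+$-mass in a bucket by (number of $W_q^+$-elements in the bucket)$\times 2^{1-q}$ and controls that occupancy count.
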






\section{Approximating \texorpdfstring{$f((A x)^-) $}{f((A x)-) }}\label{sectUniform}

The following theorem shows that a uniform sample $R \subset \{a_i ~|~ i \in [n]\}$ gives us a good approximation to $f((A x)^-) $.
We also show that in expectation the contribution of $R$ is not too large and thus with constant probability contributes at most $Cf(Ax)$ for some constant factor $C$, by Markov's inequality. Using the sensitivity framework and bounding its relevant parameters (see Section \ref{app:uniform}) we get:

\begin{thm}\label{Thm2.1}
For a uniform sample $R \subset \{a_i ~|~ i \in [n]\}$ of size $k = {O}(\frac{\mu}{\varepsilon^{2}}(d\ln(\mu)+\ln(\frac{1}{\delta_1}))$ we have that with failure probability at most $\delta_1$ that
\[ \sum_{a_i \in R}\frac{n}{k}g(a_i x) \geq f((A x)^-)-3\varepsilon f(A x) \]
for all $x\in \mathbb{R}^d$.
Further $\mathbb{E}(\sum_{a_i \in R}\frac{n}{k}g(a_i x))=f(A x)$.
\end{thm}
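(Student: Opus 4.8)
The plan is to invoke the sensitivity framework of \cite{LangbergS10}, so the main work is to (i) bound the sensitivities of the individual loss functions $g(a_i x)$ with respect to $f((Ax)^-)$, (ii) bound the total sensitivity, and (iii) bound the VC-dimension (or pseudo-dimension) of the associated range space so that a uniform sample of the claimed size yields a coreset for $f((Ax)^-)$. The expectation statement is immediate: since $R$ is a uniform sample of size $k$, each $a_i$ appears with probability $k/n$, so $\mathbb{E}\big(\sum_{a_i\in R}\tfrac nk g(a_ix)\big) = \sum_{i\in[n]} g(a_ix) = f(Ax)$ by linearity.

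For the sensitivity bound, fix $i$ and consider the worst-case relative contribution $s_i = \sup_{x}\frac{g((a_ix)^-)}{f((Ax)^-)}$ (I work with the clipped loss $g((a_ix)^-)=\ln(1+\exp(\min\{a_ix,0\}))$, which lies in $(0,\ln 2]$). The key point is that $\mu$-complexity forces every term to be comparable. From Lemma \ref{minvalue} we have the global lower bound $f(Ax)=\Omega(\tfrac n\mu(1+\ln\mu))$, and in fact the proof of Lemma \ref{minvalue} shows $f((Ax)^-)+\tfrac1\mu\|(Ax)^-\|_1 \geq \tfrac12\big(n\exp(-y)+\tfrac{yn}{\mu}\big) = \Omega(\tfrac n\mu)$ where $y=\|(Ax)^-\|_1/n$; and since $g((a_ix)^-)\leq \ln 2$ is a constant, this gives $s_i = O(\mu/n)$ for every $i$. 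Hence the total sensitivity is $\mathfrak{S}=\sum_i s_i = O(\mu)$, and after a factor absorbing the $(1+\ln\mu)$ term more carefully one gets the $\mu(d\ln\mu)$ shape in the sample size. The uniform sampling distribution $p_i=1/n$ dominates $s_i/\mathfrak S$ up to a constant because the $s_i$ are all within a constant factor of each other, which is exactly why a \emph{uniform} sample suffices here rather than a data-dependent one.

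Next I would bound the complexity of the range space induced by the functions $x\mapsto g((a_ix)^-)$ together with the threshold parameter. Each such function is a fixed monotone transformation of the affine map $x\mapsto a_ix$, so the sets $\{x : g((a_ix)^-)\geq t\}$ are halfspaces (or complements thereof) in $\mathbb{R}^d$ intersected appropriately; standard arguments bound the pseudo-dimension of this class by $O(d)$. Feeding the total sensitivity $O(\mu(1+\ln\mu))$ and pseudo-dimension $O(d)$ into the sensitivity sampling theorem yields that a sample of size $k=O(\tfrac{\mu}{\varepsilon^2}(d\ln\mu+\ln\tfrac1{\delta_1}))$ gives, with probability $1-\delta_1$, simultaneously for all $x$,
\[
\Big|\sum_{a_i\in R}\tfrac nk g((a_ix)^-) - f((Ax)^-)\Big| \leq \varepsilon\, f((Ax)^-) + \varepsilon'\cdot(\text{additive slack}),
\]
and one then converts the additive slack into $\varepsilon f(Ax)$ using $f((Ax)^-)\leq f(Ax)$ and the normalization. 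Finally, observing $\sum_{a_i\in R}\tfrac nk g(a_ix) \geq \sum_{a_i\in R}\tfrac nk g((a_ix)^-)$ pointwise (since $g$ is increasing, so $g(a_ix)\geq g((a_ix)^-)$) turns the two-sided coreset guarantee into the claimed one-sided lower bound $\sum_{a_i\in R}\tfrac nk g(a_ix)\geq f((Ax)^-) - 3\varepsilon f(Ax)$.

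The main obstacle I anticipate is making the sensitivity bound genuinely uniform over \emph{all} $x\in\mathbb{R}^d$ while keeping the dependence on $\mu$ only polynomial: one has to be careful that when $\|(Ax)^-\|_1$ is very small (so most points are correctly classified with large margin) the denominator $f((Ax)^-)$ does not collapse faster than $\ln 2$ from a single misclassified point — this is precisely where the $\mu$-complexity assumption $\|(Ax)^+\|_1\leq \mu\|(Ax)^-\|_1$ must be used to certify that not too much positive mass can pile up without a compensating negative mass, and hence that $f((Ax)^-) = \Omega(n/\mu)$ uniformly. Handling the regime transition cleanly (large-margin vs. many-misclassified) and threading the $(1+\ln\mu)$ factor through to land exactly on $d\ln\mu$ in the sample size is the delicate bookkeeping; the VC/pseudo-dimension bound and the expectation computation are routine.
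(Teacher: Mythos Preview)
Your overall plan (sensitivity framework, VC bound via halfspaces, then one-sided the coreset guarantee using monotonicity of $g$) matches the paper, but there is a genuine gap in the sensitivity step.

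You claim $s_i=\sup_x \frac{g((a_ix)^-)}{f((Ax)^-)}=O(\mu/n)$ by asserting $f((Ax)^-)=\Omega(n/\mu)$ uniformly. This is false. Take $d=1$, $a_1=n-1$, $a_2=\dots=a_n=-1$; then $\mu_A=1$, yet for $x=M\to\infty$ we have $f((Ax)^-)=\ln 2+(n-1)g(-M)\to\ln 2$, while $g((a_1x)^-)=\ln 2$. So $s_1\ge 1$, not $O(\mu/n)$. Your own inequality already signals the problem: you prove $f((Ax)^-)+\tfrac1\mu\|(Ax)^-\|_1=\Omega(n/\mu)$, but the denominator in $s_i$ is only $f((Ax)^-)$, and the extra term $\tfrac1\mu\|(Ax)^-\|_1$ is precisely what carries the mass when $\|(Ax)^-\|_1$ is large. (You also have the hard regime backwards: when $\|(Ax)^-\|_1$ is \emph{small}, most coordinates of $(Ax)^-$ are $0$, so $f((Ax)^-)\approx n\ln 2$; the trouble is when $\|(Ax)^-\|_1$ is \emph{large}.)

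The paper's fix is exactly to make your lower bound the denominator: it adjoins the constant function $f_{\min}(x)=n/\mu$ to the family $\mathcal F=\{g_i\}\cup\{f_{\min}\}$ and computes sensitivities with respect to $f((Ax)^-)+n/\mu$. Then trivially $s_i\le \ln 2\cdot \mu/n$ for each $g_i$ and $s_{f_{\min}}\le 1$, so $\mathfrak S\le \ln 2\cdot\mu+1$; the VC bound (Lemma~\ref{lem:rngsp}, Corollary~\ref{cor2}) is $d+2$. Applying Proposition~\ref{thm:sensitivity} gives a $(1\pm\varepsilon)$ coreset for $f((Ax)^-)+n/\mu$; subtracting $f_{\min}$ from both sides leaves an additive error $\varepsilon\cdot n/\mu$, which by Lemma~\ref{minvalue} is at most $2\varepsilon f(Ax)$, yielding the stated $3\varepsilon f(Ax)$ slack. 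Since among the $g_i$ the sensitivity upper bounds are all equal, uniform sampling is the correct importance distribution. Your ``additive slack'' intuition is right, but it must be \emph{built into} the sensitivity denominator a priori rather than recovered afterward.
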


\section{Approximation for logistic regression in one pass over a stream of data}\label{sec:logreg}


We set $\varepsilon=\frac{1}{8}$.
To prove Theorem \ref{mainthm2} we first show that we can tweak the sketch $A'$ from Theorem \ref{ThmG+} by adding weights and scaling $A'$.
This way we get a new sketch where the weighted logistic loss and the sum of positive entries are the same up to an additive constant of $\ln(2)$.
More precisely we set $A''=N h_{\max} A'= N h_{\max} SA$ with equal weight $\frac{1}{N h_{\max}}$ for all rows.
We denote the weight vector by $w'$.
This way we make sure that $f_{w'}(A''x)$ is very close to $G^+(Ax)$:

\begin{lem}\label{lem5.2}
For all $x \in \mathbb{R}^d$ it holds that $G^+(A'x) \leq \frac{1}{N h_{\max}}f(A''x) \leq G^+(A'x)+\ln(2)$.
\end{lem}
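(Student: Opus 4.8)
The plan is to prove Lemma \ref{lem5.2} by a direct pointwise computation on each row of the sketch $A'' = N h_{\max} A' = N h_{\max} SA$, exploiting the fact that scaling the argument of $f$ by a large factor makes the logistic loss $g(v) = \ln(1+\exp(v))$ behave almost exactly like $\max\{v,0\}$ on each coordinate, up to an additive error of at most $\ln 2$. First I would recall from Lemma \ref{splitlemma} (or rather the elementary estimates in its proof) that for every $v \in \mathbb{R}$ we have $v^+ \le g(v) \le v^+ + \ln 2$, where $v^+ = \max\{v, 0\}$: indeed for $v \ge 0$, $g(v) = v + \ln(1 + e^{-v}) \in [v, v + \ln 2]$, and for $v < 0$, $g(v) = \ln(1+e^v) \in (0, \ln 2]$. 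Applying this coordinatewise to the vector $y = A''x = N h_{\max}\,(A'x)$ and summing with the uniform weights $w'_i = \tfrac{1}{N h_{\max}}$ over the $r = N h_{\max}$ rows gives
\begin{align*}
\sum_i w'_i\, (y_i)^+ \;\le\; \sum_i w'_i\, g(y_i) \;=\; f_{w'}(A''x) \;\le\; \sum_i w'_i\, (y_i)^+ + \sum_i w'_i \ln 2.
\end{align*}

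Next I would simplify the two sandwiching expressions. Since $w'_i = \tfrac{1}{N h_{\max}}$ and there are exactly $N h_{\max}$ rows (the $h_{\max}$ blocks $S_0,\dots,S_{h_{\max}}$ each of $N$ rows — here I am using the convention that there are $h_{\max}$ levels indexed appropriately, matching the definition of $S$ in Section \ref{secapprox}), the error term is $\sum_i w'_i \ln 2 = \ln 2$. For the main term, because $y = N h_{\max}\,(A'x)$ and the map $t \mapsto t^+$ is positively homogeneous, $(y_i)^+ = N h_{\max}\,((A'x)_i)^+$, so $\sum_i w'_i (y_i)^+ = \tfrac{1}{N h_{\max}} \cdot N h_{\max} \sum_i ((A'x)_i)^+ = \sum_i ((A'x)_i)^+ = G^+(A'x)$ by the definition $G^+(y) = \sum_{y_i \ge 0} y_i$. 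Substituting these into the display yields $G^+(A'x) \le f_{w'}(A''x) \le G^+(A'x) + \ln 2$, and dividing the stated inequality form is obtained by noting $\tfrac{1}{N h_{\max}} f(A''x)$ in the lemma refers to the weighted loss $f_{w'}(A''x)$ with these uniform weights (equivalently, $f(A''x)$ with the $N h_{\max}$ factor pulled out front as the weight).

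There is essentially no hard part here — the lemma is a bookkeeping step. The only thing to be careful about is matching conventions: that the weight vector $w'$ assigns total mass $1$ across the $N h_{\max}$ sketch rows (so that the $\ln 2$ error does not blow up by a factor of $N h_{\max}$), and that the notation $\frac{1}{N h_{\max}} f(A''x)$ in the statement coincides with the weighted loss $f_{w'}$ under those weights; I would state this identification explicitly at the start of the proof. I should also make sure the positive-part bounds $v^+ \le g(v) \le v^+ + \ln 2$ are invoked in a self-contained way rather than forcing the reader back to Lemma \ref{splitlemma}, since that lemma is phrased for the full vector $Ax$ rather than coordinatewise; the coordinatewise version follows from the same one-line integral estimate $g(v) = \ln 2 + \int_0^v g'(y)\,dy$ with $g'(y) \in [\tfrac12, 1]$ on $[0,\infty)$ replaced here by the cruder $g'(y) \le 1$ on all of $\mathbb{R}$ together with $g(v) \le \ln 2$ for $v \le 0$. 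Once the conventions are pinned down, the proof is three lines.
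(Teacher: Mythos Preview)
Your proposal is correct and follows essentially the same approach as the paper: both use the coordinatewise bounds $v^+ \le g(v) \le v^+ + \ln 2$ (the paper derives them separately as $g(v)\ge v$ for $v\ge 0$ and $g(v)\le \ln 2 + \max\{v,0\}$), sum over the $N h_{\max}$ rows of $A''$, and use the positive homogeneity of $v\mapsto v^+$ to identify the main term with $N h_{\max}\,G^+(A'x)$. Your packaging is slightly more compact by sandwiching both inequalities at once, but the substance is identical.
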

{
\begin{proof}
First note that for any $v \geq 0$ we have $g(v)=\ln(1+\exp(v))\geq \ln(\exp(v))=v$.
This implies that 
\begin{align*}
f(A''x) & =\sum_{i=1}^{N h_{\max}} g(a_i''x) \geq \sum_{a_i''x \geq 0} g(a_i''x)
\geq \sum_{a_i''x \geq 0} a_i''x = \sum_{a_i'x \geq 0}N h_{\max} a_i'x = N h_{\max}G^+(A'x)
\end{align*}
Further, note that for any $v \in \mathbb{R}$ we have $g(v)\leq \ln(2)+\max\{v, 0\}$. 
This is true for $v\leq 0$ since $g$ is monotonically increasing and $g(0)=\ln(2)$, and since the derivative of $g$ is always bounded by $1$ it also holds for $v>1$, cf. Lemma \ref{splitlemma}.
Consequently it holds that 
\begin{align*}
f(A''x)  =\sum_{i=1}^{N h_{\max}} g(a_i''x)
\leq \sum_{i=1}^{N h_{\max}} \ln(2)+\max\{0, a_i''x\}
&= \sum_{i=1}^{N h_{\max}} \ln(2) ~+~\sum_{a_i''x \geq 0} a_i''x\\
&= {N h_{\max}} \ln(2)+\sum_{a_i'x \geq 0}N h_{\max} a_i'x\\
&= N h_{\max}(\ln(2)+G^+(Ax)).
\end{align*}
\end{proof}
}
We are now ready to show our main result:
\begin{proof}[Proof of Theorem \ref{mainthm2}]
Let $(T, u)$ be a weighted uniform random sample from Theorem \ref{Thm2.1}.
We define $B= \genfrac(){0pt}{0}{A''}{T}$ with weight vector $w=(w',u)$.
The size of $B$ is bounded by $\poly(\mu d \log n)$ since it is dominated by the sketch of Theorem \ref{ThmG+}.
Note that $B$ can handle any update in $O(1)$ time since we need to draw one random number for determining the level that a data point is assigned, another one for determining its bucket and a third one to decide whether to include it into the random sample or not. This sums up to $O(\nnz(A))$ time in total. We note that $B$ can be computed over a turnstile data stream when we replace the random number generators by hash maps, so to compute the pseudorandom choices on demand, see \cite{AlonBI86,Dietzfelbinger96}. 

Let $x'$ be a minimizer to $\min_x f_w(Bx)$.
By Theorem \ref{ThmG+}, Theorem \ref{Thm2.1}, Lemma \ref{lem5.2} and Lemma \ref{splitlemma} we have
\begin{align*}
~ f_w(Bx) &\geq f_u(Tx)+G^+(A'x)\\
&\geq f((Ax)^-) -3\varepsilon f(Ax) +(1-\varepsilon) G^+(Ax) \\
&\geq (1-4\varepsilon) f(Ax)\\
&=\frac{1}{2}f(Ax)
\end{align*}
with constant probability for all $x \in \mathbb{R}^d$ simultaneously.
Let $x^*$ be a solution minimizing $f(Ax)$.
By Theorem \ref{Thm2.1} and Lemma \ref{Lem3.12} it holds that 
\begin{align*}
&~\mathbb{E}\left(f_w(Tx^*)+G^+(A'x^*)\right) \leq \alpha\left(f\left((A'x^*)^-\right)+G^+(Ax^*)\right)
\end{align*}
 for $\alpha=O(\log(n))$
Thus using Markov's inequality we have 
\[f_w(Tx^*)+G^+(A'x^*)\leq 2\alpha\left(f((Ax^*))+G^+(Ax^*)\right)\]
with probability at least $1/2$.
Since for any $ \mu < \infty $ the optimal value of $f(Ax)$ is greater or equal to $\ln(2)$, i.e., there is at least one missclassification, the above inequality and Lemma \ref{splitlemma} imply
\begin{align*}
f(Ax') &\leq  2f_w(Bx')\\
&\leq  2\left( f_u(Tx')+G^+(A'x')+\ln(2)\right)\\
&\leq 2\left( f_u(Tx^*)+G^+(A'x^*)\right)+2\ln(2)\\
&\leq 2\left( 2\alpha\left(f(Ax^*)+G^+(Ax^*)\right)\right)+2\ln(2)\\
&\leq 10 \alpha f(Ax^*)
\end{align*}
with probability at least $1/2$, which proves the first part of the theorem.


%
%
%
%

The last part of the theorem can be derived by slightly changing the logistic loss function similar to a Ky Fan norm. More precisely $A''$ can be split into blocks $A_{h}$ for levels $h=0, \dots , h_{\max}$.
Then we define $f_{w,c}(Bx)=f_u(Tx)+f_{w', c}(A''x)$ where 
\[f_{w', c}(A''x):=\sum_h \sum_{i \in [K]}g((a''x)_{\pi(i, h)}) \]
where $(a''x)_{\pi(i, h)}$ denotes the $i$'th largest entry of $A_h x$.
The modified function thus omits for any fixed $x$ all but the largest $K$ entries on each level. Lemma \ref{lem5.2} can now be adjusted to show that $G_c^+(A'x) \leq f_{w', c}(A''x) \leq G_c^+(A'x)+\ln(2)$.
Then we can use Theorem \ref{Thm3.2} to show that $f(Ax'') \leq 10 C(Ax^*)$.
The only other change in the proof is that $G^+$ gets replaced by $G^+_c$.
\end{proof}

Further we get the following corollary:

\begin{cor}\label{mainthm}
Let $A \in \mathbb{R}^{n \times d}$ be a $\mu$-complex matrix for some bounded $1\leq \mu < \infty$.
There is an algorithm that solves logistic regression in $O(\nnz(A)+ \poly(\mu d \log n) )$ time up to a constant factor with constant probability.
\end{cor}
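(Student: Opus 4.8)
The plan is to invoke the second part of Theorem \ref{mainthm2} and observe that both phases --- forming the sketch and solving the sketched instance --- fit within the claimed budget. First I would run the linear-sketch construction: computing $B=SA$ together with its weight vector $w$ (and the clipping data defining $f_{w,c}$) takes $O(\nnz(A))$ time, since every nonzero entry, equivalently every turnstile update $(i,j,v)$, triggers only $O(1)$ work: one hash evaluation to pick the level $h_p$, one to pick the bucket $g_p$, and one coin flip to decide membership in the uniform sample $T$. This is exactly the update cost already accounted for in the proof of Theorem \ref{mainthm2}.

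Second, I would solve $\min_x f_{w,c}(Bx)$. Here $B\in\mathbb{R}^{r\times d}$ with $r=\poly(\mu d\log n)$, and $f_{w,c}$ is convex: on each level $h$ it is the sum of the $K$ largest values among the convex functions $\{g((A_h x)_i)\}_i$, a Ky--Fan-type truncation that preserves convexity, plus the ordinary weighted logistic loss $f_u(Tx)$. A subgradient of $f_{w,c}(Bx)$ at any $x$ can be computed in $O(rd)$ time --- evaluate $Bx$, sort within each of the $h_{\max}$ blocks to locate the top $K$ entries, and sum the corresponding contributions of $g'$. Hence a standard subgradient or cutting-plane method for convex optimization run to constant relative accuracy on this $d$-dimensional, $\poly(\mu d\log n)$-sized problem takes $\poly(\mu d\log n)$ time and returns $x''$ (or a point whose objective matches that of the true minimizer up to a $(1+o(1))$ factor, which is absorbed into the $O(1)$).

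Third, I would combine the two guarantees: by the second part of Theorem \ref{mainthm2}, with constant probability the returned $x''$ satisfies $f(Ax'')\leq O(1)\cdot\min_x f(Ax)$, and the total time is $O(\nnz(A))+\poly(\mu d\log n)=O(\nnz(A)+\poly(\mu d\log n))$. The only non-routine point --- and thus the main thing to be careful about --- is establishing that $f_{w,c}$ can genuinely be optimized to a constant factor in polynomial time: one must confirm convexity of the top-$K$ truncation on each level, exhibit the efficiently computable subgradient above, and check that the solver's residual error is small enough to be swallowed by Theorem \ref{mainthm2}. The latter is immediate because $\min_x f(Ax)\geq\ln 2>0$ for bounded $\mu$, so a multiplicative $(1+\varepsilon)$-accurate solve already suffices; everything else is bookkeeping on the $O(1)$ per-update cost established in the proof of Theorem \ref{mainthm2}.
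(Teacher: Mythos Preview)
Your proposal is correct and follows essentially the same two-phase strategy as the paper: sketch in $O(\nnz(A))$ time, then solve a $\poly(\mu d\log n)$-sized convex problem. The only cosmetic difference is in how the small problem is optimized: the paper writes an explicit convex program (variables $v_i\geq t_ix$ for rows of $T$ and $v_i'\geq\max\{0,a_i'x\}$ for rows of $A'$, objective $\sum g(v_i)+\sum v_i'$) and invokes the ellipsoid method, whereas you optimize $f_{w,c}(Bx)$ directly via a subgradient/cutting-plane routine after checking that the top-$K$ truncation on each level is convex (max over $K$-subsets of sums of convex functions) and has an $O(rd)$-computable subgradient. Both routes are valid and yield the same running-time bound; your handling of the solver's residual error via $\min_x f(Ax)\geq\ln 2$ is exactly what is needed.
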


\section{Experiments}\label{sec:experiments}
\begin{figure*}[ht!]
\begin{center}
\begin{tabular}{ccc}
\includegraphics[width=0.318\linewidth]{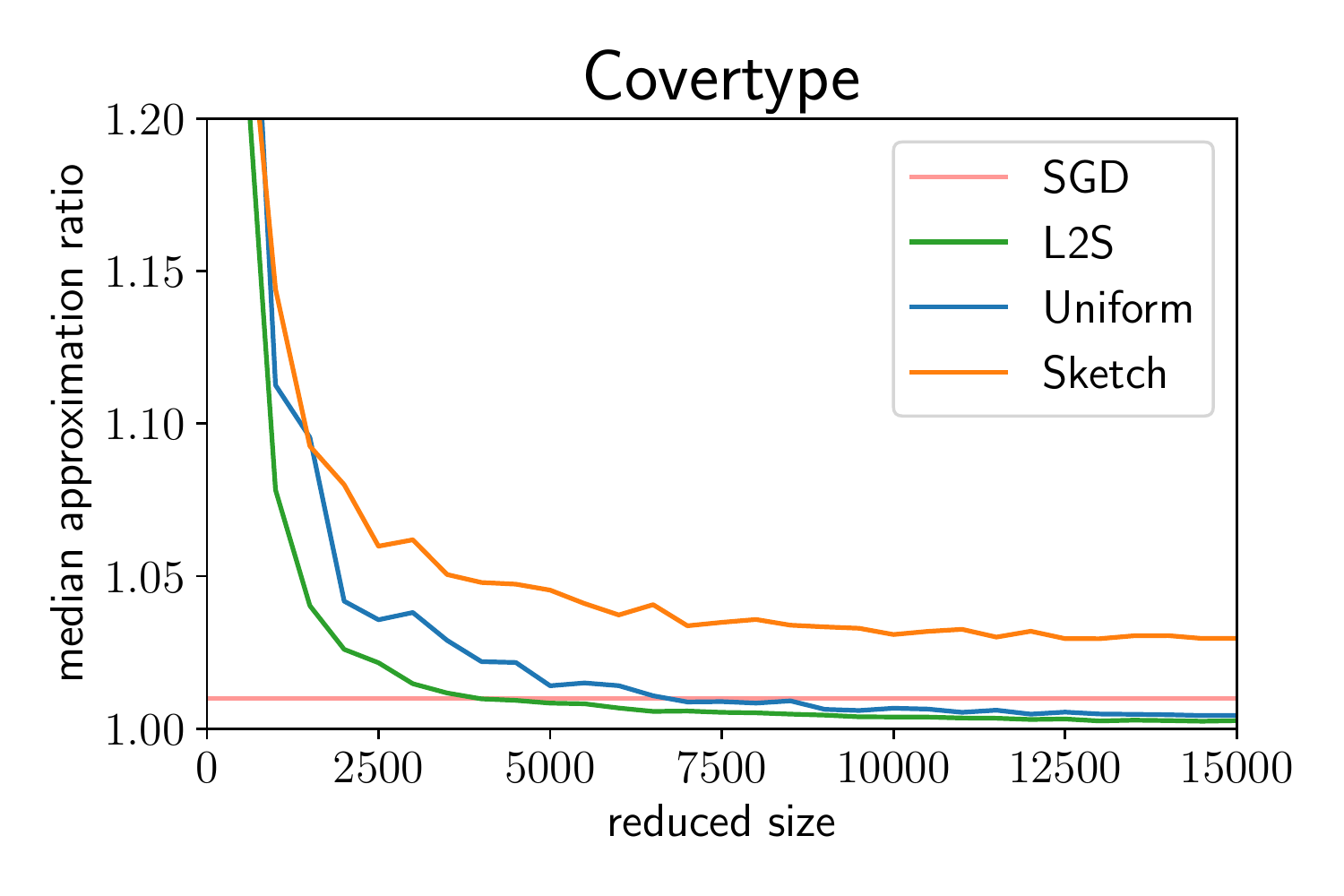}&
\includegraphics[width=0.318\linewidth]{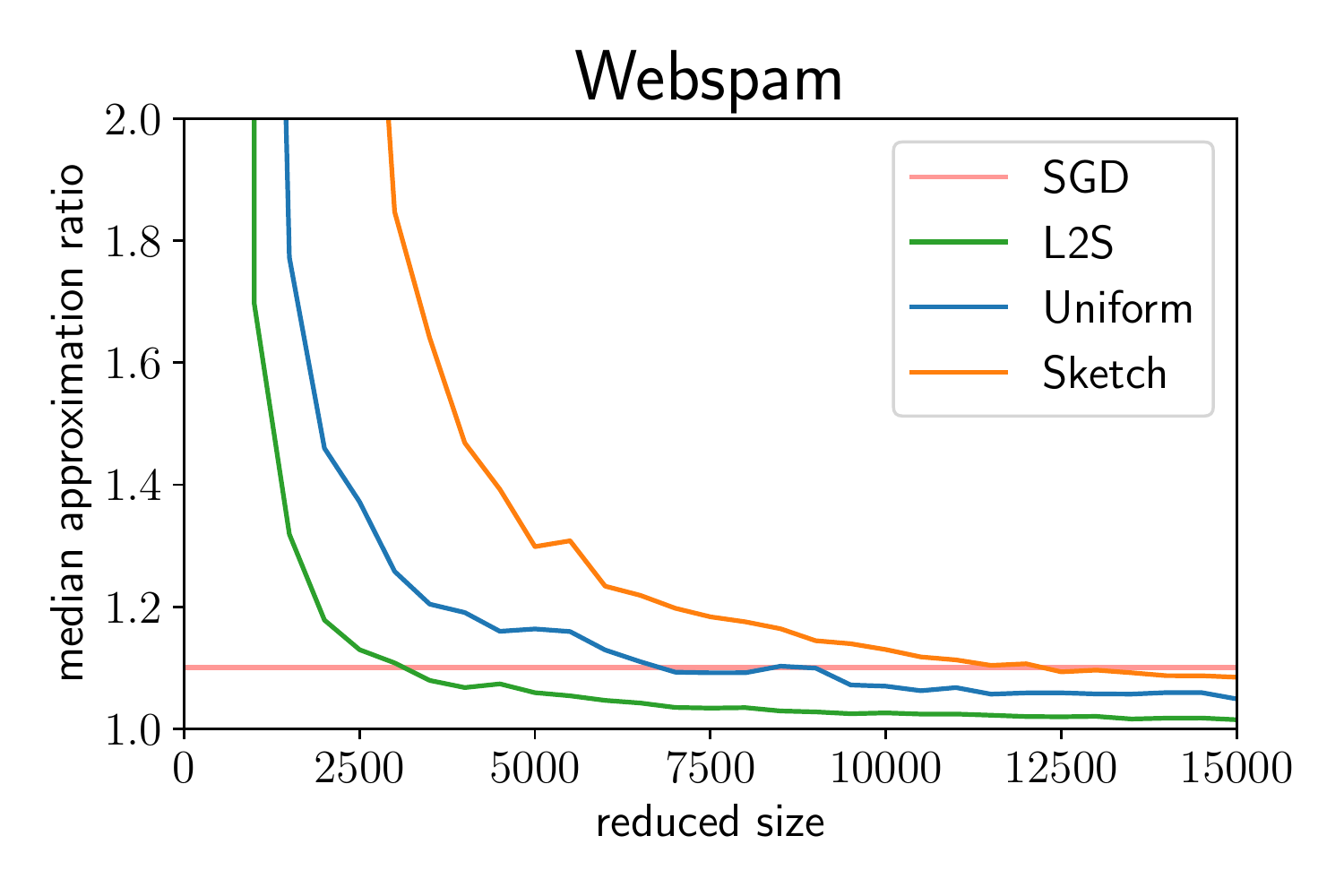}&
\includegraphics[width=0.318\linewidth]{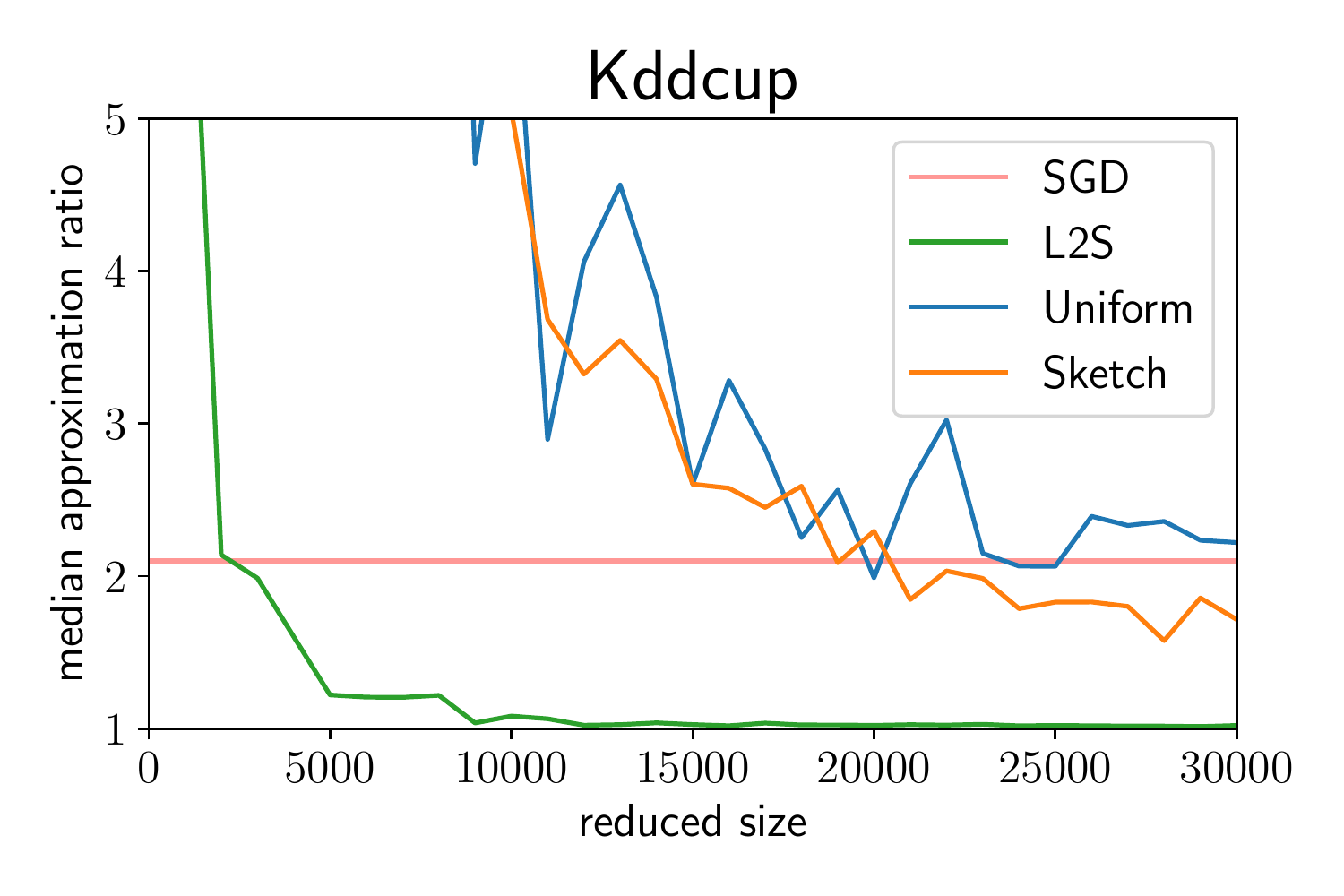}\\
\includegraphics[width=0.318\linewidth]{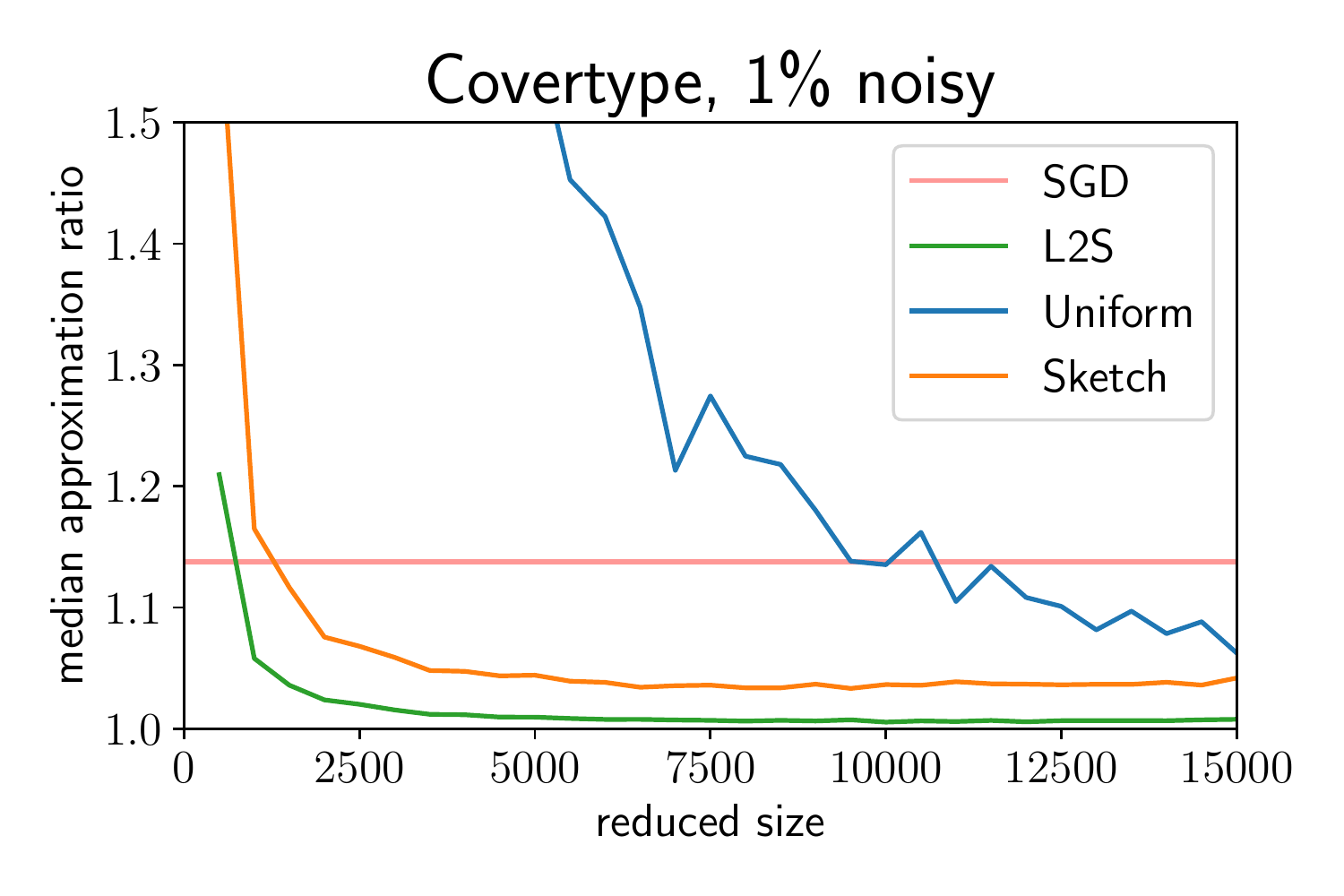}&
\includegraphics[width=0.318\linewidth]{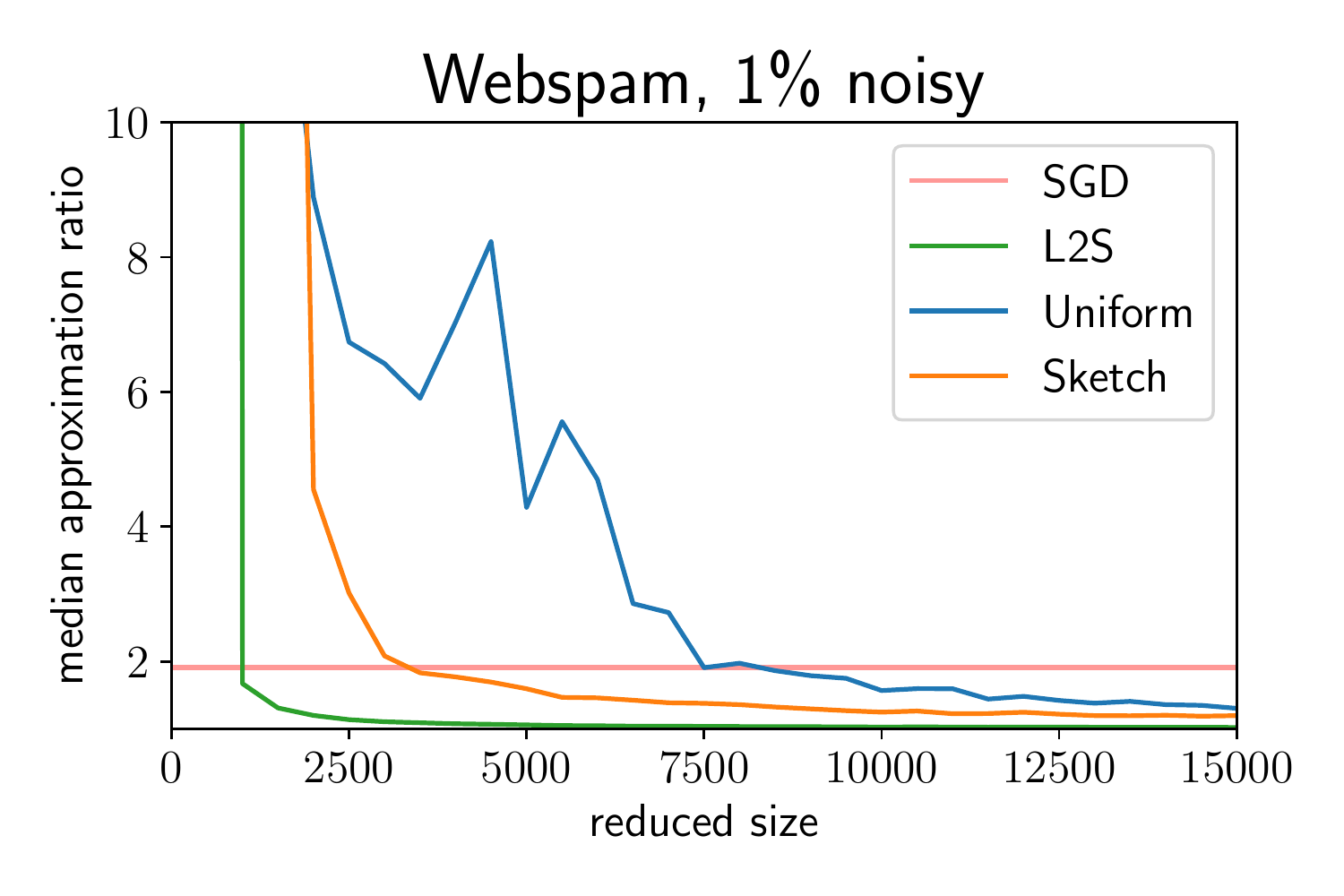}&
\includegraphics[width=0.318\linewidth]{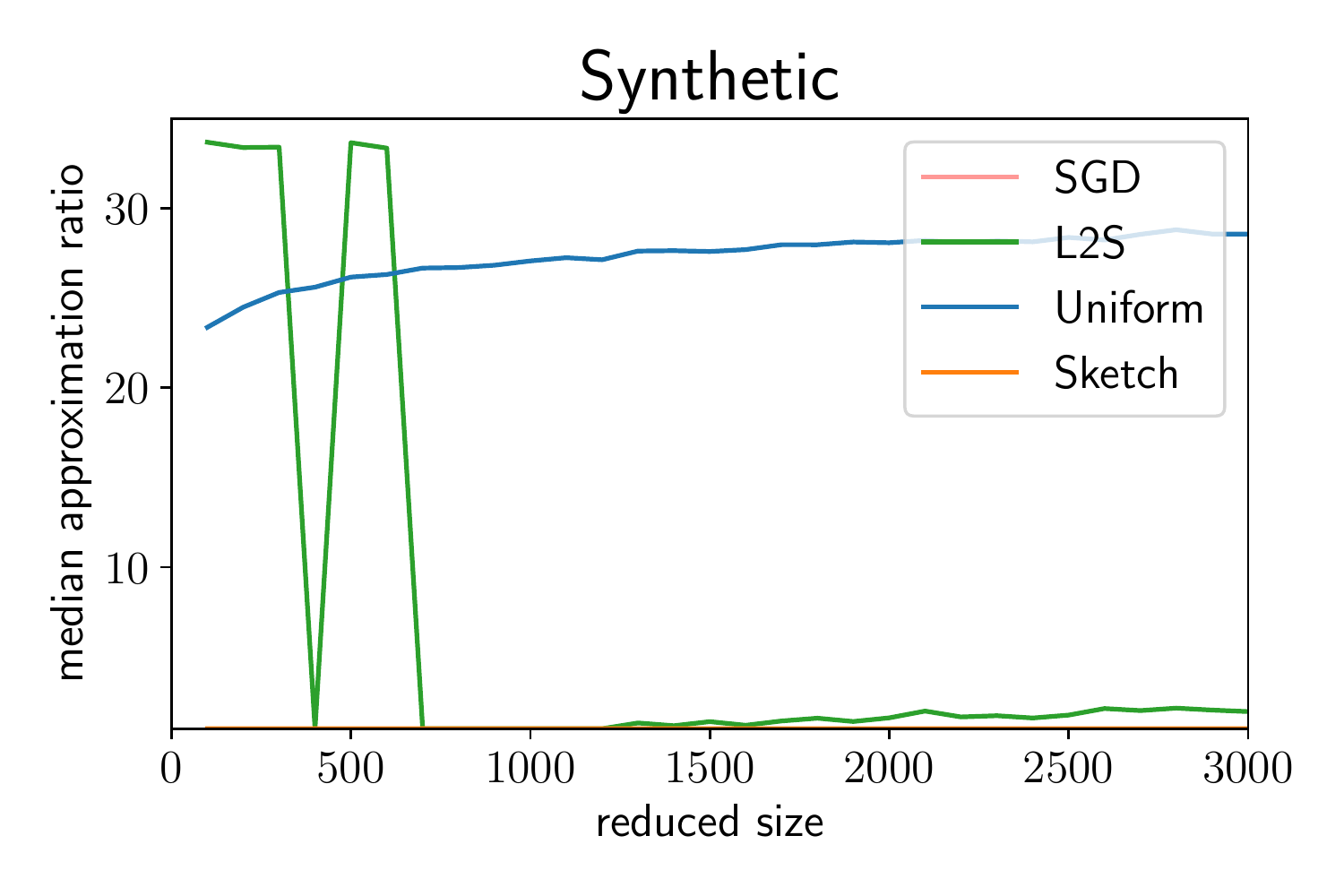}\\
\includegraphics[width=0.318\linewidth]{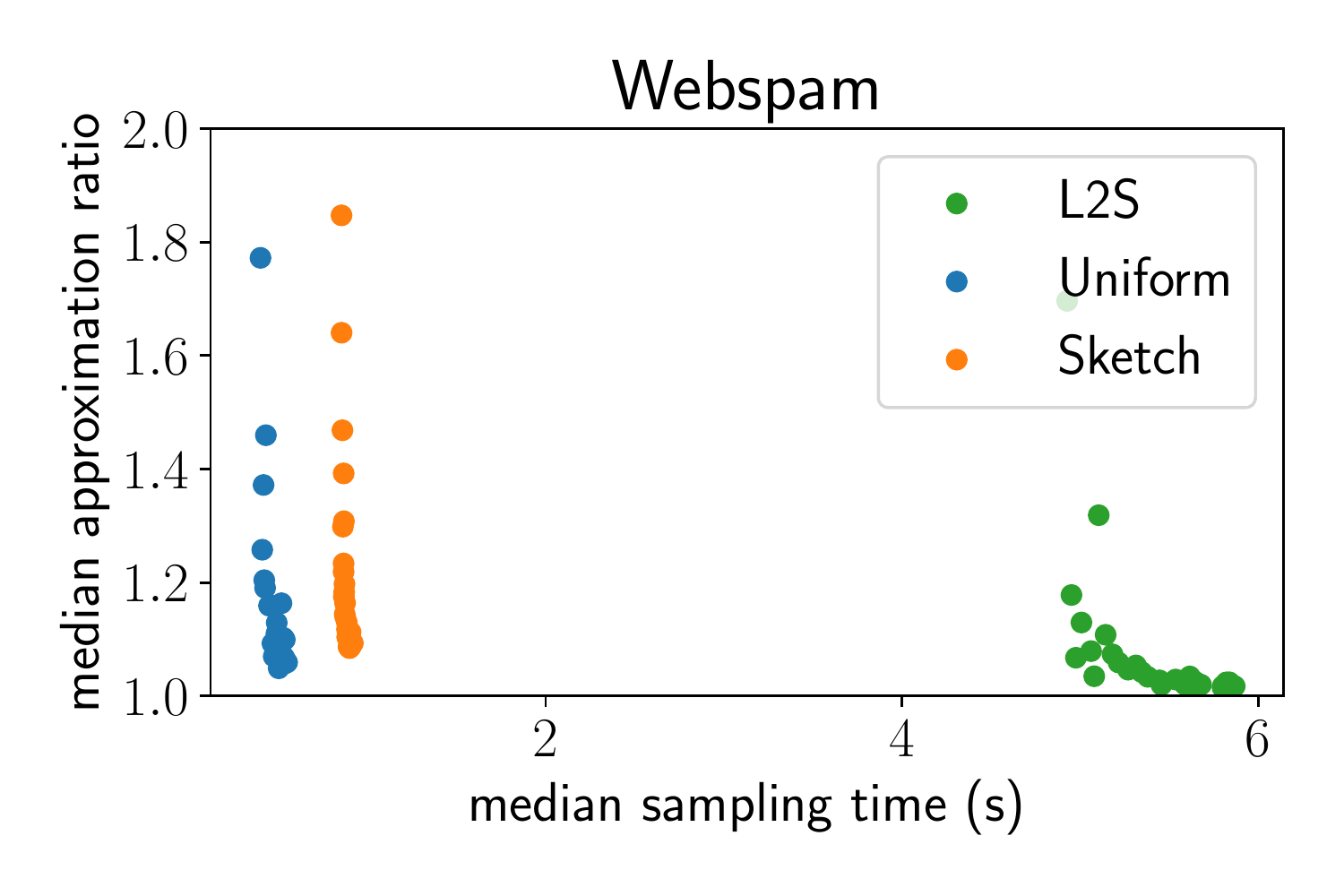}&
\includegraphics[width=0.318\linewidth]{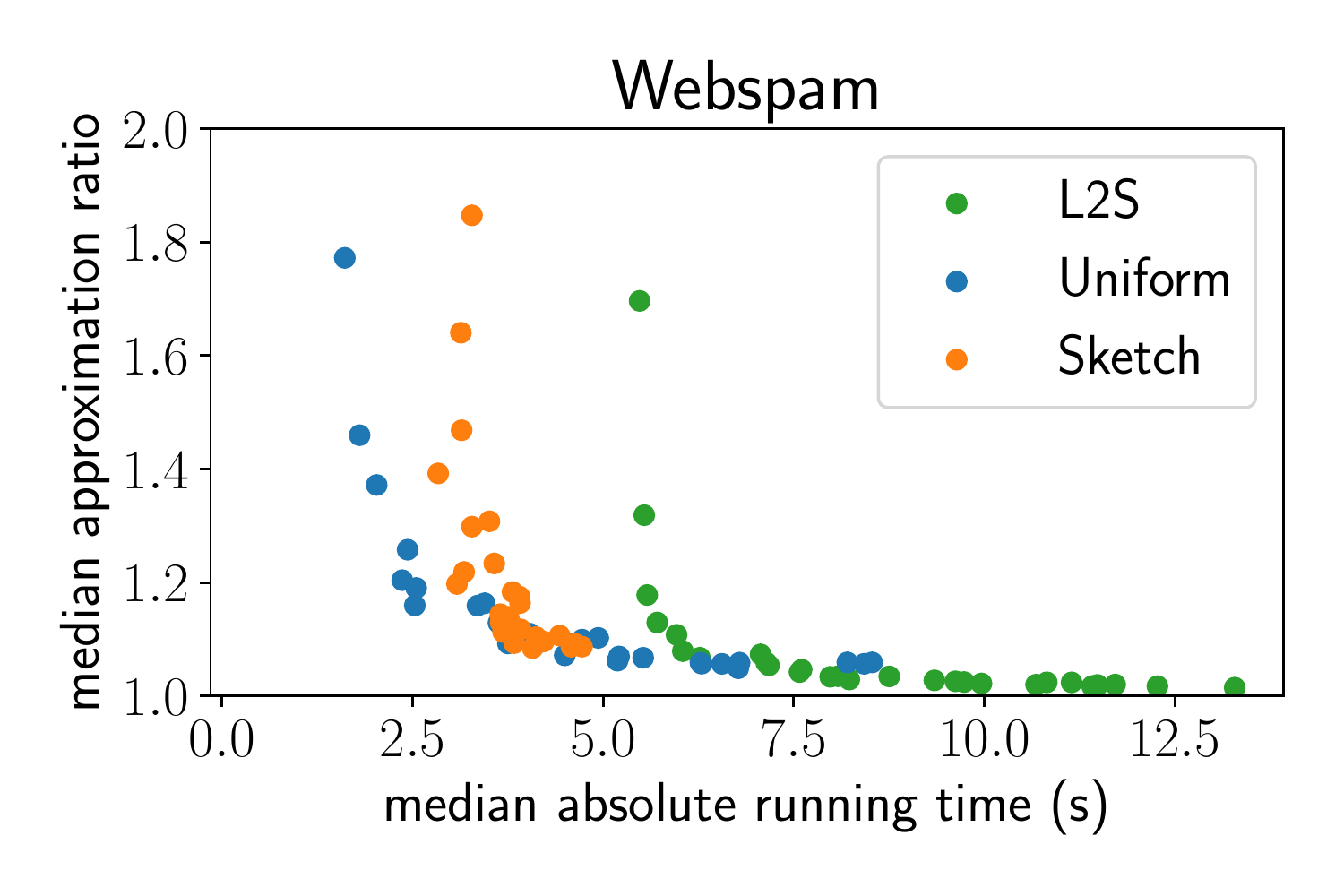}&
\includegraphics[width=0.318\linewidth]{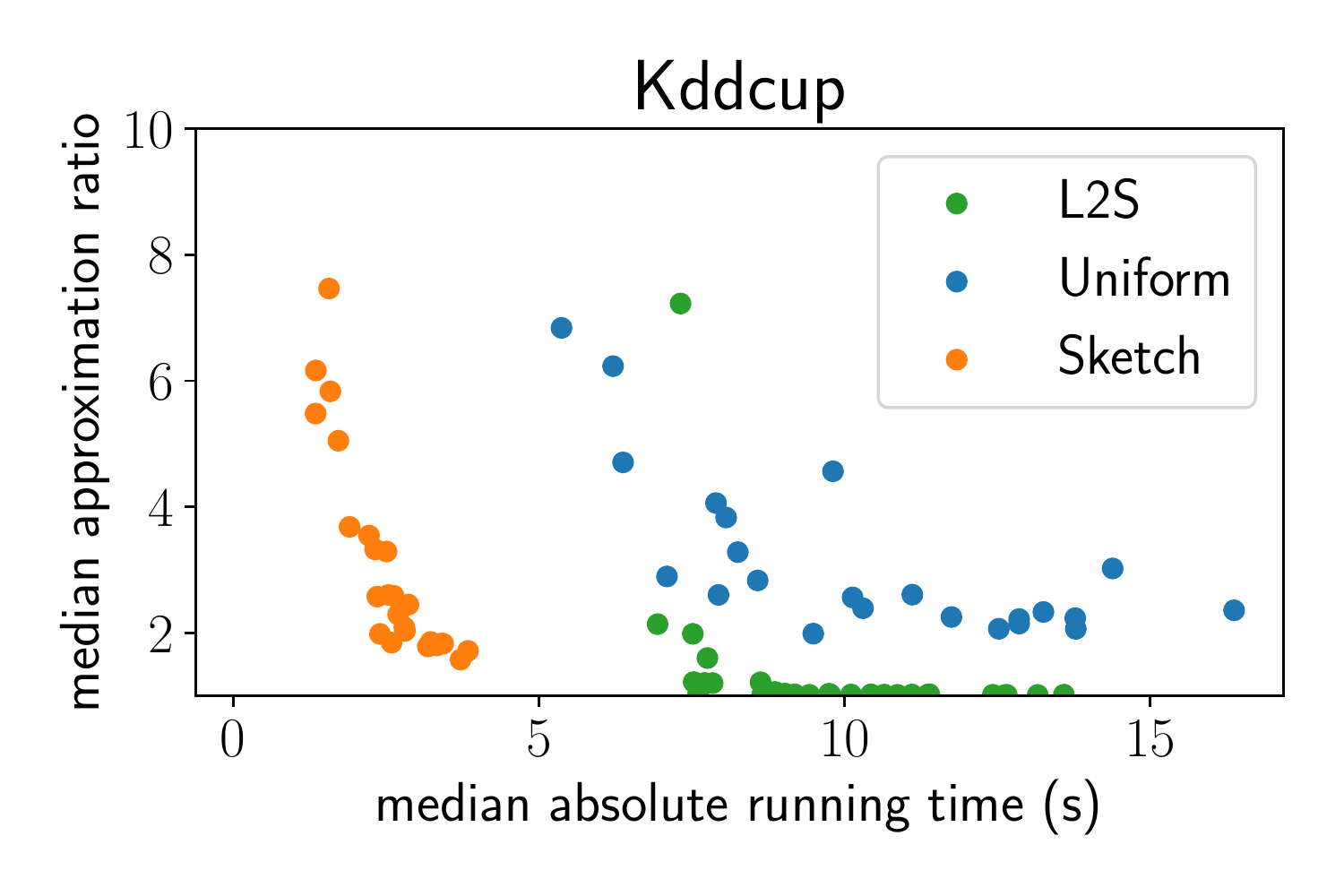}
\end{tabular}
\caption{Comparison of the approximation ratios with and without added noise. Comparison of sketching resp. sampling times vs. accuracy. Comparison of total running times including optimization vs. accuracy.
}
\label{fig:strct_exp}
\end{center}
\end{figure*}
Our results can be reproduced with our open Python implementation available at \url{https://github.com/cxan96/oblivious-sketching-logreg}.
We compare our oblivious \texttt{LogReg}-sketch algorithm with uniform sampling (UNI), stochastic gradient descent (SGD), and the $\ell_2$-leverage score (L2S) coreset from \cite{MunteanuSSW18}. SGD and L2S work only on data presented row-by-row. L2S requires two passes and is not data oblivious. We additionally note that the actual distribution is taking the \emph{square root} of the $\ell_2$-leverage scores as an approximation to $\ell_1$ which is not covered by one-pass online leverage score algorithms given by \cite{CohenMP20,ChhayaC0S20}. We do not compare to the other coresets surveyed in the related work because they rely on regularization, and do not apply to the plain unconstrained logistic regression studied in this paper.
SGD is allowed one entire pass over the data to be comparable to our sketch. Its plot thus displays a flat line showing the final result as a baseline. For all other data reduction methods, the optimization is done with standard optimizers from the scikit learn library\footnote{\url{https://scikit-learn.org/}}.
The covertype and kddcup data sets are loaded automatically by our code from the scikit library, and webspam data is loaded from the LIBSVM data repository\footnote{\url{https://www.csie.ntu.edu.tw/~cjlin/libsvmtools/datasets/}}. Additional details on the size and dimensions of the data sets are in the supplementary, Section \ref{sec:supp:experiments}.
The synthetic data is constructed so that it has $n$ points in one place and two more points are placed in such a way that the leverage score vectors will be $(\frac{1}{n},\ldots,\frac{1}{n},\frac{1}{2},\frac{1}{2})$ for $\ell_1$, $(\frac{1}{\sqrt n},\ldots,\frac{1}{\sqrt n},\frac{1}{2},\frac{1}{2})$ for $\ell_2$ and it is crucial for any reasonable approximation to find those two points. The probability to see one of them is roughly $\frac{1}{n}$ for UNI and SGD, and $\frac{1}{\sqrt n}$ for L2S, but for $\ell_1$-leverage scores 
it will be $1/2$ and thus the points will be heavy hitters and sketched well in separate buckets with constant probability (see Section \ref{sec:contraction}).
The \texttt{LogReg}-sketch uses $h_{\max}+1=3$ levels and one level of uniform sampling. By the Ky Fan argument all but the largest 25\% entries are cut off at each level. The other algorithms were run using their standard parameters.
We varied the target size of all reduction algorithms in thirty equal steps and calculated the approximation ratio
$f(A\tilde x)/f(Ax^*)$ where $x^*$ is the solution returned on the full problem and $\tilde x$ is the solution returned on the reduced version. We repeated each experiment twenty times and displayed the median among all repetitions in Figure \ref{fig:strct_exp}.

For the real data that seem easy to subsample uniformly, we show what happens if we introduce random Gaussian $N(0,10^2)$ noise to 1\% of data to simulate adversarial corruptions; displayed in Figure \ref{fig:strct_exp}. Finally we assess the sampling time as well as the total running time (including the subsequent optimization) vs. the accuracy of our sketch displayed in Figure \ref{fig:strct_exp}. Further plots can be found in the supplementary, Section \ref{sec:supp:experiments}.\\

\textbf{Discussion.}
The overall picture is that \texttt{LogReg}-sketch never performs much worse than its competitors, even for data that is particularly easy to handle for UNI and SGD (see covertype and webspam).
On the kddcup data we see that \texttt{LogReg}-sketch improves slowly with increasing sketch sizes and performs slightly better than UNI. Here L2S clearly performs best. However, we emphasize again that L2S can choose its sampling distribution adaptively to the data and requires two passes in row-order. In contrast \texttt{LogReg}-sketch makes its random choices obliviously to the data and allows single-pass turnstile streaming. The higher flexibility justifies slightly weaker (but still competitive) approximations.
On the synthetic data we see the theoretical claims confirmed. By construction UNI and SGD (not in the plot since its median approximation ratio exceeds 1000) have no chance to give a good approximation on a sublinear sample. L2S starts to converge at about $\Theta(\sqrt n)$ samples. \texttt{LogReg}-sketch has roughly zero error even for very small constant sketch sizes.

When noise is added to corrupt a small number of data points, we see that UNI is not robust and its accuracy deteriorates. In comparison our Sketch (and L2S) are unaffected in this adversarial but natural setting due to their worst case guarantees.

The sketching time of our Sketch is slightly slower but closest to UNI, and the adaptive L2S takes much longer to construct the coreset. When the subsequent optimization is included, we see that the Sketch sometimes becomes even faster than UNI, which indicates that the Sketch produces a summary that is better preconditioned for the optimizer than UNI.

In summary \texttt{LogReg}-sketch provably produces good results that are competitive with the other methods and better than expected from the theoretical $O(1)$ approximation. It is weaker only when compared to the adaptive sampling algorithm that is not applicable in several challenging computational settings motivated in the introduction. We also demonstrated that UNI and SGD have no error guarantees under corruptions and in the worst case where \texttt{LogReg}-sketch even outperforms the adaptive algorithm. In all cases \texttt{LogReg}-sketch performs almost the same or better compared to its competitors and comes with higher flexibility in the aforementioned computational scenarios.

\section{Conclusion}\label{sec:conclusion}
We developed the first data oblivious sketch for a generalized linear model, specifically for logistic regression, which is an important model for classification \cite{ShalevSBD14} and estimation of Bernoulli probabilities \cite{McCullaghN89}. The sketching matrix can be drawn from a data-independent distribution over sparse random matrices which is simple to implement and can be applied to a data matrix $A$ over a turnstile data stream in input-sparsity time. This is important and has advantages over existing coreset constructions when it comes to high-velocity streaming applications and when data is not presented in row-order but in an arbitrary unstructured way. The resulting sketch of polylogarithmic size can be put in any solver for weighted logistic regression and yields an $O(\log n)$-approximation. We also showed how the same sketch can be slightly adapted to give an $O(1)$-approximation. 
Our experiments demonstrate that those sketching techniques are useful and competitive to uniform sampling, SGD, and to state of the art coresets.

\section*{Acknowledgements}
We thank the anonymous reviewers for their valuable comments.
Alexander Munteanu and Simon Omlor were supported by the German Science Foundation (DFG), Collaborative Research Center SFB 876, project C4 and by the Dortmund Data Science Center (DoDSc). D. Woodruff would like to thank NSF grant No. CCF-181584, Office of Naval Research (ONR) grant N00014-18-1-256, and a Simons Investigator Award. 
We thank Christian Peters and Alexander Neuhaus for their help with the experiments.


\bibliographystyle{alpha}
\bibliography{references}

\clearpage

\appendix
\section{Preliminaries}

Bernstein's inequality will be used multiple times:
\begin{pro}
	\label{thm:bernstein}[Bernstein's Inequality]\cite{Maurer03}
Let $\{X_i\}$ be independent random variables with $\mathbb{E}(X_i^2) < \infty$ and $X_i\geq 0$. Set $X=\sum_i X_i$ and $\lambda >0$. Then,
\begin{equation*}
Pr[X \leq \mathbb{E}(X)-\lambda ]\leq \exp\left( \frac{-\lambda^2}{2 \sum_i \mathbb{E}(X_i^2)} \right) .
\end{equation*}
If $X_i -  \mathbb{E}(X_i)\leq \Delta $ for all $i$, then with $\sigma_i^2=\mathbb{E}(X_i^2)- \mathbb{E}(X_i)^2$ we have
\begin{equation*}
Pr[X \geq \mathbb{E}(X)+\lambda ]\leq \exp\left( \frac{-\lambda^2}{2 \sum_i \sigma_i^2 + \lambda \Delta} \right).
\end{equation*}
\end{pro}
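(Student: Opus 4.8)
The plan is to prove both tail bounds by the exponential-moment (Chernoff) method; the only substantive ingredients are \emph{one-sided} estimates on the moment generating functions of the summands, which is exactly what the hypotheses $X_i\ge 0$ (for the lower tail) and $X_i-\mathbb{E}(X_i)\le\Delta$ (for the upper tail) are tailored to supply.

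\textbf{Lower tail.} I would start from Markov's inequality applied to $e^{-tX}$ for $t>0$:
\[ \Pr[X\le\mathbb{E}(X)-\lambda]\;\le\;e^{t(\mathbb{E}(X)-\lambda)}\,\mathbb{E}[e^{-tX}]\;=\;e^{t(\mathbb{E}(X)-\lambda)}\prod_i\mathbb{E}[e^{-tX_i}], \]
using independence in the last step. Since $X_i\ge 0$ we have $tX_i\ge 0$, so the elementary inequality $e^{-u}\le 1-u+u^2/2$, valid for $u\ge 0$ (check by differentiating twice), gives $\mathbb{E}[e^{-tX_i}]\le 1-t\mathbb{E}(X_i)+\tfrac{t^2}{2}\mathbb{E}(X_i^2)\le\exp(-t\mathbb{E}(X_i)+\tfrac{t^2}{2}\mathbb{E}(X_i^2))$. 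Multiplying these out, the factors $e^{t\mathbb{E}(X_i)}$ cancel against $e^{t\mathbb{E}(X)}$ and leave $\Pr[X\le\mathbb{E}(X)-\lambda]\le\exp(-t\lambda+\tfrac{t^2}{2}\sum_i\mathbb{E}(X_i^2))$; the choice $t=\lambda/(\sum_i\mathbb{E}(X_i^2))$ yields exactly the claimed $\exp(-\lambda^2/(2\sum_i\mathbb{E}(X_i^2)))$.

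\textbf{Upper tail.} Put $Y_i=X_i-\mathbb{E}(X_i)$, so that $\mathbb{E}(Y_i)=0$, $\mathbb{E}(Y_i^2)=\sigma_i^2$, and $Y_i\le\Delta$; as before $\Pr[X\ge\mathbb{E}(X)+\lambda]\le e^{-t\lambda}\prod_i\mathbb{E}[e^{tY_i}]$. The key lemma I would establish is that $y\mapsto(e^{ty}-1-ty)/y^2$ is non-decreasing on all of $\mathbb{R}$ — crucially this requires $y$ only to be bounded \emph{above}, not $|y|$ bounded — so from $Y_i\le\Delta$ we get $e^{tY_i}\le 1+tY_i+Y_i^2\frac{e^{t\Delta}-1-t\Delta}{\Delta^2}$, hence $\mathbb{E}[e^{tY_i}]\le 1+\sigma_i^2\frac{e^{t\Delta}-1-t\Delta}{\Delta^2}\le\exp(\sigma_i^2\frac{e^{t\Delta}-1-t\Delta}{\Delta^2})$. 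Plugging in $e^{x}-1-x\le\frac{x^2/2}{1-x/3}$ for $0\le x<3$ (which follows from $k!\ge 2\cdot 3^{k-2}$) with $x=t\Delta$ gives, writing $v:=\sum_i\sigma_i^2$, that $\Pr[X\ge\mathbb{E}(X)+\lambda]\le\exp(-t\lambda+\tfrac{vt^2/2}{1-t\Delta/3})$. The substitution $t=\lambda/(v+\lambda\Delta/3)$ collapses the exponent to $-\lambda^2/(2v+\tfrac{2}{3}\lambda\Delta)$, which is at most the stated $-\lambda^2/(2\sum_i\sigma_i^2+\lambda\Delta)$ because $\tfrac{2}{3}\lambda\Delta\le\lambda\Delta$ (degenerate cases $v=0$ or $\Delta\le 0$ being trivial).

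\textbf{Main obstacle.} The delicate point is that only one-sided control of the summands is assumed, so the usual two-sided Bernstein bound on $\mathbb{E}[e^{tY_i}]$ (which leans on $|Y_i|\le\Delta$) is not available verbatim; the crux is the monotonicity of $(e^{ty}-1-ty)/y^2$ over the whole real line, which lets the variance term $\sigma_i^2$ absorb the downward excursions of $Y_i$ while only the upward excursions need to be capped by $\Delta$. Everything else — the scalar inequalities for $e^{-u}$ and $e^x-1-x$ and the one-dimensional optimization over $t$ — is routine, and the slack between $\tfrac{2}{3}\lambda\Delta$ and $\lambda\Delta$ merely reflects that the stated inequality is a slightly lossy but cleaner rendering of what the argument naturally produces, agreeing with the one-sided Bernstein estimates of \cite{Maurer03}.
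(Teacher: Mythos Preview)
Your argument is correct and is the standard Chernoff--Bernstein derivation. Note, however, that the paper does not actually prove this proposition: it is simply quoted from \cite{Maurer03} as a tool, so there is no ``paper's proof'' to compare against. Your lower-tail argument via $e^{-u}\le 1-u+u^2/2$ for $u\ge0$ is precisely Maurer's one-sided bound for nonnegative summands, and your upper-tail argument via the monotonicity of $y\mapsto(e^{ty}-1-ty)/y^2$ is the classical Bennett--Bernstein route; both are sound and the final optimizations check out.
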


\section{Omitted details from Section \ref{secapprox}}\label{sec:supp:approx}
For technical reasons we make the following assumptions on the parameters:

\begin{ass}\label{Ass3.1}
We assume that the following inequalities hold for $N, m, b, \delta$:
\begin{align*}
& m\varepsilon^2 \geq \max\left\lbrace -4\ln(\delta),~ 3\ln\left(\beta m \log_2\left( \frac{m}{\varepsilon}\right) \right),
 2\ln\left(\log_2 \left(\frac{n}{m\varepsilon}\right)+1\right), 2d\ln\left(1+\frac{n}{m\varepsilon}\right)-\ln(\delta) \right\rbrace\\
& b \geq \max\left\lbrace m,~ \delta^{-1} \right\rbrace \\
& N = \Omega (bm^2 d^2 \varepsilon^{-1}\delta^{-1})
\end{align*}
\end{ass}

\subsection{Contraction bounds}\label{sec:contraction}

We set $Q_1=\{q \in \{1 ,\dots, q_{\max}\} ~|~ |W_q^+|\geq \beta m \wedge \Vert W_q^+\Vert_1 \geq \frac{\varepsilon}{q_{\max}} \}$ and $Q_2=\{q \leq \log_2(\frac{m}{\varepsilon})\}\setminus Q_1$.
We set $Q^*=Q_1 \cup Q_2$ to be the set of important weight classes.
The following lemma shows that the weight of the remaining weight classes is negligible:

\begin{lem}\label{Lem3.7}
It holds that $\sum_{q  \in Q^{*}} \Vert W_q^+ \Vert_1 \geq (1-5\varepsilon')G^+(z)$.
\end{lem}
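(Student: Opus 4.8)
The plan is to prove Lemma~\ref{Lem3.7} directly; there is no probabilistic content, since it is a purely combinatorial statement about the weight classes of the fixed vector $z$. The claimed inequality is invariant under positive scaling of $z$, so I keep the normalization $G(z)=\Vert z\Vert_1=1$ already adopted in the outline; then the $\mu$-property gives $G^+(z)=\Vert z^+\Vert_1\geq \Vert z\Vert_1/\mu' = 1/\mu' = \varepsilon/\varepsilon'$. Because every positive entry of $z$ is at most $\Vert z\Vert_1=1$, the classes $\{W_q^+\}_{q\geq 1}$ partition the positive entries of $z$, so $\sum_{q\geq 1}\Vert W_q^+\Vert_1 = G^+(z)$. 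Hence it suffices to show that the dropped mass $\sum_{q\notin Q^*}\Vert W_q^+\Vert_1$ is at most $5\varepsilon' G^+(z)$, and since $G^+(z)\geq \varepsilon/\varepsilon'$ it is in turn enough to bound this dropped mass by an absolute constant times $\varepsilon$.

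First I would pin down which classes are dropped. Since $Q_2=\{q\leq\log_2(m/\varepsilon)\}\setminus Q_1$, the set $Q^*=Q_1\cup Q_2$ contains every index $q\leq\log_2(m/\varepsilon)$ and every index of $Q_1$; therefore $W_q^+$ is absent from $Q^*$ only if $q>\log_2(m/\varepsilon)$ and $q\notin Q_1$, i.e. $q>\log_2(m/\varepsilon)$ together with at least one of (i) $q>q_{\max}$, (ii) $|W_q^+|<\beta m$, (iii) $\Vert W_q^+\Vert_1<\varepsilon/q_{\max}$. I would bound the mass of each of these three (possibly overlapping) groups and add. Group (iii): at most $q_{\max}$ indices, each contributing below $\varepsilon/q_{\max}$, so a total below $\varepsilon$. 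Group (i): every entry involved lies in a class with $q>q_{\max}=\log_2(n/\varepsilon)$, hence has value at most $2^{-q_{\max}}=\varepsilon/n$, and there are at most $n$ entries overall, so the total is at most $\varepsilon$. Group (ii): here $|W_q^+|<\beta m$, and since $q>\log_2(m/\varepsilon)$ each element of $W_q^+$ is at most $2^{-q+1}<2\varepsilon/m$, so $\Vert W_q^+\Vert_1<\beta m\cdot 2^{-q+1}$; summing this over all integers $q>\log_2(m/\varepsilon)$ is a geometric tail equal to $\beta m\cdot O(\varepsilon/m)=O(\beta\varepsilon)=O(\varepsilon)$, using $\beta=(b-b^{-h_{\max}})/(b-1)<b/(b-1)<2$ since $b>2$. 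Adding the three contributions gives dropped mass $\leq O(\varepsilon)$; choosing the hidden constants so that it is at most $5\varepsilon$ and noting $5\varepsilon = 5\varepsilon'(\varepsilon/\varepsilon')\leq 5\varepsilon' G^+(z)$, then subtracting from $G^+(z)=\sum_{q\geq 1}\Vert W_q^+\Vert_1$, yields the lemma.

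The only step that needs a moment of care is group (ii): although there may be as many as $q_{\max}=\Theta(\log(n/\varepsilon))$ dropped cardinality-deficient classes, their masses $|W_q^+|\cdot 2^{-q+1}\leq \beta m\, 2^{-q+1}$ shrink geometrically once $q>\log_2(m/\varepsilon)$, so the tail sums to $O(\varepsilon)$ rather than to something of order $\varepsilon\log n$. Everything else — the characterization of $Q^*$, the bounds for groups (i) and (iii), and the passage from an additive $O(\varepsilon)$ error to the relative bound $5\varepsilon'$ via $G^+(z)\geq 1/\mu'$ — is routine.
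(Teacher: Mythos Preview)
Your argument is correct and follows essentially the same route as the paper: bound the mass of the dropped classes by splitting into the tail $q>q_{\max}$ and the range $\log_2(m/\varepsilon)<q\le q_{\max}$ with $q\notin Q_1$, use the geometric tail $\sum_{q>\log_2(m/\varepsilon)}\beta m\,2^{-q+1}\le 4\varepsilon$ for the cardinality-deficient classes, and then convert the additive $O(\varepsilon)$ loss into a relative $O(\varepsilon')$ loss via $G^+(z)\ge 1/\mu'$. In fact you are a bit more explicit than the paper's own proof, since you also separately account for the indices with $|W_q^+|\ge\beta m$ but $\|W_q^+\|_1<\varepsilon/q_{\max}$; the only cosmetic issue is that ``choosing the hidden constants so that it is at most $5\varepsilon$'' is not quite right---the constants are dictated by the analysis and a careful count of your three groups gives something like $6\varepsilon$ rather than $5\varepsilon$, but this is harmless for the downstream bounds (which anyway carry factors like $60\varepsilon'$).
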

{
\begin{proof}
The total weight $W$ of those classes with $q \notin Q_1\cup Q_2$ is at most
\begin{align*}
 W &\leq\sum_{q\leq q_{\max}, q \notin Q^*} \Vert W_q^+\Vert_1 ~+~ \sum_{q>q_{\max}}\Vert W_q^+\Vert_1
 \leq \frac{\varepsilon}{m} \beta m \sum_{q=0}^{\infty}2^{-q} ~+~ 2^{-q_{\max}}n
 =4\varepsilon+ \frac{\varepsilon}{n}n=5\varepsilon 
\end{align*}
as $\beta \leq 2$.
Recall that $G^+(z)\geq \frac{1}{\mu'}$.
Combining these two facts gives us
\begin{align*}
\sum_{q  \in Q^{*}} \Vert W_q^+ \Vert_1 
&=G^+(z)-\sum_{q  \notin Q^{*}} \Vert W_q^+ \Vert_1  
\geq G^+(z)-5\varepsilon
\geq G^+(z)- 5G^+(z)\mu' \varepsilon
= G^+(z)-5\varepsilon'G^+(z).
\end{align*}
\end{proof}
}

In the following we show that the important weight classes have at least the same contribution, up to a small error, after sketching.
First we consider the weight classes with $q \in Q_2$, where the individual entries have a notable contribution themselves.
Then we consider the weight classes with $q \in Q_1$ which consist of a large number of entries.
In both cases we will show that for each important weight class $W_q^+$ there exists a subset $W_q^* \subset W_q^+$ with $\Vert W_q^* \Vert_1 \geq (1-7\varepsilon) \frac{\Vert W_q^+ \Vert_1}{ b^{h(q)}\beta}$ and where each entry $z_p \in W_q^+$ is much larger than the sum of all other entries in its bucket.






\textbf{Heavy-hitters.} This section is dedicated to showing that the large entries of $z$ are well separated among the buckets and that they contribute about the same value after sketching.
For $A \in \mathbb{R}^{n \times d}$, let $u \in \mathbb{R}^n$ denote the $\ell_1$-leverage score vector of $A$, i.e.,
\[ u_i=\max_{x \in \mathbb{R}^d\setminus\{0\}} \frac{|(Ax)_i|}{\Vert Ax\Vert_1}.\]

We start by showing that the rows of $A$ with the largest $\ell_1$-leverage scores are well separated and that only coordinates $z_p$ with high $\ell_1$-leverage can be large coordinates of $z$. To this end we need two lemmas:

\begin{lem}\label{Lem3.4}\cite{ClarksonW15}
For $N_1, N_2$ with $N_2 \geq N_1$ and with $N_1N_2 \leq \kappa N$, for $\kappa \in (0, 1/2)$, let $Y_1$ and $Y_2$ denote the sets of indices of the $N_1$ and $N_2$ coordinates with the largest leverage scores, so that $Y_1 \subset Y_2$.
Then with probability at least $1-2\kappa$ each member of $Y_1$ is in a bucket containing no other member of $Y_2$.
\end{lem}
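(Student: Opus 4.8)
The plan is to reduce the claim to a balls-into-bins argument about hash collisions. Observe that the sketch assigns each coordinate $p\in[n]$ to a level $h_p$ with probability $\frac{1}{\beta b^{h_p}}$ and then, conditioned on the level, to one of the $N$ buckets on that level uniformly at random; by construction the relevant event involves only coordinates in $Y_2$, which is a fixed set of size $N_2$ depending only on $A$ (through the $\ell_1$-leverage scores), not on $x$. First I would condition on the level assignments and argue (or simply note) that the worst case for a collision is when all of $Y_2$ is placed on a single level, so it suffices to bound the probability that, after throwing $N_2$ balls uniformly into $N$ bins, some distinguished ball from the subset $Y_1$ of size $N_1$ shares a bin with another ball of $Y_2$.

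Next I would carry out a union bound. Fix an index $i\in Y_1$; for any other $j\in Y_2$ the probability that $g_j=g_i$ (and $h_j=h_i$) is at most $\frac1N$. Summing over the at most $N_2-1<N_2$ other members of $Y_2$ gives a collision probability for this fixed $i$ of at most $\frac{N_2}{N}$. Then union bounding over the $N_1$ members of $Y_1$ yields a total failure probability of at most $\frac{N_1 N_2}{N}\le \kappa$. This is already the bound $1-\kappa$; to get the stated $1-2\kappa$ one can afford to be even looser — e.g. not conditioning on levels and instead bounding $\Pr[h_j=h_i \wedge g_j=g_i]$ directly, or accounting for the fact that two coordinates collide only if they land on the same level \emph{and} the same bucket, which only helps. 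The factor of $2$ is slack that absorbs the off-by-one in $N_2-1$ versus $N_2$ and any conditioning subtleties.

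The only mild subtlety — and the step I expect to need the most care — is making the conditioning argument rigorous: the level assignment is itself random, so strictly speaking I should write $\Pr[\text{collision}] = \sum_{\text{level profiles}} \Pr[\text{profile}]\cdot \Pr[\text{collision}\mid \text{profile}]$ and bound each conditional term. Given a profile in which $Y_2$ is split as $\sum_h n_h = N_2$ across levels, the probability that a fixed $i\in Y_1$ on level $h$ collides with another $Y_2$-element is at most $\frac{n_h-1}{N}\le \frac{N_2}{N}$, so the conditional collision probability is at most $\frac{N_1 N_2}{N}$ regardless of the profile, and the bound survives the averaging. This monotonicity-in-the-profile observation is exactly why the "all on one level" worst case is legitimate, and once it is in place the rest is the elementary union bound above. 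Since this lemma is quoted verbatim from \cite{ClarksonW15}, I would keep the argument short and simply remark that it is a standard second-moment / union-bound collision estimate, deferring to that reference for the identical proof.
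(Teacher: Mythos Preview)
Your argument is correct and is exactly the standard union-bound collision estimate that underlies this lemma. Note that the paper does not give its own proof of this statement at all --- it is quoted directly from \cite{ClarksonW15} and left unproved --- so there is nothing to compare against beyond the original reference, and your balls-into-bins argument (condition on the level profile, bound each $i\in Y_1$'s collision probability by $(N_2-1)/N$, union bound over $Y_1$) is precisely the intended proof; you are also right that the factor $2$ is slack.
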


\begin{lem}\label{Lem3.5}
If $u_p$ is the $k$-th largest coordinate of $u$, then for $z$ in the subspace spanned by the columns of A 
it holds that $|z_p|\leq \frac{d}{k}G(z)$.
\end{lem}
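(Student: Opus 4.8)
The plan is to reduce the statement to the classical fact that the $\ell_1$-leverage scores of any matrix sum to at most $d$, and then apply an elementary counting argument. First I would observe that for any $z$ in the column space of $A$, the very definition of the leverage score vector gives $|z_p| \le u_p \Vert z\Vert_1 = u_p\, G(z)$ for every coordinate $p$. So it suffices to show that $u_p \le d/k$ whenever $u_p$ is the $k$-th largest entry of $u$.

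To bound a single $u_p$, I would invoke the existence of an $\ell_1$-well-conditioned (Auerbach) basis $U \in \mathbb{R}^{n\times d}$ for the column space of $A$: a basis whose columns satisfy $\Vert U_j\Vert_1 = 1$ and for which $\Vert y\Vert_\infty \le \Vert Uy\Vert_1$ for all $y\in\mathbb{R}^d$. Since $A$ and $U$ span the same subspace, $u_p = \max_{y\neq 0} |(Uy)_p| / \Vert Uy\Vert_1$. Expanding the $p$-th coordinate, $|(Uy)_p| = \bigl|\sum_j U_{pj} y_j\bigr| \le \Vert y\Vert_\infty \sum_j |U_{pj}| \le \Vert Uy\Vert_1 \cdot \Vert U^{(p)}\Vert_1$, where $U^{(p)}$ is the $p$-th row of $U$. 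Hence $u_p \le \Vert U^{(p)}\Vert_1$, and summing over all coordinates, $\sum_{i=1}^n u_i \le \sum_i \Vert U^{(i)}\Vert_1 = \sum_j \Vert U_j\Vert_1 = d$.

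Finally, if $u_p$ is the $k$-th largest coordinate of $u$, then at least $k$ coordinates are $\ge u_p$, so $k\,u_p \le \sum_{i:\,u_i\ge u_p} u_i \le \sum_i u_i \le d$, giving $u_p \le d/k$. Plugging this into the first step yields $|z_p| \le u_p\, G(z) \le \tfrac{d}{k} G(z)$, which is the claim. The only non-elementary ingredient is the existence of the Auerbach basis with the two stated properties, so I expect that to be the main point to cite or verify carefully; everything else is bookkeeping, with the one mild subtlety that leverage scores computed through $A$ and through $U$ coincide because they parametrize the same subspace.
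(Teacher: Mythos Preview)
Your proposal is correct and follows essentially the same route as the paper: both invoke an Auerbach basis for the column space of $A$ to bound each leverage score by the $\ell_1$-norm of the corresponding row, sum to get $\sum_i u_i \le d$, deduce $u_p \le d/k$ for the $k$-th largest entry, and combine with the definitional inequality $|z_p|\le u_p\,G(z)$. Your write-up is in fact a bit more explicit about the counting step and the change-of-basis justification, but the argument is the same.
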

\begin{proof}
By \cite{DasguptaDHKM09} there exists a so-called \emph{Auerbach} basis $Q$ of $A$ with the following properties. 
It holds that $G(Qx)=\|Qx\|_1 \geq \Vert x \Vert_\infty$ for all $x \in \mathbb{R}^d$ and $\sum_{ij} |Q_{ij}|\leq d$.
Note that by a change of basis
\begin{align*}
u_i&=\max_{x \in \mathbb{R}^d\setminus\{0\}} \frac{|(Ax)_i|}{\Vert Ax\Vert_1} =\max_{x \in \mathbb{R}^d\setminus\{0\}} \frac{|(Qx)_i|}{\Vert Qx\Vert_1} .
\end{align*}
Thus $|z_i|=|Q_i x| \leq \|Q_i\|_1 \|x\|_\infty \leq \|Q_i\|_1 \|Qx\|_1$ and it follows that $\sum_i u_i \leq \sum_i \|Q_i\|_1 = \sum_{ij} |Q_{ij}| \leq d$.
Consequently the $k$-th largest coordinate of $z$ can be at most $|z_p| \leq u_p G(Qx) \leq \frac{d}{k} G(Qx)=\frac{d}{k} G(z)$.
\end{proof}

We apply Lemma \ref{Lem3.4} and Lemma \ref{Lem3.5} as follows: set $N_1= \frac{d \beta m}{\varepsilon}$ and $N_2 = d \beta m^2$, and let $Y_1$ (resp. $Y_2$) be the set of coordinates with the $N_1$ (resp. $N_2$) largest leverage scores.
We denote by $\mathcal{E}_1$ the event that all coordinates in $Y_1$ are in a bucket with no other member of $Y_2$.
By Lemma \ref{Lem3.4} and Assumption \ref{Ass3.1}, $\mathcal{E}_1$ holds with probability at least $1-\delta$.
Then by Lemma \ref{Lem3.5}, for any entry $z_p\geq v_1:= \frac{\varepsilon}{\beta m}$ we have $p \in Y_1$ and for any entry $p \in Y_2$ we have $z_p < \frac{1}{\beta m^2 }= \frac{\varepsilon}{\beta m } \cdot \frac{\varepsilon}{m \varepsilon^2}=\frac{\varepsilon}{m \varepsilon^2}v_1$.
It remains to show that the remaining entries in the buckets containing a heavy hitter only have a small contribution.
To show this we use Bernstein's inequality.
For a coordinate $p \in [n]$ we denote by $B_p$ the bucket that contains $p$.

\begin{lem}\label{Lem3.10}
Assume $\mathcal{E}_1$ holds.
Then with failure probability at most $e^{-m\varepsilon^2}$ for any $p$ with $z_p \geq v_1$ and $z_p \in W_q$ for some $q \in Q_2$, we have $\Vert B_p\setminus \{p\}\Vert_1 \leq 3\varepsilon | z_p |$.
\end{lem}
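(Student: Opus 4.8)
The plan is to use the structure forced by $\mathcal{E}_1$ to write $\|B_p\setminus\{p\}\|_1$ as a sum of independent, bounded, individually tiny contributions, and then apply the upper-tail Bernstein inequality (Proposition \ref{thm:bernstein}). First, since $z_p\geq v_1=\frac{\varepsilon}{\beta m}$ and $G(z)=\|z\|_1=1$, Lemma \ref{Lem3.5} forces $p\in Y_1\subseteq Y_2$, because a coordinate of rank $>N_1=\frac{d\beta m}{\varepsilon}$ in the $\ell_1$-leverage order would have magnitude $<d/N_1=v_1$. Under $\mathcal{E}_1$ the bucket $B_p$ contains no member of $Y_2$ other than $p$, so every $j\in B_p\setminus\{p\}$ has leverage rank exceeding $N_2=d\beta m^2$, hence by Lemma \ref{Lem3.5} again $|z_j|<\frac{1}{\beta m^2}=\frac{v_1}{m\varepsilon}\leq\frac{z_p}{m\varepsilon}$.

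Next I would set up the randomness so Bernstein applies cleanly. Condition on the level/bucket assignments of all coordinates in $Y_2$; this determines whether $\mathcal{E}_1$ holds and fixes the level $h:=h_p$ and bucket $g:=g_p$ of $p$, while the assignments of the coordinates $j\notin Y_2$ remain independent and uniform. For $j\notin Y_2$ set $X_j=|z_j|\cdot\mathbf{1}[h_j=h,\ g_j=g]$; these are independent and nonnegative, each at most $\Delta:=\frac{z_p}{m\varepsilon}$, and on $\mathcal{E}_1$ one has $\|B_p\setminus\{p\}\|_1=X:=\sum_{j\notin Y_2}X_j$. Each coordinate lands in this specific level and bucket with probability $\frac{1}{\beta b^h N}\leq\frac{1}{\beta N}$, so, using $\|z\|_1=1$ and $|z_j|<\frac{1}{\beta m^2}$,
\[
\mathbb{E}[X]\leq\frac{1}{\beta N}
\quad\text{and}\quad
\sum_j\sigma_j^2\leq\sum_j\mathbb{E}[X_j^2]\leq\frac{1}{\beta b^h N}\cdot\frac{1}{\beta m^2}\sum_j|z_j|\leq\frac{1}{\beta^2 m^2 N}.
\]

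Finally I would invoke Bernstein's upper tail with deviation $\lambda=3\varepsilon z_p-\mathbb{E}[X]$. Assumption \ref{Ass3.1} (in particular $N=\Omega(bm^2d^2\varepsilon^{-1}\delta^{-1})$ together with $b\geq m$, so $N\gg m/\varepsilon^2$) makes $\mathbb{E}[X]$ negligible compared to $\varepsilon z_p\geq\varepsilon v_1$, so $2\varepsilon z_p\leq\lambda\leq 3\varepsilon z_p$. Then $\lambda\Delta\leq\frac{3z_p^2}{m}$, and since $z_p\geq v_1$ gives $\frac{1}{\beta^2 m^2}\leq\frac{z_p^2}{\varepsilon^2}$, the same lower bound on $N$ yields $2\sum_j\sigma_j^2\leq\frac{z_p^2}{m}$. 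Hence the Bernstein exponent is at least $\frac{\lambda^2}{2\sum_j\sigma_j^2+\lambda\Delta}\geq\frac{4\varepsilon^2 z_p^2}{4z_p^2/m}=m\varepsilon^2$, so $\Pr[X\geq 3\varepsilon z_p]\leq e^{-m\varepsilon^2}$; averaging this conditional estimate over the $Y_2$-assignments consistent with $\mathcal{E}_1$ gives the claim.

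The main obstacle is the conditioning step: one must isolate exactly the randomness that defines both $\mathcal{E}_1$ and $p$'s position (namely the hashes of $Y_2$) so that the remaining summands $X_j$ stay independent, and then check that the numerical inequalities of Assumption \ref{Ass3.1} are strong enough to simultaneously kill $\mathbb{E}[X]$ and dominate the variance term $\sum_j\sigma_j^2\approx 1/N$, which is what competes with the target scale $\varepsilon^2 z_p^2$. Everything else is a routine Bernstein estimate.
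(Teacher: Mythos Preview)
Your approach is the same as the paper's (Bernstein on the contributions of coordinates outside $Y_2$ to the bucket $B_p$), and the setup with $p\in Y_1$, $|z_j|<z_p/(m\varepsilon)$ for $j\notin Y_2$, and $\mathbb{E}[X]\le 1/(\beta N)$ is correct. However, there is a genuine gap in the final step.

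The lemma, as stated and as used in the proof of Theorem~\ref{Thm3.1}, is a \emph{simultaneous} statement: with failure probability at most $e^{-m\varepsilon^2}$, the bound $\|B_p\setminus\{p\}\|_1\le 3\varepsilon|z_p|$ holds for \emph{every} $p$ with $z_p\ge v_1$ and $z_p\in W_q$, $q\in Q_2$. You establish $\Pr[X\ge 3\varepsilon z_p]\le e^{-m\varepsilon^2}$ for a single fixed $p$ and stop there. The paper instead proves a stronger per-$p$ bound $e^{-4m\varepsilon^2/3}$ and then union-bounds over the at most $\beta m\log_2(m/\varepsilon)$ relevant coordinates, invoking the clause $m\varepsilon^2\ge 3\ln(\beta m\log_2(m/\varepsilon))$ of Assumption~\ref{Ass3.1} to absorb the union-bound cost and arrive at $e^{-m\varepsilon^2}$.

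The reason your per-$p$ exponent is only $m\varepsilon^2$ rather than $4m\varepsilon^2/3$ is that your variance estimate $2\sum_j\sigma_j^2\le z_p^2/m$ is unnecessarily loose. In fact $\sum_j\sigma_j^2=O(1/(\beta^2 m^2 N))$ is negligible compared to $\lambda\Delta\le 3z_p^2/m$ once $N$ satisfies Assumption~\ref{Ass3.1}, so the Bernstein denominator is effectively $3z_p^2/m$ and the exponent becomes $\ge 4\varepsilon^2 z_p^2/(3z_p^2/m)=4m\varepsilon^2/3$. With that sharpening plus the missing union bound, your argument matches the paper's.
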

{
\begin{proof}
Let $p$ with $z_p \geq v_1$.
Then by Lemma \ref{Lem3.5} it holds that $p \in Y_1$.
For each $i \in [n]\setminus \{p\}$ we define a random variable $X_i$ by $X_i=z_i$ if $i$ is put in bucket $B_p$ and $X_i=0$ otherwise.
By our assumption, $X_i=0$ if $i \in Y_2$.
Otherwise we have $P(X_i \neq 0) \leq \frac{1}{\beta N} $ since the probability that any coordinate is put in any bucket is at most $ \frac{1}{\beta N}$.
Further we have $\mathbb{E}(\Vert B_p\setminus \{p\}\Vert_1)\leq \frac{1}{\beta N} $ since $G(z)=1$.
For the variance we have 
\begin{align*}
\sigma_i^2 &= (z_i- \mathbb{E}(X_i))^2 P(X_i=z_i) + \mathbb{E}(X_i)^2P(X_i=0)\leq z_i^2 \frac{1}{\beta N}+ \frac{z_i^2}{(\beta N)^2}< \frac{2z_i^2}{\beta N}
\end{align*}
Since $z_{i} \leq \frac{\varepsilon}{m \varepsilon^2}v_1$ for $i \notin Y_2$ this implies 
\begin{align*}
\sum_{i \in [n]} \sigma_i^2 
=\sum_{i \notin Y_2} \sigma_i^2
<  \sum_{i \notin Y_2}\frac{2z_i^2}{\beta N}
<\frac{\varepsilon}{m \varepsilon^2\beta N}v_1 \sum_{i \in [n]} z_i 
=\frac{\varepsilon}{m \varepsilon^2\beta N}v_1 
\leq \frac{\varepsilon^2 }{\beta^2 m^3 q_{\max}^2} 
\end{align*}
since $N\geq  \varepsilon^{-1}$.
Thus, applying Bernstein's inequality with $\lambda=\varepsilon v_1$ and $\Delta=\frac{\varepsilon}{m \varepsilon^2}v_1 =\frac{\lambda}{m\varepsilon^2}$ we get
\begin{align*}
Pr[X \geq \frac{1}{\beta N}+ 2\lambda ]&\leq \exp\left( \frac{-4\lambda^2}{2 \sum_i \sigma_i^2 + \lambda \Delta} \right) \\
&\leq \exp\left( \frac{-4\lambda^2}{2 \lambda^2 /(m\varepsilon^2) + \lambda^2/(m\varepsilon^2)} \right)\\ 
&\leq e^{-4m\varepsilon^2/3} . 
\end{align*}
Using that $\frac{1}{\beta N}+ 2\lambda \leq 3 \varepsilon z_p$ and using the union bound for at most $\beta m \log_2\left(\frac{m}{\varepsilon}\right)$ coordinates $z_p$ with $z_p \in W_q^+$ for some $q \in Q_2$ concludes the proof, as the total failure probability $P$ is bounded by
\begin{align*}
P&=\beta m \log_2\left(\frac{m}{\varepsilon}\right) \exp(-4m\varepsilon^2/3)\\ 
&= \exp\left(\ln\left(\beta m \log_2\left(\frac{m}{\varepsilon}\right)\right)-4m\varepsilon^2/3\right)\\
&\leq \exp(-m\varepsilon^2)
\end{align*}
by Assumption \ref{Ass3.1}.
\end{proof}
}

\textbf{Other important weight classes.} Let $q \in Q_1$.
Then we have  $ |W_q^+| \geq \beta m$.
We show that we can find a set of representatives for $W_q$ where each representative $z_p$ is in a bucket with no other entry larger than $\frac{|z_p|}{ \varepsilon m}$.
We set $ Y_q=\{ p \in [n] ~|~ h_p=h(q) \text{ and } |z_p| \geq \frac{2^{q-1}}{m\varepsilon}  \}$.

\begin{lem}\label{Lem3.2}
Let $q \in Q_1$ and $h=h(q)$.
Then with failure probability at most $\exp(-m \varepsilon^2)$ there exists a subset $W_q^* \subset W_q^+$ such that $\Vert W_q^* \Vert_1 \geq (1-7\varepsilon) \frac{\Vert W_q^+ \Vert_1}{ b^{h}\beta}$ and every $z_p \in W_q^*$ is in a bucket with no other element of $Y_q$.
\end{lem}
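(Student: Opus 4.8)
The plan is to mirror the heavy-hitter argument (Lemmas \ref{Lem3.4}–\ref{Lem3.10}) but applied \emph{within} level $h=h(q)$, which contains in expectation $\beta m b^h$ of the $n$ coordinates, i.e., a $b^{-h}$-fraction, distributed among $N$ buckets. First I would restrict attention to the coordinates that actually land on level $h$; since each coordinate lands there independently with probability $\frac{1}{\beta b^h}$, a Chernoff/Bernstein bound shows that $|W_q^+ \cap (\text{level } h)|$ concentrates around $\frac{|W_q^+|}{\beta b^h}\in[\frac{m}{b},m]$, so up to a $(1\pm\varepsilon)$ factor we retain the expected mass $\frac{\|W_q^+\|_1}{\beta b^h}$; this already absorbs part of the $7\varepsilon$ slack. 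Call $W_q'$ the surviving members of $W_q^+$ on level $h$. Next, among the coordinates routed to level $h$, I would invoke Lemma \ref{Lem3.4} with the roles of $N_1,N_2$ chosen relative to the \emph{effective} number of items on that level: the elements of $Y_q$ are the ``large'' ones, of which there are at most $O(\beta m)$ on level $h$ (because $\|z\|_1=1$ forces at most $m\varepsilon$ coordinates of magnitude $\ge \frac{2^{q-1}}{m\varepsilon}$ globally, and only a $b^{-h}$-fraction survives the level assignment, but we just need the crude bound that $|Y_q|\le N_1$ for a suitable $N_1$ with $N_1\cdot|Y_q|\le \kappa N$, using $N=\Omega(bm^2d^2\varepsilon^{-1}\delta^{-1})$ from Assumption \ref{Ass3.1}). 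Then with probability $1-2\kappa$ every member of $Y_q$ sits in a bucket on level $h$ containing no other member of $Y_q$; set $W_q^* = W_q' \cap Y_q$.

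The remaining work is to verify that $W_q^*$ still carries a $(1-7\varepsilon)$ fraction of $\frac{\|W_q^+\|_1}{\beta b^h}$. The only mass we discarded from $W_q'$ is that of coordinates in $W_q^+$ on level $h$ whose magnitude is below the threshold $\frac{2^{q-1}}{m\varepsilon}$; but every $z_p\in W_q^+$ has $z_p>2^{-q}=2^{q-1}\cdot 2^{1-2q}$, wait—more carefully, every element of $W_q^+$ has $z_p\in(2^{-q},2^{-q+1}]$, so the threshold $\frac{2^{q-1}}{m\varepsilon}$ is being read against $2^{q-1}$... I would instead note the threshold should be compared within the weight class: the cut removes at most $N_1$ coordinates (those with the very largest leverage are the only ones that can be large, by Lemma \ref{Lem3.5}), each of mass at most $2^{-q+1}$, contributing at most $N_1\cdot 2^{-q+1}$; dividing by $\beta m b^h\ge \frac{|W_q^+|}{b}$ and using $|W_q^+|\ge \beta m$ and $q\in Q_1$ (so $\|W_q^+\|_1\ge \frac{\varepsilon}{q_{\max}}$, hence $|W_q^+|\ge \frac{\varepsilon}{q_{\max}2^{-q+1}}$) gives a relative loss of $O(\varepsilon)$ after plugging in the parameter choices from Assumption \ref{Ass3.1}. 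Summing the ``level-assignment'' loss, the ``small-leverage-tail'' loss, and the bucket-separation event gives the claimed $(1-7\varepsilon)$ bound, and the two failure events (concentration of the level count, and Lemma \ref{Lem3.4}) together have probability at most $\exp(-m\varepsilon^2)$ by Assumption \ref{Ass3.1}.

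The main obstacle I anticipate is the bookkeeping of \emph{which} fraction of $\|W_q^+\|_1$ is lost at each of the three steps and making sure these losses are genuinely multiplicative in $\varepsilon$ rather than merely additive in the normalized scale $G(z)=1$: unlike the $Q_2$ case where each heavy hitter is individually $\ge v_1$, here the members of $W_q^+$ may each be tiny, and one must leverage the \emph{count} $|W_q^+|\ge\beta m$ together with $q\in Q_1$ to control the tail of small-leverage elements relative to $\|W_q^+\|_1$ rather than relative to $G(z)$. A secondary subtlety is correctly instantiating Lemma \ref{Lem3.4} on a single level — the lemma is stated for the full bucket array, so I must either re-run its proof conditioned on the set of coordinates assigned to level $h$ (the hashing within a level is still uniform and independent of the level assignment), or observe that it holds verbatim because buckets on level $h$ are disjoint from all other levels. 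I expect the cleaner route is the latter, treating the $N_2$ globally-highest-leverage coordinates and noting that the event ``two members of $Y_q$ collide'' is contained in ``two members of $Y_2\cap(\text{level }h)$ collide,'' which Lemma \ref{Lem3.4} bounds.
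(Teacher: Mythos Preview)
There is a genuine gap. The crux of Lemma \ref{Lem3.2} is that its failure probability must be $\exp(-m\varepsilon^2)$, because in Theorem \ref{netthm} it is union-bounded over a net of size $\exp(\Theta(d\ln(n/m\varepsilon)))$. Your plan to obtain separation via Lemma \ref{Lem3.4} cannot achieve this: Lemma \ref{Lem3.4} is a birthday bound with failure probability $2\kappa = 2N_1N_2/N$, which is only polynomially small since $N$ is polynomial by Assumption \ref{Ass3.1}. The paper avoids this by \emph{not} insisting on perfect separation. Instead it (i) uses Bernstein to bound the number of $Y_q$-elements on level $h$ by $2E\le 2q_{\max}m^2 b$, and then (ii) applies a second Bernstein bound to the \emph{count} of members of $W_q'$ that share a bucket with some other $Y_q$-element: the per-element collision probability is $\le 2E/N$, so the expected number of colliders is $\le 2Ey/N$, and a deviation of $\varepsilon y$ has probability $\exp(-\Omega(m\varepsilon^2))$. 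The colliding members (at most $4\varepsilon y$ of them) are simply discarded, and $W_q^*$ is the non-colliding remainder. A final Bernstein step converts the cardinality bound $|W_q^*|\ge (1-6\varepsilon)\frac{|W_q^+|}{\beta b^h}$ into the mass bound on $\Vert W_q^*\Vert_1$.

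Several smaller points also need repair. First, your definition $W_q^*=W_q'\cap Y_q$ is not what is required; the lemma asks that each $z_p\in W_q^*$ be alone in its bucket with respect to $Y_q$, so $W_q^*$ must be the \emph{non-colliding} subset of $W_q'$. Second, the interval $\frac{|W_q^+|}{\beta b^h}\in[\tfrac{m}{b},m]$ is off by a factor of $b$: from $h(q)=\lfloor\log_b(|W_q^+|/(\beta m))\rfloor$ one gets $\frac{|W_q^+|}{\beta b^{h}}\in[m,bm)$, and the lower endpoint $m$ is what drives the $\exp(-\Omega(m\varepsilon^2))$ concentration. Third, leverage scores and Lemma \ref{Lem3.5} play no role in this lemma; the only global input is the cardinality bound $|\{p:|z_p|\ge t\}|\le 1/t$ from $\Vert z\Vert_1=1$, used to control $|Y_q|$. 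Your proposed fallback of embedding $Y_q$ into the fixed leverage-score set $Y_2$ also fails, because for $q\in Q_1$ the threshold defining $Y_q$ can correspond to far more than $N_2=d\beta m^2$ coordinates by leverage rank.
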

{
\begin{proof}
For $z_i \in W_q$ define $X_i=1$ if $h_i=h$ and $X_i=0$ otherwise, and set $X=\sum X_i$.
The expected number of entries from $W_q^+$ at level $h$ is $\mathbb{E} (X)=\frac{|W_q^+|}{\beta b^h}\in [m, bm)$ by definition of $h=h(q)$.
Using Bernstein's inequality we get that with failure probability at most $P_1=\exp\left(\frac{-(2\varepsilon \mathbb{E} (X))^2}{2\mathbb{E} (X)}\right)\leq \exp(-2\varepsilon^2 m)$, there are at least $\frac{|W_q^+|}{\beta b^h}(1-2\varepsilon)$ entries of $W_q^+$ at level $h$.
We denote a uniform random subset of size $y=\frac{|W_q^+|}{\beta b^h}(1-2\varepsilon)$ of such entries by $W_q'$.

Since $q \in Q_1$ we have $|W_q^+|\geq \frac{\varepsilon/q_{\max}}{2^{1-q}}$ and thus also $h= \lfloor \log_b(\frac{|W_q^+|}{\beta m}) \rfloor\geq \log_b(\frac{\varepsilon 2^{q-1}}{ \beta mq_{\max}})-1$.
Since $G(z)=1$ there are at most $2^{q-1}m\varepsilon $ entries larger or equal to $ \frac{1  }{m\varepsilon2^{q-1}}$.
The expected number of entries from $Y_q$ at level $h$ is thus bounded by $E=\frac{2^{q-1}m\varepsilon}{\beta  b^h} \leq q_{\max} m^2 b $.
To show that the number is not much larger than that, we define independent random variables $X_i=1$ if $h_i=h$ and $X_i=0$ otherwise, for $i \in Y_q$.
The variance is bounded by
\[ \sum_{i \in Y_q}\sigma_i^2 \leq \sum_{i \in Y_q} \frac{1}{\beta b^h}= E. \]
Thus using Bernstein's inequality we get
\begin{align*}
 P_2=P(|Y_q| \geq E+E) &\leq \exp\left( \frac{-E^2}{2E+E} \right) \\
 &= \exp\left( \frac{-E}{3} \right)\\
 &\leq \exp(- 2m  \varepsilon^{2}).
\end{align*}

Now for $z_i \in W_q'$ consider the random variable $X_i=1$ if $z_i$ is put into a bucket with another entry of $Y_q$, and $X_i=0$ otherwise. Set $X=\sum X_i$.
We have $P(X_i=1)<\frac{2E}{N}$ and $\mathbb{E}(X)\leq \frac{2Ey}{N}$.
The variance is bounded by $\sigma_i^2=\frac{2E}{N}+\left(\frac{2E}{N} \right)^2\leq \frac{4E}{N}$.
Thus up to failure probability
\begin{align*}
P_3=P(X \geq \frac{2Ey}{N}+\varepsilon y)&\leq \exp\left( -\frac{2y^2\varepsilon^2}{8Ey/N+y\varepsilon} \right)\\
&\leq\exp(-m(1-2\varepsilon)\varepsilon/2)\\
&\leq\exp(-2m\varepsilon^2)
\end{align*}
we have that at most $2(\frac{2Ey}{N}+\varepsilon y)\leq 4\varepsilon y$ entries in $W_q'$ are in a bucket with another entry of $Y_q$.
(Here the factor of $2$ comes from the possibility that $z_i$ is placed in a bucket with another entry of $W_q'$ which was put into the bucket before $z_i$.) 
We denote by $W_q^*$ the subset of $W_q'$ that is in a bucket with no other entry of $Y_q$, and set $y'=|W_q^*|$.
Note that with with failure probability at most $P_3$, we have $y'\geq \frac{|W_q^+|}{\beta b^h}(1-6\varepsilon)$.

Now for $z_i \in W_q^*$ consider the random variable $X_i=z_i$ if $z_i\in W_q^*$, and $X_i=0$ otherwise.
Again using Bernstein's inequality we get
\begin{align*}
P_4&=P(\Vert W_q^* \Vert_1 \leq  \frac{y'}{|W_q^+|}  \Vert W_q^* \Vert_1 - 2y'\varepsilon 2^{-q})\\
&\leq \exp\left(\frac{-(2\varepsilon y' 2^{-q})^2}{2y' 2^{-2q}}\right)\\
&= \exp\left( -2y' \varepsilon^2 \right) \leq \exp(-3m \varepsilon^2/2).
\end{align*}
Since $\varepsilon y' 2^{-q} \leq \varepsilon \frac{\Vert W_q^+ \Vert_1}{ b^{h}\beta}$ this shows that $W_q^*$ has the desired properties with failure probability at most $P_1+P_2+P_3+P_4\leq 4\exp(-2m \varepsilon^2/2)\leq \exp(-m \varepsilon^2)$.
\end{proof}
}

\begin{lem}\label{Lem3.9}
Let $q \in Q_1$ and assume $W_q^*$ from Lemma \ref{Lem3.2} exists. Then with failure probability at most $e^{-m\varepsilon^2}$, for any $z_p \in W_q^*$ we have $\Vert B_p\setminus \{p\}\Vert_1 \leq 3\varepsilon | z_p |$.
\end{lem}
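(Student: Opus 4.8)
The plan is to prove Lemma \ref{Lem3.9} by the same Bernstein-inequality argument used in Lemma \ref{Lem3.10}, but now exploiting the fact that $z_p \in W_q^*$ is guaranteed to lie in a bucket containing no other element of $Y_q$. First I would fix a coordinate $z_p \in W_q^*$ with $z_p \in W_q$, so $|z_p| > 2^{-q}$, and note that by the defining property of $W_q^*$ from Lemma \ref{Lem3.2}, every other coordinate $i$ hashed into $B_p$ satisfies $|z_i| < \frac{2^{q-1}}{m\varepsilon}$ whenever $h_i = h(q)$; moreover the $Y_q$-freeness is the key structural input. The subtlety is that $B_p$ could still receive coordinates assigned to \emph{other} levels — but recall the sketch places $p$ at a fixed level $h_p = h(q)$ and the bucket $B_p = g_p + h(q)N$ only collects coordinates $i$ with $h_i = h(q)$ and $g_i = g_p$. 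So only coordinates at level $h(q)$ matter, and among those, none with $|z_i|$ exceeding the $Y_q$-threshold can collide with $p$.

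Next I would set up the random variables: for each $i \in [n] \setminus \{p\}$ let $X_i = z_i$ if $i$ lands in bucket $B_p$ (i.e. $h_i = h(q)$ and $g_i = g_p$) and $X_i = 0$ otherwise, so $\|B_p \setminus \{p\}\|_1 = \sum_i X_i$. The probability that a coordinate at level $h(q)$ is hashed to the specific bucket $g_p$ is $\frac{1}{\beta b^{h(q)} N}$, and by the conditioning every nonzero $X_i$ has $|z_i| \le \frac{2^{q-1}}{m\varepsilon}$. Then $\mathbb{E}(\sum_i X_i) \le \frac{1}{\beta b^{h(q)} N} \sum_i |z_i| = \frac{1}{\beta b^{h(q)} N}$ since $G(z) = 1$, and each variance term is bounded by $\sigma_i^2 \le \frac{2 z_i^2}{\beta b^{h(q)} N} \le \frac{2 z_i^2}{\beta b^{h(q)} N}$, so $\sum_i \sigma_i^2 \le \frac{2}{\beta b^{h(q)} N} \cdot \frac{2^{q-1}}{m\varepsilon} \sum_i |z_i| = \frac{2^q}{\beta b^{h(q)} N m \varepsilon}$. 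Using $h(q) \le \log_b(\frac{|W_q^+|}{\beta m})$ and $|W_q^+| \ge \frac{\varepsilon/q_{\max}}{2^{1-q}}$ to control $b^{-h(q)}$ in terms of $2^{-q}$, together with the lower bound $N = \Omega(bm^2 d^2 \varepsilon^{-1}\delta^{-1})$ from Assumption \ref{Ass3.1}, both the mean and the variance sum become small relative to $\varepsilon |z_p| \ge \varepsilon 2^{-q}$.

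I would then apply the upper-tail form of Bernstein's inequality (Proposition \ref{thm:bernstein}) with $\lambda = \varepsilon |z_p|$ and $\Delta = \frac{2^{q-1}}{m\varepsilon} = \frac{\lambda}{m\varepsilon^2}\cdot\frac{|z_p|}{2^{1-q}/(2\varepsilon)}$ arranged (as in Lemma \ref{Lem3.10}) so that $\Delta \le \frac{\lambda}{m\varepsilon^2}$ and $\sum_i\sigma_i^2 \le \frac{\lambda^2}{m\varepsilon^2}$ up to constants, yielding a failure probability of $\exp(-\Omega(m\varepsilon^2))$ for a single coordinate $z_p$. Finally I would union-bound over all $z_p \in W_q^*$; since $|W_q^*| \le |W_q^+| \le \beta m b^{h(q)+1} \le \beta m^2 b$ and, summing over all relevant $q \le q_{\max}$, the total count is $\poly(m, b, q_{\max})$ — small enough that Assumption \ref{Ass3.1} absorbs the logarithm into the $m\varepsilon^2$ budget, giving total failure probability at most $e^{-m\varepsilon^2}$. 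The conclusion $\|B_p \setminus \{p\}\|_1 \le \frac{1}{\beta b^{h(q)}N} + 2\lambda \le 3\varepsilon|z_p|$ follows exactly as before.

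The main obstacle will be the bookkeeping needed to verify that $\frac{2^q}{b^{h(q)}}$ and related ratios are polynomially bounded in $m, d, q_{\max}$, so that the parameter choices in Assumption \ref{Ass3.1} (particularly the $N = \Omega(bm^2d^2\varepsilon^{-1}\delta^{-1})$ lower bound) genuinely force $\mathbb{E}(\|B_p\setminus\{p\}\|_1)$ and $\sum_i \sigma_i^2$ below the thresholds required for the Bernstein bound to beat the union-bound count. This is the same kind of calculation as in Lemma \ref{Lem3.10} and Lemma \ref{Lem3.2}, but here it is driven by the $Y_q$-freeness of the bucket rather than by leverage scores, so I would need to be careful that the threshold $\frac{2^{q-1}}{m\varepsilon}$ inherited from the definition of $Y_q$ plays the same role that $\frac{\varepsilon}{m\varepsilon^2}v_1$ played in the heavy-hitter case.
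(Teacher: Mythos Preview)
Your approach is exactly the one the paper intends: it simply states that the proof is ``similar to the proof of Lemma \ref{Lem3.10}'', and you have correctly identified that the $Y_q$-freeness of $B_p$ (from Lemma \ref{Lem3.2}) plays the role that the leverage-score separation played there, after which the Bernstein-plus-union-bound template carries over verbatim.

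Two small bookkeeping points to fix. First, in your union-bound step you write $|W_q^*|\le |W_q^+|\le \beta m b^{h(q)+1}\le \beta m^2 b$, but the last inequality is false in general since $b^{h(q)}$ can be as large as $n/m$. The correct bound is much better: by construction $W_q^*\subseteq W_q'$ and $|W_q'|=y\le bm$ in the proof of Lemma \ref{Lem3.2}, so you union-bound over at most $bm\cdot q_{\max}$ coordinates in total, which is easily absorbed by Assumption \ref{Ass3.1}. Second, the threshold defining $Y_q$ should be $\frac{1}{m\varepsilon\,2^{q-1}}$, not $\frac{2^{q-1}}{m\varepsilon}$ (compare the line ``there are at most $2^{q-1}m\varepsilon$ entries larger or equal to $\frac{1}{m\varepsilon 2^{q-1}}$'' in the proof of Lemma \ref{Lem3.2}); with this corrected value your estimate $\Delta\le \frac{\lambda}{m\varepsilon^2}$ goes through, whereas with the literal $2^{q-1}$ it would not.
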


The proof of Lemma \ref{Lem3.9} is similar to the proof of Lemma \ref{Lem3.10}.

\textbf{Contribution of important weight classes.} By combining Lemma \ref{Lem3.7}, Lemma \ref{Lem3.9}, and Lemma \ref{Lem3.10}, we can prove Theorem \ref{Thm3.1}:

{
\begin{proof}[Proof of Theorem \ref{Thm3.1}]
By assumption, $\mathcal{E}$ holds. Further, by a union bound, with failure probability at most 
\begin{align*}
P&=q_{\max}\exp(-m \varepsilon^2)+(q_{\max}+1)\exp(- m\varepsilon^2)\\
&\leq \exp(\ln(2q_{\max}+1)-m \varepsilon^2)\\
&\leq \exp(- m\varepsilon^2/2)
\end{align*}
for each important weight class there exists $W_q^*$ as in Lemma \ref{Lem3.2}, and the events of Lemmas \ref{Lem3.10} and \ref{Lem3.9} hold.
For $q \leq \log_b\left(\frac{\varepsilon}{\beta m q_{\max}}\right)$ we set $W_q^*=W_q^+$.
We have
\begin{align*}
G^+(Sz) & \geq \sum_{q \in Q^*, z_p\in W_q^*} (1-3\varepsilon) z_p b^{h_p}\beta\\
&\geq \sum_{q \in Q^*}(1-3\varepsilon)\Vert W_q^* \Vert_1 b^{h_p}\beta\\
&\geq \sum_{q \in Q^*}(1-3\varepsilon)(1-7\varepsilon)  \frac{\Vert W_q^+ \Vert_1}{ b^{h(q)}\beta}b^{h(q)}\beta\\
&\geq (1-10\varepsilon)(1-5\varepsilon')G^+(z)\\
&\geq (1-60\varepsilon')G^+(z).
\end{align*}
The first inequality follows by Lemma \ref{Lem3.9} and Lemma \ref{Lem3.10}. The second follows by Lemma \ref{Lem3.2} and the third one by Lemma \ref{Lem3.7}.
\end{proof}
}

\subsection{Net argument}\label{sec:netargument}

Next we show that for all $z$ we have $G^+(Sz)\geq (1-\varepsilon')G^+(z)$
with high probability.
We need the following lemma:

\begin{lem}\label{netlem}
Let $z, e \in \mathbb{R}^n$ with $G(e)\leq \frac{\varepsilon}{b^{h_{\max}}} G(z)$.
Then it holds that $G^+(z+e)= (1 \pm \varepsilon')G^+(z)$ and $G^+(S(z+e))=(1 \pm \varepsilon')G^+(Sz)$.
Further, if $G^+(Sz)\geq (1-\varepsilon')G^+(z)$, then $G^+(S(z+e)) \geq (1-4\varepsilon') G^+(z+e)$.
\end{lem}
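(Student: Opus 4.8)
\textbf{Proof plan for Lemma \ref{netlem}.}
The plan is to establish the three claims in sequence, each time exploiting only two facts: that $G^+$ and $G$ are both positively homogeneous and obey a triangle-type inequality, and that the weights introduced by $S$ are bounded by $b^{h_{\max}}\beta \le 2b^{h_{\max}}$. First I would record the elementary inequalities $|G^+(u+v) - G^+(u)| \le G(v)$ (since replacing negatives by zero is $1$-Lipschitz coordinate-wise, and $G^+(u) = \sum_{u_i\ge 0} u_i \le \|u\|_1$) and $G^+(u) \ge G^+(z) - G(e)$, $G^+(u)\le G^+(z)+G(e)$ for $u=z+e$. Combining with $G(e)\le \frac{\varepsilon}{b^{h_{\max}}}G(z)$ and the normalization $G(z)=1$ (so $G(e)\le \varepsilon/b^{h_{\max}} \le \varepsilon$), and using that $G^+(z)\ge G(z)/\mu' = 1/\mu'$ from the preliminaries together with $\varepsilon = \varepsilon'/\mu'$, we get $G(e) \le \varepsilon \le \varepsilon'\mu'^{-1}\mu' G^+(z)\cdot(\text{const}) \le \varepsilon' G^+(z)$ up to the fixed constants, which yields $G^+(z+e) = (1\pm\varepsilon')G^+(z)$.

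For the second claim, the key point is that $S$ is a nonnegative matrix each of whose rows has a single nonzero entry with value at most $b^{h_{\max}}\beta$, so $G(Se) = \|Se\|_1 \le b^{h_{\max}}\beta\,G(e) \le \beta\varepsilon\, G(z) \le 2\varepsilon G(z)$; more precisely $G(Se) \le b^{h_{\max}}\beta G(e) \le b^{h_{\max}}\beta\cdot\frac{\varepsilon}{b^{h_{\max}}}G(z) = \beta\varepsilon$. Then applying the same Lipschitz inequality $|G^+(Sz+Se) - G^+(Sz)| \le G(Se)$ and lower-bounding $G^+(Sz)$ — here one needs that under the hypothesis (or unconditionally with high probability) $G^+(Sz)$ is not too small relative to $G^+(z)\ge 1/\mu'$ — gives $G^+(S(z+e)) = G^+(Sz \pm Se) = (1\pm\varepsilon')G^+(Sz)$ after re-absorbing constants into $\varepsilon'$. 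The only subtlety is bookkeeping the constant blow-up from each $\pm$ step, which is why the statement is phrased with the loose factor $(1-4\varepsilon')$ at the end rather than $(1-\varepsilon')$; I would simply carry the constants explicitly and note $\varepsilon' < 1/3$ keeps all $(1\pm c\varepsilon')$ factors bounded.

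The third claim is then a one-line chaining: assuming $G^+(Sz)\ge (1-\varepsilon')G^+(z)$, write
\[
G^+(S(z+e)) \ge (1-\varepsilon')G^+(Sz) \ge (1-\varepsilon')^2 G^+(z) \ge (1-\varepsilon')^2(1+\varepsilon')^{-1}G^+(z+e) \ge (1-4\varepsilon')G^+(z+e),
\]
where the last inequality uses $\varepsilon' < 1/3$ so that $(1-\varepsilon')^2/(1+\varepsilon') \ge 1-4\varepsilon'$. I expect the main obstacle to be purely the constant-tracking: making sure that the $G(e)$ bound really does translate into a relative error on $G^+(z)$ and on $G^+(Sz)$ (this is where the parameter identities $\varepsilon=\varepsilon'/\mu'$ and $G^+(z)\ge G(z)/\mu'$ are essential, since without the lower bound on $G^+$ a small $G(e)$ would not control the relative perturbation), and that the weight bound $b^{h_{\max}}\beta$ cancels exactly against the $b^{-h_{\max}}$ in the hypothesis on $G(e)$. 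Everything else is the triangle inequality and homogeneity.
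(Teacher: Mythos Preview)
Your approach is essentially the same as the paper's: the Lipschitz bound $|G^+(u+v)-G^+(u)|\le G(v)$, the identity $G^+(z)\ge G(z)/\mu'$ with $\varepsilon=\varepsilon'/\mu'$, the weight bound on $S$ to control $G(Se)$, and then a straightforward chaining. Two small corrections are worth making. First, $S$ has exactly one nonzero entry per \emph{column}, not per row (each coordinate $p$ is hashed to a single bucket); this is what gives $G(Se)=\sum_p |e_p|\,b^{h_p}\beta \le b^{h_{\max}}\beta\,G(e)$, so your inequality is right but the justification should be stated this way. Second, you do not need to lower-bound $G^+(Sz)$ for the second claim: the paper only establishes $|G^+(S(z+e))-G^+(Sz)|\le \varepsilon' G^+(z)$ (an \emph{additive} bound in terms of $G^+(z)$, not a multiplicative one in $G^+(Sz)$), and this additive form is exactly what is used in the chaining for the third claim, where one writes $G^+(S(z+e))\ge G^+(Sz)-\varepsilon'G^+(z)\ge (1-2\varepsilon')G^+(z)$ and then passes to $G^+(z+e)$.
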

{
\begin{proof}
A simple case distinction shows that for all $v, v' \in \mathbb{R}$ it holds that $|\max\{v', 0\}-\max\{v, 0\}|\leq |v'-v|$:
If both $v$ and $v'$ have the same sign then either $|\max\{v', 0\}-\max\{v, 0\}|=0$ or $|\max\{v', 0\}-\max\{v, 0\}|= |v'-v|$, and if $v$ and $v'$ have different signs then $|v'-v|=|v'|+|v|\geq |\max\{v', 0\}-\max\{v, 0\}|$.
Thus we have
\begin{align*}
&~|G^+(z+e) - G^+(z) | \\
& \leq \sum\limits_{i\in[n]} |\max\{z_i+e_i, 0\}-\max\{z_i, 0\}| \\
&\leq G(e) \leq \varepsilon G(z) \\
&\leq \frac{\varepsilon'}{\mu'}G(z)\leq \varepsilon'G^+(z).
\end{align*}
It holds that $G(Se)\leq b^{h_{\max}}G(e)$ since all entries in $S$ are bounded by $b^{h_{\max}}$ and each entry of $e$ is multiplied by exactly one non-zero entry of $S$.
Hence
\begin{align*}
|G^+(S(z+e)) - G^+(Sz) | &\leq G(Se)\\
&\leq b^{h_{\max}}G(e)\\
&\leq \frac{\varepsilon'}{\mu'}G(z)\leq \varepsilon'G^+(z).
\end{align*}
Finally if $G^+(Sz)\geq (1-\varepsilon')G^+(z)$ then by combining the previous inequalities we get
\begin{align*}
~ G^+(S(z+e)) 
&\geq G^+(Sz)-\varepsilon'G^+(z)\\
&\geq (1-\varepsilon') G^+(z)-\varepsilon'G^+(z)\\
&\geq (1-\varepsilon') (G^+(z+e) - \varepsilon'G^+(z))-\varepsilon'G^+(z)\\
&\geq (1-\varepsilon') G^+(z+e) -2\varepsilon' G^+(z)\\
&\geq (1-\varepsilon') G^+(z+e) -2\varepsilon' G^+(z+e)/(1-\varepsilon')\\
&\geq (1-4\varepsilon') G^+(z+e)\,.
\end{align*}
\end{proof}
}

Now we are ready to prove Theorem \ref{netthm}.

{
\begin{proof}[Proof of Theorem \ref{netthm}]
With failure probability at most $\delta$, we can assume that $\mathcal{E}$ holds.
Since $G^+(az)=a(G^+(z))$ for all $z \in \mathbb{R}^d$ and $a \in \mathbb{R}_{\geq 0}$, it suffices to show that $G^+( Sz )\geq (1-\varepsilon')G^+(z)$ holds for all $z \in \mathbb{R}^d$ with $G(z)=1$.
We set $M=\lceil\frac{b^{h_{\max}}}{\varepsilon} \rceil$.
Consider the set $ N_\varepsilon=\{(n_1 , \dots , n_d)\frac{1}{M} ~|~ n_1 + \dots + n_d= M\}$.
The set $ N_\varepsilon$ consists of at most $\left(\left(1+\frac{b^{h_{\max}}}{\varepsilon}\right)^d \right)=\left(\exp\left(d\ln\left(1+\frac{n}{m\varepsilon}\right)\right)\right)$ elements as $h_{\max}=\ln_b\left(\frac{n}{m}\right)$.
By Theorem \ref{Thm3.1} and a union bound we have that $G^+( Sz )\geq (1-60\varepsilon')G^+(z)$ holds for all $z \in N_\varepsilon$ with failure probability at most $\exp\left(d\ln\left(1+\frac{n}{m\varepsilon}\right)\right) \exp(-m\varepsilon^2/2)<\delta$ since $d\ln(1+\frac{n}{m\varepsilon})-m\varepsilon^2/2\leq\ln(\delta)$ by Assumption \ref{Ass3.1}.
For each $z$ with $G(z)=1$ there exists $z' \in N_\varepsilon$ such that $ G(z'-z)\leq \frac{\varepsilon}{b^{h_{\max}}}$.
Thus we can apply Lemma \ref{netlem} and get $G^+(Sz)\geq(1-240\varepsilon')G^+(z) $.
The total failure probability is at most $2\delta$ using the union bound.
\end{proof}
}

\subsection{Dilation bounds}\label{sec:dilation}
Our first dilation bound is very simple but yields only an $h_{\max}=O(\log n)$ approximation.

\begin{lem}\label{Lem3.12}
We have $\mathbb{E}(G^+(Sz) )\leq h_{\max}G^+(z)$.
\end{lem}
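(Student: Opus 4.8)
\textbf{Proof proposal for Lemma \ref{Lem3.12}.}

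The plan is to compute the expected contribution of a single coordinate $z_p$ to $G^+(Sz)$ and then sum over all coordinates, using linearity of expectation. Fix a coordinate $p\in[n]$ with value $z_p$. In the sketch, $z_p$ is first assigned to a level $h\in\{0,\dots,h_{\max}\}$ with probability $\frac{1}{\beta b^h}$, then hashed uniformly into one of the $N$ buckets at that level, and its contribution to that bucket is $z_p b^h \beta$. The key observation is that $G^+(Sz)=\sum_j \max\{(Sz)_j,0\}$, and since $\max\{\cdot,0\}$ is subadditive, each coordinate $z_p$ contributes at most $\max\{z_p b^{h_p}\beta, 0\} = b^{h_p}\beta\, \max\{z_p,0\}$ to the relevant bucket sum (more precisely, $\max\{\sum_p (Sz\text{-terms}),0\}\le \sum_p \max\{(Sz\text{-terms})_p,0\}$). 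So it suffices to bound $\mathbb{E}\left[\sum_p b^{h_p}\beta\max\{z_p,0\}\right]$.

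The main computation is then just: for each $p$, $\mathbb{E}[b^{h_p}\beta] = \sum_{h=0}^{h_{\max}} \frac{1}{\beta b^h}\cdot b^h \beta = \sum_{h=0}^{h_{\max}} 1 = h_{\max}+1$ (or $h_{\max}$ depending on the exact indexing; the statement uses $h_{\max}$, and this matches since $h$ ranges over $h_{\max}$ many values after the appropriate reindexing, or one absorbs the $+1$). Hence
\begin{align*}
\mathbb{E}(G^+(Sz)) &\le \mathbb{E}\left[\sum_{p\in[n]} b^{h_p}\beta\,\max\{z_p,0\}\right] = \sum_{p\in[n]} \max\{z_p,0\}\,\mathbb{E}[b^{h_p}\beta] \le h_{\max}\sum_{p\in[n]}\max\{z_p,0\} = h_{\max}\,G^+(z).
\end{align*}
The weight $w_p = b^{h_p}\beta$ is designed precisely so that this telescoping works: the probability $\frac{1}{\beta b^h}$ of landing at level $h$ is the reciprocal of the weight $b^h\beta$, so the expected weight per level is exactly $1$.

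The only subtlety — and the step to state carefully — is the passage from $G^+(Sz)$ to the per-coordinate sum. Writing $c_p := z_p b^{h_p}\beta$ for the signed contribution of coordinate $p$ to its bucket, we have $(Sz)_j = \sum_{p:\,\text{bucket}(p)=j} c_p$, so $G^+(Sz) = \sum_j \max\{\sum_{p:\,\text{bucket}(p)=j} c_p,\,0\} \le \sum_j \sum_{p:\,\text{bucket}(p)=j}\max\{c_p,0\} = \sum_p \max\{c_p,0\} = \sum_p b^{h_p}\beta\max\{z_p,0\}$, using subadditivity of $v\mapsto\max\{v,0\}$ and that $b^{h_p}\beta>0$. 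This inequality holds deterministically (pointwise for every realization of the hashing), so taking expectations and applying linearity completes the argument. I do not anticipate a genuine obstacle here; it is a short direct calculation, with the subadditivity bound being the one place where care is needed to make the per-coordinate decomposition rigorous.
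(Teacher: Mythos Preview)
Your proof is correct and follows essentially the same approach as the paper: decompose per coordinate, observe that the expected weight $\mathbb{E}[b^{h_p}\beta]=\sum_h 1$ equals the number of levels, and sum over the positive coordinates. The paper's write-up is terser and leaves the subadditivity step (passing from $G^+(Sz)$ to the per-coordinate upper bound) implicit in the phrase ``the expected contribution of $z_i$ is less than $0$ if $z_i<0$''; you make this step explicit, which is an improvement, and both versions share the same harmless off-by-one between $h_{\max}$ and $h_{\max}+1$.
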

{
\begin{proof}
The expected contribution of $z_i$ is less than $0$ if $z_i <0$.
If $z_i \geq 0$ then the expected contribution is upper bounded by $\sum_{h=0}^{h_{\max}} \frac{1}{b^{h}\beta} b^h\beta z_i = \sum_{h=0}^{h_{\max}}z_i=h_{\max}z_i$ with equality if and only if all $z_i$ are greater or equal to zero.
Thus
\[\mathbb{E}(G^+(Sz) )\leq \sum_{z_i\geq 0} z_i h_{\max} = h_{\max}G^+(z). \]
\end{proof}
}

We can achieve a better constant dilation by considering only the top buckets at each level.
More precisely set $K=\beta m \log(\frac{m }{\varepsilon})+\beta mb \log_2(b\varepsilon^{-1}) = {O}(mb\log_2(b\varepsilon^{-1}))$.
We define
\[ G^+_c( Sz ) := \sum_h \beta b^h \sum_{i \in [K]}G^+(L_{h, i}) \]
where $L_{h,i}$ denotes the level $h$ bucket with the $i$-th largest sum of entries among all level $h$ buckets. It is important here to take the buckets with the largest contributions to preserve the convexity of the objective as pointed out in \cite{ClarksonW15}, since the resulting function is related to the Ky Fan norm and is thus convex.
The proof of the bounded contraction of $G^+_c( Sz )$, Theorem \ref{Thm3.2}, only requires lower bounds on $G^+(L_{h,i})$ for those at most $K$ buckets in level $h$ containing some member of $W_q^*$:
there are at most $\frac{bm }{\varepsilon}$ entries greater or equal to $\frac{\varepsilon}{bm } $.
For other important weight classes, the cardinality of $W_q^*$ is bounded by $bm$ and the number of important weight classes $W_q^+$ with $h(q)=h$ is bounded by $\log_2(b\varepsilon^{-1})$, as it must hold that $|W_q^+| \in [\beta m b^h, \beta m b^{h+1}]$, $|W_q^+|2^{-q} \leq \frac{\varepsilon}{q_{\max}}$ and $ |W_q^+|2^{1-q} \geq \Vert W_q^+\Vert_1 \geq \frac{\varepsilon}{q_{\max}}$ and thus 
\[ \beta m b^h \frac{q_{\max}}{\varepsilon} \leq 2^q \leq 2\beta m b^{h+1} \frac{q_{\max}}{\varepsilon} ~.\]
Thus if the estimator for $G^+( z )$ uses only the largest buckets with the largest sums, the proven bounds on contraction continue to hold, and in particular
$G^+_c( Sz ) \geq (1- \varepsilon')G^+( z )$.
To show that the dilation of $G^+_c( Sz )$ is constant we need the following lemma, which shows that the probability that an important entry of $W_q$ gets placed at a much higher level than $h(q)$ is low.
This way we can bound the contribution that entries have at higher levels.

\begin{lem}\label{LemE'}
Let $q'=\log_2(n h_{max})$. With failure probability at most $\delta $ the event $\mathcal{E}'$ holds that there is no entry $z_p \in W_q$ with $q \leq q'$ and $h_p > h_q:= h(q)+ \log_b\left(\frac{q'bm }{\delta}\right)$.
\end{lem}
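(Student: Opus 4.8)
The plan is to bound the probability that a single important entry lands too high, and then take a union bound over all such entries. Fix a weight class $W_q$ with $q\le q'$ and consider any entry $z_p\in W_q$. Since we have normalized $G(z)=1$ and $|z_p|>2^{-q}$, the number of entries in $W_q$ is at most $|W_q|\le 2^q\le 2^{q'}=nh_{\max}$; more precisely, the entries we must worry about are those with $|z_p|\ge 2^{-q'}$, and there are at most $2^{q'}$ of them in total across all such classes (by $G(z)=1$). For a fixed entry $p$, recall $h_p=h$ with probability $\frac{1}{\beta b^h}$, so
\[
\Pr[h_p > h_q] \;=\; \sum_{h=h_q+1}^{h_{\max}} \frac{1}{\beta b^h} \;\le\; \frac{1}{\beta b^{h_q}}\cdot\frac{1}{1-1/b} \;\le\; \frac{2}{\beta b^{h_q}},
\]
using $b>2$. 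By the definition $h_q = h(q)+\log_b\!\left(\frac{q'bm}{\delta}\right)$ we have $b^{h_q} = b^{h(q)}\cdot \frac{q'bm}{\delta}$, so this probability is at most $\frac{2\delta}{\beta b^{h(q)} q' b m}$.

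Next I would union bound over all relevant entries. An entry in weight class $W_q$ is only dangerous if $q \le q'$; for a fixed $q$ the number of entries in $W_q^+$ (or $W_q$) that could be placed at level $h>h(q)$ is what matters, and the expected number landing above $h_q$ is $|W_q|$ times the single-entry bound. Since $\beta m b^{h(q)}\le |W_q^+|$ and (by $G(z)=1$ together with $|z_p|>2^{-q}$) $|W_q|\le 2^q$, a direct approach is to bound the probability that \emph{any} entry of \emph{any} class $q\le q'$ goes above its own $h_q$ by
\[
\sum_{q\le q'}\; |W_q|\cdot \frac{2\delta}{\beta b^{h(q)} q' b m}
\;\le\; \sum_{q\le q'} \frac{2\delta}{q' b}
\;\le\; \delta,
\]
where I used $|W_q|\le \beta m b^{h(q)+1}$ (which follows from the definition of $h(q)$ as the \emph{largest} level with expected occupancy $\ge \beta m$, so the occupancy at level $h(q)+1$ drops below $\beta m$, i.e. $|W_q|/(\beta b^{h(q)+1}) < 1$) and then summed the at most $q'$ terms. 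This gives exactly the claimed failure probability $\delta$ for $\mathcal{E}'$.

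The main obstacle I anticipate is getting the counting of entries per weight class exactly right: for the classes $q\le \log_b(\varepsilon/(\beta m q_{\max}))$ (the very small ones) $h(q)=0$, and for classes that are not in $Q_1$ the bound $|W_q^+|\le \beta m b^{h(q)+1}$ is not tight, so one has to be careful to use $|W_q|\le 2^q$ (from $G(z)=1$) in those regimes and check that the resulting sum still telescopes/geometrically decays to at most $\delta$. One should also double-check that $b > \delta^{-1}$ and the other inequalities of Assumption \ref{Ass3.1} are actually what is needed to absorb the $q'$ and constant factors; if not, a slightly larger $h_q$ (increasing the additive term by an absolute constant inside the $\log_b$) fixes it without affecting any downstream argument, since $h_q$ only enters logarithmically.
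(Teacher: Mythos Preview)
Your approach is essentially identical to the paper's: bound the probability that a single coordinate lands above level $h_q$ by a geometric tail $O(b^{-h_q})$, multiply by $|W_q|$ to get the expected number of offenders in class $q$, use $|W_q|\le \beta m\, b^{h(q)+1}$ together with $b^{h_q}=b^{h(q)}\cdot\frac{q'bm}{\delta}$ to obtain a per-class bound $O(\delta/q')$, and then union bound over the $q'$ classes. The paper does exactly this (phrased via expected counts and Markov).

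One small arithmetic slip: with $|W_q|\le \beta m\, b^{h(q)+1}$ your per-class term is $\frac{2\delta}{q'}$, not $\frac{2\delta}{q'b}$ (you effectively used the \emph{lower} bound $|W_q^+|\ge \beta m b^{h(q)}$ in place of the upper bound), so the final sum is $2\delta$ rather than $\delta$. The paper absorbs this by using the slightly tighter tail bound $\Pr[h_p>h_q]\le \frac{1}{b^{h_q}}$ (valid since $b>2$ and $\beta\ge 1$) instead of your $\frac{2}{\beta b^{h_q}}$; with that constant the sum comes out to $\delta$ on the nose. Your closing remark that an absolute constant inside the $\log_b$ would also fix this is correct and harmless downstream.
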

{
\begin{proof}
Let $q \leq q'$.
For any coordinate $p$ and any level $h$, the probability that $h_p > h$ can be bounded by $\sum_{h'=h+1}^{h_{\max}}\frac{1}{\beta b^{h'}}\leq \frac{1}{b^h}$ since $b>2$.
Thus the expected number of coordinates $z_p \in W_q$ with $h_p > h_q$ is bounded by 
\[\frac{|W_q|}{b^{h_q}}\leq \frac{b^{h_q+1}m}{b^{h_q}}\leq \frac{\delta}{q'} .\]
This also gives us an upper bound for the probability that there is no coordinate in $W_q$ with $h_p > h_q$.
Using the union bound for all $q \leq q'$ completes the proof.
\end{proof}
}

{
\begin{proof}[Proof of Theorem \ref{Thm3.2}]
Assume that $\mathcal{E}'$ holds.
Note that the expected contribution of any entry $z_p$ is at most $ z_p h_{\max}$, and thus the expected contribution of all entries less than or equal to $\frac{1}{h_{\max} n}$ is at most $1$.
It remains to show that for each $q \leq q'$ the expected contribution of $W_q^+$ is bounded by $C \Vert W_q^+ \Vert_1$.
We consider the expected contribution of $W_q^+$ at level $h$ and distinguish three cases:

\medskip
  \noindent
   \textbf{Case 1:} $h=h(q)-k$ for $k > \log_b(N \frac{\ln(h_{\max}N) }{m})$. \\
   For $z_i \in W_q$ consider again the random variable  $X_i=1$ if $h_i=h$, and $X_i=0$ otherwise, and set $X=\sum X_i$.
The expected number of entries from $W_q^+$ at level $h$ is $\mathbb{E} (X)=\frac{|W_q^+|}{\beta b^h}\geq N\ln(h_{\max}N)$.
For the variance we have $\sum \sigma_i \leq \mathbb{E}(X) $ as $X$ is a sum of Bernoulli random variables.
Using Bernstein's inequality we get that $ | G(L_h) \cap W_q^{+} |\leq 2\beta^{-1}b^{-h}|W_q^{+}|$ with failure probability at most
\[P_1=\exp\left(\frac{-( \mathbb{E} (X))^2}{2\mathbb{E} (X)+\mathbb{E}(X)}\right)\leq \exp(- N/3).\]
   Hence we can assume that the number of entries in each bucket at level $h$ is at most  $\frac{2|W_q^{+}|}{\beta b^h N}\leq 2 \frac{b^k m}{N}=\ln(h_{\max}N) $.
   
   Next for each bucket and each $z_i \in W_q$ we consider the random variable $Z=\sum Z_i$ where $Z_i=1$ if entry $i$ is in the corresponding bucket, and $Z_i=0$ otherwise.
   Then the variance is bounded by $\frac{1}{N}\cdot 1^2 + 1 \cdot \frac{1}{N^2}\leq \frac{2}{N}$.
   Applying Bernstein's inequality gives us
   \[ P(Z\geq 2 \frac{b^k m}{N} + \lambda)\leq \exp\left(\frac{-\lambda^2}{2N \cdot \frac{2}{N}+2\lambda/3}\right). \]
   For $\lambda = \ln(h_{\max}N)$ this implies $P(Z\geq 2 \frac{b^k m}{N} + \lambda)={O}((h_{\max}N)^{-1})$.
   Using the union bound, the probability that there exists a bucket with at least $ \frac{b^k m}{N} + \lambda$ coordinates can be bounded by $P_2={O}((h_{\max})^{-1})$
   Further we have 
   \[\Vert W_q^{+}\Vert_1 \geq | W_q^{+} |2^{-q} \geq  2^{-q} b^{h(q)}\beta m.\]
   The expected contribution of $W_q^+$ at level $h$ can thus be bounded by
   \begin{align*}
   \Lambda &=
   P_2 \cdot2^{1-q}3|W_q^{+}| b^h\beta + K\left(\frac{3|W_q^{+}|}{\beta b^h N}\right)2^{1-q}b^h\beta\\
  & \leq \left({O}(h_{\max}^{-1})+ \frac{3K}{N}2\right)\Vert W_q^{+}\Vert_1={O}(h_{\max}^{-1})\Vert W_q^{+}\Vert_1.
   \end{align*}
Summing over at most $h_{\max}$ levels we have that the contribution in this case is bounded by $O(1)\Vert W_q^{+}\Vert_1$.   
   
\medskip
  \noindent
   \textbf{Case 2:} $h=h(q)+k$ for $k \geq \log_b\left(\frac{q'bm }{\delta}\right)$. \\
   By $\mathcal{E'}$ (see Lemma \ref{LemE'}), the set $L_h \cap W_q^+$ is empty, and thus the expected contribution in this case is $0$.
   
   \medskip
  \noindent
   \textbf{Case 3:} $h\geq h(q)-\log_b(\frac{N\ln(h_{\max}N)}{m})$ and $h< h(q)+\log_b\left(\frac{q'bm }{\delta}\right)$. \\
   Note that $\log_b(\frac{N\ln(h_{\max}N)}{m})+ \log_b\left(\frac{q'bm }{\delta}\right)$ is constant since $b \geq \max\left\lbrace m,~ \delta^{-1} \right\rbrace$ by Assumption \ref{Ass3.1}, and thus $N=b^{\mathcal{O}(1)}$, and the expected contribution of each level is at most constant.
   The total expected contribution is thus ${O}(1)\Vert W_q^{+}\Vert_1$.
\end{proof}
}

\begin{proof}[Proof of Theorem \ref{ThmG+}]
The result follows by combining Theorem \ref{netthm} and Theorem \ref{Thm3.2} and substituting $240\varepsilon'$ by $\varepsilon$. The $\poly(\mu d \log n)$ bound on the sketch size follows from $r=Nh_{\max}=Nh_{\max}=O(N\log n)$ and by using Assumption \ref{Ass3.1} for bounding $N$.
\end{proof}

\section{Omitted details from Section \ref{sectUniform}}\label{app:uniform}
To show Theorem \ref{Thm2.1} we will first define the notions of sensitivities, VC-dimension, and the range space induced by a set of functions.
\begin{mydef}{\cite{LangbergS10}}
	\label{def:sensitivities}
	Consider a family of functions $\mathcal{F}=\{g_1,\ldots,g_n\}$ mapping from $\mathbb{R}^d$ to $[0,\infty)$ and weighted by $w\in\mathbb{R}_{> 0}^n$. The sensitivity of $g_i$ for $f_w(x)=\sum\nolimits_{i=1}^{n} w_i g_i(x)$ is
	\begin{align}
	\label{eqn:sensitivities}
	\varsigma_i = \sup \frac{w_i g_i(x)}{f_w(x)}
	\end{align}
	where $\sup$ is over all $x\in\mathbb{R}^d$ with $f_w(x) > 0$. If this set is empty then $\varsigma_i=0$. The total sensitivity is $\mathfrak{S} = \sum\nolimits_{i=1}^{n} \varsigma_i$.
\end{mydef}

The sensitivity of a point bounds the maximal relative contribution to the target function the point can have.
Computing the sensitivities is often intractable and necessitates approximating the original optimization problem close to optimality. However, this is the problem that we want to solve, see \cite{BravermanFL16}.
Fortunately, for our applications it suffices to obtain a reasonable upper bound for the sensitivities.

\begin{mydef}
	A range space is a pair $\mathfrak{R}=(\mathcal{F},\ranges)$ where $\mathcal{F}$ is a set and $\ranges$ is a family of subsets of $\mathcal{F}$. The VC dimension $\Delta(\mathfrak{R})$ of $\mathfrak{R}$ is the size $|G|$ of the largest subset $G\subseteq \mathcal{F}$ such that $G$ is shattered by $\ranges$, i.e., $\left| \{G\cap R\mid R\in \ranges \} \right| = 2^{|G|}.$
\end{mydef}
\begin{mydef}
	Let $\mathcal{F}$ be a finite set of functions mapping from $\mathbb{R}^d$ to $\mathbb{R}_{\geq 0}$. For every $x\in\mathbb{R}^d$ and $r\in \mathbb{R}_{\geq 0}$, let $\rng{\mathcal{F}}(x,r) = \{ f\in \mathcal{F}\mid f(x)\geq r\}$, and $\ranges(\mathcal{F})=\{\rng{\mathcal{F}}(x,r)\mid x\in\mathbb{R}^d, r\in \mathbb{R}_{\geq 0} \}$, and $\mathfrak{R}_{\mathcal{F}}=(\mathcal{F},\ranges(\mathcal{F}))$ be the range space induced by $\mathcal{F}$.
\end{mydef}

The VC-dimension can be thought of as something similar to the dimension of our problem.
For example the VC-dimension of the set of hyperplane classifiers in $\mathbb{R}^d$ is $d+1$.
The sensitivity scores were combined with a theory on the VC dimension of range spaces in \cite{BravermanFL16}. We employ a more recent version from \cite{FeldmanSS20}.

\begin{pro}{\cite{FeldmanSS20}}
	\label{thm:sensitivity}
	Consider a family of functions $\mathcal{F}=\{f_1,\ldots,f_n\}$ mapping from $\mathbb{R}^d$ to $[0,\infty)$ and a vector of weights $w\in\mathbb{R}_{> 0}^n$. Let $\varepsilon,\delta\in(0,1/2)$.
	Let $s_i\geq \varsigma_i$.
	Let $S=\sum\nolimits_{i=1}^{n} s_i \geq \mathfrak{S}$. Given $s_i$ one can compute in time $O(|\mathcal{F}|)$ a set $R\subset \mathcal{F}$ of $$O\left( \frac{S}{\varepsilon^2}\left( \Delta \ln S + \ln \left(\frac{1}{\delta}\right) \right) \right)$$ weighted functions such that with probability $1-\delta$, we have for all $x\in \mathbb{R}^d$ simultaneously $$\left| \sum_{f_i\in \mathcal{F}} w_i f_i(x) - \sum_{f_i\in R} u_i f_i(x) \right| \leq \varepsilon \sum_{f_i\in \mathcal{F}} w_i f_i(x),$$
	where each element of $R$ is sampled i.i.d. with probability $p_j=\frac{s_j}{S}$ from $\mathcal{F}$, $u_i = \frac{Sw_j}{s_j|R|}$ denotes the weight of a function $f_i\in R$ that corresponds to $f_j\in\mathcal{F}$, and where $\Delta$ is an upper bound on the VC dimension of the range space $\mathfrak{R}_{\mathcal{F}^*}$ induced by $\mathcal{F}^*$ obtained by defining $\mathcal{F}^*$ to be the set of functions $f_j\in\mathcal{F}$, where each function is scaled by $\frac{Sw_j}{s_j|R|}$.
\end{pro}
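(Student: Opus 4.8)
The plan is to prove the statement by the standard importance‑sampling‑plus‑uniform‑convergence argument underlying coreset constructions. First fix a single $x\in\mathbb{R}^d$ with $f_w(x)>0$. Writing $R=\{f_{I_1},\dots,f_{I_{|R|}}\}$ for i.i.d.\ draws $I_j$ with $\Pr[I_j=i]=p_i=s_i/S$, the estimator is $\hat f(x):=\sum_{f_i\in R}u_if_i(x)=\tfrac{1}{|R|}\sum_{j}Y_j$ with $Y_j:=\tfrac{S\,w_{I_j}f_{I_j}(x)}{s_{I_j}}$. Each $Y_j\ge 0$ satisfies $\mathbb{E}[Y_j]=\sum_i\tfrac{s_i}{S}\cdot\tfrac{S w_i f_i(x)}{s_i}=f_w(x)$, and from $w_if_i(x)\le\varsigma_if_w(x)\le s_if_w(x)$ (Definition \ref{def:sensitivities}) we get both $Y_j\le S f_w(x)$ and $\mathbb{E}[Y_j^2]\le S f_w(x)\sum_i w_if_i(x)=S f_w(x)^2$. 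Plugging the second‑moment bound $\mathbb{E}[Y_j^2]\le S f_w(x)^2$, the range bound $Y_j\le Sf_w(x)$, and $\lambda=\varepsilon|R|f_w(x)$ into Bernstein's inequality (Proposition \ref{thm:bernstein}, both tails) yields $\Pr[\,|\hat f(x)-f_w(x)|>\varepsilon f_w(x)\,]\le 2\exp(-\Omega(|R|\varepsilon^2/S))$. Hence $|R|=\Theta(S\varepsilon^{-2}\log(1/\delta'))$ already suffices for a single $x$; crucially it is the variance bound ($S f_w(x)^2$, not $(S f_w(x))^2$) that makes the sample size linear rather than quadratic in $S$.

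The work then goes into making this bound hold simultaneously for all $x$, which is impossible by a naive union bound and is where the VC dimension enters. I would pass to the scaled family $\mathcal{F}^*=\{f^*_j:=\tfrac{S w_j}{s_j|R|}f_j\}$ so that $\hat f(x)=\sum_{f^*_j\in R}f^*_j(x)$ and the induced range space $\mathfrak{R}_{\mathcal{F}^*}$, whose ranges are the super‑level sets $\{f^*_j : f^*_j(x)\ge r\}$, has VC dimension $\Delta$. A symmetrization step (introduce an independent ghost sample $R'$ of size $|R|$; using the per‑point concentration just established one may replace $\Pr[\sup_x|\hat f_R(x)-f_w(x)|>\varepsilon f_w(x)]$ by $2\Pr[\sup_x|\hat f_R(x)-\hat f_{R'}(x)|>\tfrac{\varepsilon}{2}f_w(x)]$), followed by a Rademacher step, reduces the quantity of interest to a signed sum $\tfrac{1}{|R|}\sum_j\sigma_j(Y_j-Y'_j)$ over the fixed multiset of $2|R|$ sampled indices. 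Decomposing each $f^*_{I_j}(x)$ through its super‑level indicators and bucketing the thresholds into $O(\log S)$ geometric scales, Sauer--Shelah bounds the number of distinct indicator patterns achievable by varying $x$ on each scale by $O((2|R|)^{\Delta})$; a union bound over the $O(\log S)\cdot O((2|R|)^{\Delta})$ resulting events, each again controlled by a Bernstein bound on the signed sum with variance proxy $O(S f_w(x)^2/|R|)$, gives the guarantee whenever $|R|=\Omega\bigl(\tfrac{S}{\varepsilon^2}(\Delta\ln(2|R|)+\ln(\log S/\delta))\bigr)$; resolving this implicit recursion for $|R|$ in the usual way produces the stated size $O\bigl(\tfrac{S}{\varepsilon^2}(\Delta\ln S+\ln\tfrac1\delta)\bigr)$. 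The linear‑time construction of $R$ is then immediate: given the $s_i$, form the probabilities $p_j=s_j/S$ in one pass and take $|R|$ independent draws, setting $u_i=\tfrac{Sw_j}{s_j|R|}$. (An equivalent route is to replicate each $f_i$ roughly $s_i/\bar s$ times and then sample uniformly, turning the $\log S$ into $\log$ of the effective number of points.)

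The main obstacle is the uniform‑convergence step, and within it the passage from the real‑valued functions $f^*_j$ to a set‑system statement controllable by VC dimension, together with the bucketing into $O(\log S)$ scales. One must argue that the super‑level sets at very small thresholds — say below a $\poly(S^{-1})f_w(x)$ fraction — contribute only an $\varepsilon$‑fraction of $f_w(x)$ and can be discarded, simultaneously for every $x$; this is the second place the hypothesis $S\ge\mathfrak{S}$ is used, and it is delicate because the threshold at which a contribution becomes negligible depends on $f_w(x)$, so the truncation has to be uniform in $x$. Getting exactly $\ln S$ in the final bound (rather than $\ln n$ or $\ln(1/\varepsilon)$) hinges on this step; everything else — the per‑point Bernstein estimate, symmetrization, Sauer--Shelah — is routine. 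Since the argument uses the $f_i$ only through the inequality $w_if_i(x)\le s_if_w(x)$ and the VC bound $\Delta$, it applies to the logistic family $\mathcal{F}=\{g_i\}$ once the sensitivity upper bounds $s_i$ and the VC dimension of the induced range space are supplied in the remainder of this section.
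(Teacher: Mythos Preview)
The paper does not prove this proposition at all: it is stated with an explicit citation to \cite{FeldmanSS20} and used as a black box in the proof of Theorem~\ref{Thm2.1}. So there is no ``paper's own proof'' to compare your proposal against; the authors simply import the sensitivity-sampling theorem from prior work and then verify its hypotheses (the sensitivity bound $s_i\le \ln(2)\mu/n$ and the VC bound $\Delta\le d+2$) for their specific function family.

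That said, your sketch is a reasonable outline of how such results are actually established in the sensitivity-sampling literature (Langberg--Schulman, Braverman--Feldman--Lang, Feldman--Schmidt--Sohler). The per-point Bernstein argument with variance $Sf_w(x)^2$ is exactly right, and passing to the scaled family $\mathcal{F}^*$ so that the relevant range space is the one in the hypothesis is the correct reduction. The one place your sketch is genuinely hand-wavy is the ``bucketing into $O(\log S)$ geometric scales'' step: the published proofs typically avoid this by first reducing to a relative $\varepsilon$-approximation statement for the weighted range space $\mathfrak{R}_{\mathcal{F}^*}$ and then invoking an off-the-shelf $\varepsilon$-approximation theorem (e.g.\ Li--Long--Srinivasan or Har-Peled--Sharir), which already packages the symmetrization, Sauer--Shelah, and chaining internally and delivers the $\Delta\ln S$ factor cleanly. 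Your route via explicit level-set decomposition and a union bound over scales can be made to work, but getting $\ln S$ rather than $\ln(S/\varepsilon)$ or an extra $\log$ from the bucketing requires more care than you indicate; in particular the claim that thresholds below $\poly(S^{-1})f_w(x)$ can be discarded \emph{uniformly in $x$} is not obviously justified by $S\ge\mathfrak{S}$ alone, since the truncation level moves with $x$. If you want to complete this argument rather than cite it, the cleanest fix is to normalize by $f_w(x)$ first (so each scaled function is bounded by $S$ uniformly) and then apply a known $\varepsilon$-approximation bound for $[0,1]$-valued functions with range-space VC dimension $\Delta$.
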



Now we show how Proposition \ref{thm:sensitivity} can be used to approximate our loss function on the negative domain. We define $g_i(x)=\min\lbrace g(a_i x) ,\ln(2)\rbrace$.

\begin{lem}\label{lem:rngsp}
The range space induced by $\mathcal{F}=\{ g_i ~|~ i \in [n] \}$ satisfies $\Delta(\mathcal{R}_{\mathcal{F}})\leq d+1$.
\end{lem}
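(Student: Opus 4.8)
The plan is to reduce the VC dimension of the range space induced by $\mathcal{F}=\{g_i \mid i\in[n]\}$, where $g_i(x)=\min\{g(a_ix),\ln(2)\}$, to the VC dimension of affine halfspaces in $\mathbb{R}^d$, which is well known to be $d+1$. First I would observe that $g(v)=\ln(1+\exp(v))$ is a strictly increasing continuous bijection from $\mathbb{R}$ onto $(0,\infty)$, and hence so is the clipped version $t\mapsto\min\{g(t),\ln(2)\}$ viewed as a non-decreasing function of $t$ (it is strictly increasing on $(-\infty,0]$ and constant equal to $\ln(2)$ on $[0,\infty)$). Since $g_i(x)$ depends on $x$ only through the linear form $a_ix$, for every threshold $r\in\mathbb{R}_{\ge 0}$ the condition $g_i(x)\ge r$ is equivalent to a condition of the form $a_ix\ge\theta(r)$ for an appropriate threshold $\theta(r)\in\mathbb{R}\cup\{-\infty,+\infty\}$: when $r>\ln(2)$ the set is empty, when $r\le 0$ (or more precisely $r$ below the infimum $0$ of $g$) the set is all of $\mathbb{R}^d$, and for the remaining range of $r$ we take $\theta(r)$ to be the unique preimage under $g$ (using monotonicity).

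Next I would make this precise at the level of the range space. A generic range is $\rng{\mathcal{F}}(x,r)=\{g_i\in\mathcal{F}\mid g_i(x)\ge r\}=\{i\in[n]\mid a_ix\ge\theta(r)\}$. Identifying each function $g_i$ with the point $(a_i,1)\in\mathbb{R}^{d+1}$ (the row $a_i$ together with a constant coordinate to absorb the threshold), every such range is exactly the intersection of $\{(a_1,1),\dots,(a_n,1)\}$ with an affine halfspace $\{y\in\mathbb{R}^{d+1}\mid \langle (x,-\theta(r)),y\rangle\ge 0\}$, together with the two trivial ranges $\emptyset$ and $\mathcal{F}$ which are themselves realizable as halfspace intersections (or handled separately). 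Therefore $\ranges(\mathcal{F})$ is contained in the family of traces of affine halfspaces in $\mathbb{R}^{d+1}$ on an $n$-point set, and since the VC dimension of affine halfspaces in $\mathbb{R}^{d+1}$ is $d+1$, monotonicity of VC dimension under passing to a sub-collection of ranges gives $\Delta(\mathfrak{R}_{\mathcal{F}})\le d+1$.

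The only genuinely delicate point is the bookkeeping of the degenerate thresholds: because $g$ is not surjective onto all of $\mathbb{R}_{\ge 0}$ and because of the clipping at $\ln(2)$, one must check that the finitely many ``boundary'' ranges (empty range for $r>\ln(2)$; the full range for small $r$; and the range $\{i\mid a_ix\ge 0\}$ which arises for $r=\ln(2)$ from the flat part of $g_i$) are all still realizable as halfspace traces or can be absorbed without increasing the dimension — this is routine since closed and open halfspaces, as well as the whole space and the empty set, contribute nothing beyond the standard $d+1$ bound. I would also remark that the argument is essentially identical to the standard fact that sublevel/superlevel sets of linear-in-$x$ functions have VC dimension $d+1$, so once the monotone reparametrization $r\mapsto\theta(r)$ is set up, the rest follows from the classical halfspace bound with no further work.
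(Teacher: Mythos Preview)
Your argument is correct and follows essentially the same route as the paper: both invert the monotone map $g$ to rewrite the range $\{g_i : g_i(x)\ge r\}$ as $\{i : a_ix\ge\tau\}$ (handling the trivial cases $r>\ln(2)$ and $r\le 0$ separately) and then appeal to the classical $d+1$ bound for affine hyperplane classifiers in $\mathbb{R}^d$. One small terminological slip: the halfspace $\{y\in\mathbb{R}^{d+1}\mid\langle(x,-\theta(r)),y\rangle\ge0\}$ you wrote is homogeneous (through the origin), not affine, and it is homogeneous halfspaces in $\mathbb{R}^{d+1}$ that have VC dimension $d+1$ --- affine halfspaces in $\mathbb{R}^{d+1}$ would give $d+2$ --- so the conclusion is unchanged once the wording is fixed.
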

{
\begin{proof}
Note that $g $ is invertible and monotone.
Let $G \subseteq \mathcal{F},$ $x \in \mathbb{R}^d$ and $r \in \mathbb{R}$.
For $r> \ln(2)$ we have  $\range_G(x, r)= \emptyset$, otherwise (for $r \leq \ln(2)$) we have 
\begin{align*}
\range_G( x, r)&=\{g_i \in G ~|~ g_i(x)\geq r \}\\
&= \{g_i \in G ~|~ a_i x \geq g^{-1}(r) \}.
\end{align*}
Hence
\begin{align*}
~|\{ \range_G(x, r) ~|~ x \in \mathbb{R}^d, r\in \mathbb{R}_{\geq 0} \} |
&= \left\vert\lbrace \{ g_i \in G ~|~ a_i x -g^{-1}(r) \geq 0  \} ~|~ x \in \mathbb{R}^d, r\leq \ln(2) \rbrace \lbrace \emptyset \rbrace \right\vert \\
& \leq  \left\vert\{ \lbrace g_i \in G ~|~ a_i x -\tau \geq 0  \} ~|~ x \in \mathbb{R}^d, \tau \in \mathbb{R} \rbrace \right\vert.
\end{align*}
The last set is the set of points that is shattered by the affine hyperplane classifier $a_i \mapsto \mathbf{1}_{a_ix - \tau\geq 0}$.
Its VC dimension is thus $d+1$ \cite{KearnsV94}, implying $ |\{ \range_G(x, r) ~|~ x \in \mathbb{R}^d, r\in \mathbb{R}_{\geq 0} \} | =2^{|G|}$ can only hold if $ |G|\leq d + 1$,  and thus the VC dimension of $\mathcal{F}$ is at most $d+1$.
\end{proof}
}

We set $f_{\min}(x)= \frac{n}{\mu}$ for all $x \in \mathbb{R}^d$.

\begin{cor}\label{cor2}
The range space induced by $\mathcal{F}=\{ g_i ~|~ i \in [n] \} \cup \{ f_{\min} \}$ satisfies $\Delta(\mathcal{R}_{\mathcal{F}})\leq d+2$.
\end{cor}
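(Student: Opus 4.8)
Corollary~\ref{cor2} asks us to bound the VC dimension of the range space induced by $\mathcal{F}=\{ g_i ~|~ i \in [n] \} \cup \{ f_{\min} \}$, where $f_{\min}$ is the constant function $x \mapsto n/\mu$. The plan is to reduce almost everything back to Lemma~\ref{lem:rngsp}, which already establishes that the range space of $\{g_i ~|~ i \in [n]\}$ alone has VC dimension at most $d+1$; adding a single constant function should cost at most one more unit of VC dimension.

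First I would recall how ranges are formed: for $x \in \mathbb{R}^d$ and $r \in \mathbb{R}_{\geq 0}$, the range is $\rng{\mathcal{F}}(x,r) = \{ f \in \mathcal{F} \mid f(x) \geq r \}$. The key structural observation is that $f_{\min}$ behaves trivially with respect to any such threshold: since $f_{\min}(x) = n/\mu$ is independent of $x$, for a fixed threshold $r$ the function $f_{\min}$ is either in \emph{every} range with that threshold (when $r \leq n/\mu$) or in \emph{no} range with that threshold (when $r > n/\mu$). So each range of $\mathcal{F}$ restricted to a candidate shattered set $G$ is determined by the corresponding range over $G \cap \{g_i\}$ together with this single binary choice of whether $f_{\min}$ is included. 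Concretely, if $G \subseteq \mathcal{F}$ is shattered by $\mathrm{ranges}(\mathcal{F})$ and we write $G' = G \cap \{g_i ~|~ i \in [n]\}$, then the number of distinct traces $\{G' \cap R \mid R \in \mathrm{ranges}(\mathcal{F})\}$ is at most twice the number of traces of $G'$ under the $g_i$-only range space (the factor $2$ accounting for whether $f_{\min} \in G$ is in the range or not, which only matters if $f_{\min} \in G$, and otherwise contributes nothing).

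Putting these together: if $f_{\min} \notin G$, then shattering $G$ requires $2^{|G|}$ traces, all realized by ranges that are in particular ranges of the $g_i$-only space restricted to $G$, so $|G| \leq d+1$ by Lemma~\ref{lem:rngsp}. If $f_{\min} \in G$, then write $|G| = |G'| + 1$; the $2^{|G|}$ required traces split into those containing $f_{\min}$ and those not, each group of size $2^{|G|-1} = 2^{|G'|}$, and within each group the trace is determined entirely by $G' \cap R$ for $g_i$-only ranges $R$ (since for a given threshold the status of $f_{\min}$ is forced, we may still freely choose $x$ and $r$ subject to the forced status, and the relevant point is that $G'$ must itself be shattered by the induced ranges of the $g_i$). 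Hence $|G'|$ is shattered by the $g_i$-only range space, giving $|G'| \leq d+1$ and therefore $|G| \leq d+2$. This yields $\Delta(\mathcal{R}_{\mathcal{F}}) \leq d+2$ as claimed.

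The only mildly delicate point — and the one I would be most careful about — is the argument that when $f_{\min} \in G$, the sub-family $G'$ is genuinely shattered by the $g_i$-induced range space, i.e., that for each desired subset $S \subseteq G'$ we can find some $(x,r)$ realizing $S$ as $\rng{\{g_i\}}(x,r) \cap G'$ \emph{with the same $r$} being usable; but here we just need shattering by \emph{some} ranges, and since each of the $2^{|G'|}$ subsets of $G'$ appears (paired with, say, $f_{\min}$ excluded) among the $2^{|G|}$ traces of $G$, each is realized by a pair $(x,r)$ with $r > n/\mu$, hence $\rng{\mathcal{F}}(x,r) \cap G' = \rng{\{g_i\}}(x,r) \cap G'$ exactly. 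So no information is lost and Lemma~\ref{lem:rngsp} applies verbatim. The whole argument is short and purely combinatorial once this observation is in place.
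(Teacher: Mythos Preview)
Your proof is correct and follows essentially the same approach as the paper: remove $f_{\min}$ from a hypothetically shattered set $G$ to obtain $G' \subseteq \{g_i\}$, observe that $G'$ is still shattered by the $g_i$-only range space, and invoke Lemma~\ref{lem:rngsp} to get $|G'|\leq d+1$, hence $|G|\leq d+2$. The paper does this in two lines by contradiction, while you add the case split on whether $f_{\min}\in G$ and the observation about the constant function's trivial threshold behavior; these are not needed (any subset of a shattered set is shattered, and $\rng{\mathcal{F}}(x,r)\cap G' = \rng{\{g_i\}}(x,r)\cap G'$ holds for all $(x,r)$ simply because $G'\subseteq\{g_i\}$), but they do no harm.
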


{
\begin{proof}
Assume there exists $G \subset \mathcal{F}$ with $|G|\geq d+3$ and $ \{G\cap R\mid R\in \ranges(\mathcal{F}) \}  = 2^{G}.$
Then we have for $G'=G\setminus \{f_{\min}\}$ $|G'|\geq d+2$ and $ \{G'\cap R\mid R\in \ranges(\mathcal{F}) \}  = 2^{G'}$ contradicting Lemma \ref{lem:rngsp}.
\end{proof}
}

Now we are ready to prove Theorem \ref{Thm2.1}.

\begin{proof}[Proof of Theorem \ref{Thm2.1}]
We want to apply Proposition \ref{thm:sensitivity} to $\mathcal{F}=\{ g_i ~|~ i \in [n] \} \cup \{ f_{\min} \}$.
By Corollary \ref{cor2} the VC-dimension of $\mathcal{F}$ is at most $d+2$.
Note that the sensitivity of any function, in particular of $f_{\min}$, is at most $1$ 
and for the sensitivity of any function other than $f_{\min}$ we have
\[\sup_{x\in \mathbb{R}^d\setminus\{0\}}\frac{g_i(x)}{f((Ax)^-)+f_{\min}(x)}\leq \sup_{x}\frac{g_i(x)}{f_{\min}(x)}\leq \frac{\ln(2)\mu}{n}\]
and the total sensitivity is thus bounded by $\ln(2)\mu + 1$.
Hence we can use Proposition \ref{thm:sensitivity} and with failure probability at most $\delta_1$, we get a subset $r$ of $\mathcal{F}=\{ g_i ~|~ i \in [n] \} \cup \{ f_{\min} \}$ and a weight vector $u$ such that
\[ \left| \sum_{f_i\in \mathcal{F}} f_i(x) - \sum_{f_i\in R} u_i f_i(x) \right| \leq \varepsilon \sum_{f_i\in \mathcal{F}} f_i(x).\]
As it holds that $\sum_{f\in \mathcal{F}} f_i(x)=f((A x)^-)+ \frac{n}{\mu}$, and since the weight of $f_{\min}$ is $1$, this implies for $R'=R \setminus \{f_{\min}\}$ that we have an error $\Delta$ of at most
\begin{align*}
\Delta&=\left| \sum_{f_i\in \mathcal{F}\setminus \{f_{\min}\}} f_i(x) - \sum_{f_i\in R'} u_i f_i(x) \right| 
\leq \varepsilon \sum_{f_i\in \mathcal{F}} f_i(x)
\leq \varepsilon \left(\sum_{f_i\in \mathcal{F}\setminus \{f_{\min}\}} f_i(x) ~+~ \frac{n}{\mu} \right)
\leq 3\varepsilon f(A x).
\end{align*}
The last inequality follows from Lemma \ref{minvalue}, and using that $ f(A x) \geq \sum_{f_i\in \mathcal{F}\setminus \{f_{\min}\}} f_i(x)$.
This proves the first part of the theorem.

Observe that the expected contribution of row $a_i$ is $P(g_i \in R)a_i u_i=\frac{k}{n} \frac{n}{k}g(a_i x)=g(a_i x)$.
Thus the second statement follows by linearity of expectation. 
\end{proof}

\section{Omitted details from Section \ref{sec:logreg}}
\begin{proof}[Proof of Corollary \ref{mainthm}]
We need one streaming pass over the data in ${O}(\nnz(A))$ time to draw a uniform random sample $T$ from Theorem \ref{Thm2.1} and to compute $A'=SA$.
Now compute the $x^*$ minimizing the convex objective function
$f(Tx^*)+G^+_c(A'x^*)$.
This can be done using the ellipsoid method on the following convex program:
we have one variable $x_i$ for $i \in [d]$.
For each row $t_i$ of $T$, construct a variable $v_i$ and a constraint $v_i \geq t_i x$.
Similarly for each row $a_i'$ of $A'$ construct a variable $v_i'\geq 0$ and a constraint $v_i' \geq a_i' x$. The objective is to minimize the convex function
\[ \sum g(v_i) + \sum v_i' .\]
The convex program has $\poly(\mu d \log n)$ many variables, thus the running time is also $\poly(\mu d \log n)$.
Using the same analysis as in the previous proof shows that the solution $x'$ we get satisfies 
\[f(Ax')\leq {O}(1) \min_{x\in \mathbb{R}^d} f(Ax)\]
with constant probability.
\end{proof}

\section{Omitted details from Section \ref{sec:experiments}}\label{sec:supp:experiments}

\begin{table*}[!htbp]
	\centering
	\begin{tabular}{ | l| l| l| l| l| }
		\hline
		{\bf Dataset} & {$n$} & $d$ & {\bf Source} \\ \hline \hline
		Covertype & $581012$ & $54$ & \url{https://archive.ics.uci.edu/ml/datasets/Covertype} \\ \hline 
		Webspam & $350000$ & $128$ & \url{https://www.csie.ntu.edu.tw/~cjlin/libsvmtools/datasets/binary.html#webspam} \\ \hline 
		Kddcup & $494021$ & $33$ & \url{https://kdd.ics.uci.edu/databases/kddcup99/kddcup99.html} \\ \hline 
		Synthetic & $100000$ & $2$ & - \\ \hline 
	\end{tabular}
	\caption{Summary of the datasets: $d$ is given without intercept. Datasets are downloaded resp. generated automatically by our open code.}
	 \label{tab:datasetvalues}
\end{table*}

\begin{figure*}[ht!]
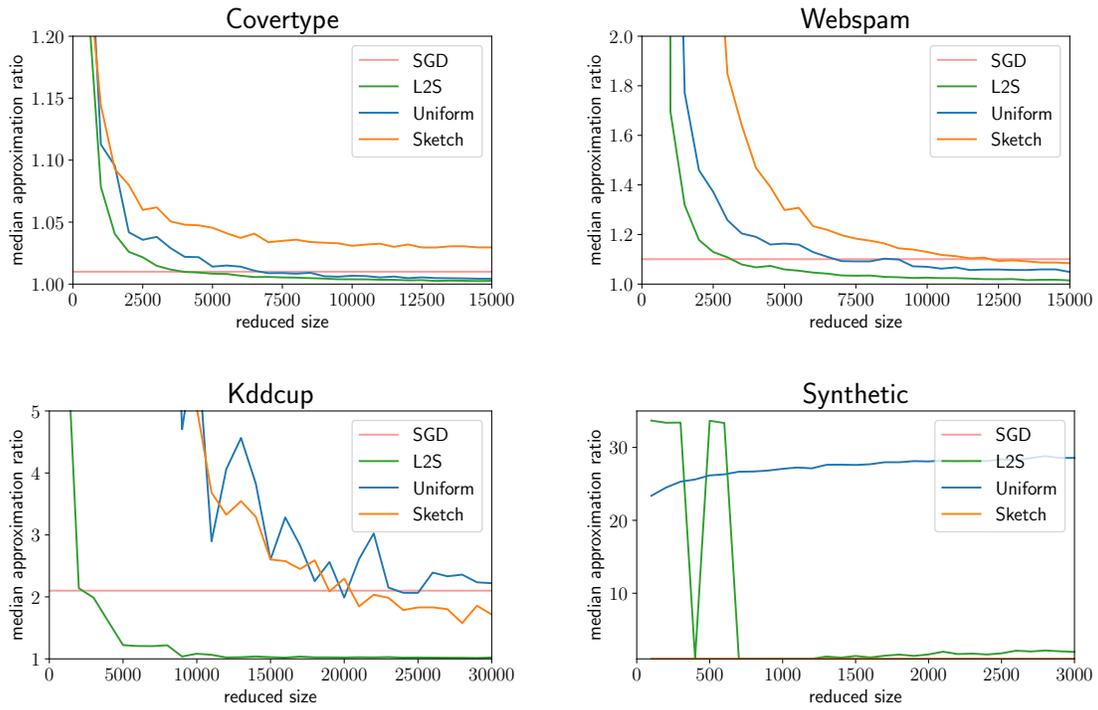

\begin{center}
\begin{tabular}{cc}
\includegraphics[width=0.44\linewidth]{newfigures/covertype_sklearn_ratio_plot.pdf}&
\includegraphics[width=0.44\linewidth]{newfigures/webspam_libsvm_desparsed_ratio_plot.pdf}\\
\includegraphics[width=0.44\linewidth]{newfigures/kddcup_sklearn_ratio_plot.pdf}&
\includegraphics[width=0.44\linewidth]{newfigures/synthetic_n_100000_ratio_plot.pdf}
\end{tabular}
\caption{Comparison of the approximation ratios.
}
\end{center}
\end{figure*}

\begin{figure*}[ht!]
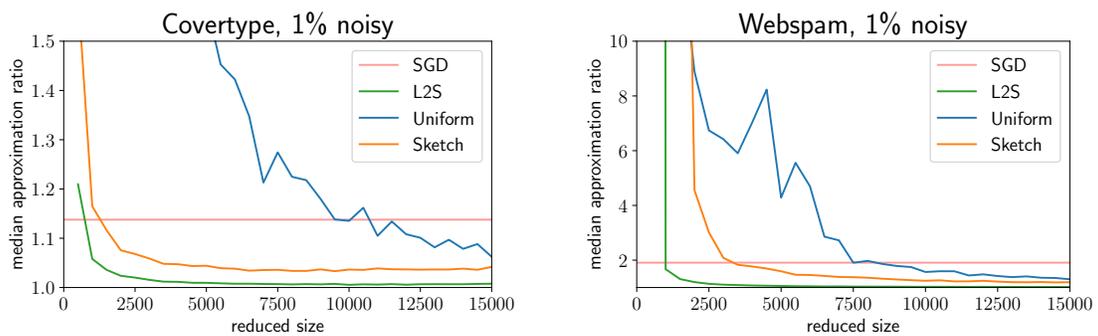

\begin{center}
\begin{tabular}{cc}
\includegraphics[width=0.44\linewidth]{newfigures/covertype_sklearn_noisy_ratio_plot.pdf}&
\includegraphics[width=0.44\linewidth]{newfigures/webspam_libsvm_desparsed_noisy_ratio_plot.pdf}
\end{tabular}
\caption{Comparison of the approximation ratios with added noise.
}
\end{center}
\end{figure*}

\begin{figure*}[ht!]
\begin{center}
\begin{tabular}{cc}
\includegraphics[width=0.44\linewidth]{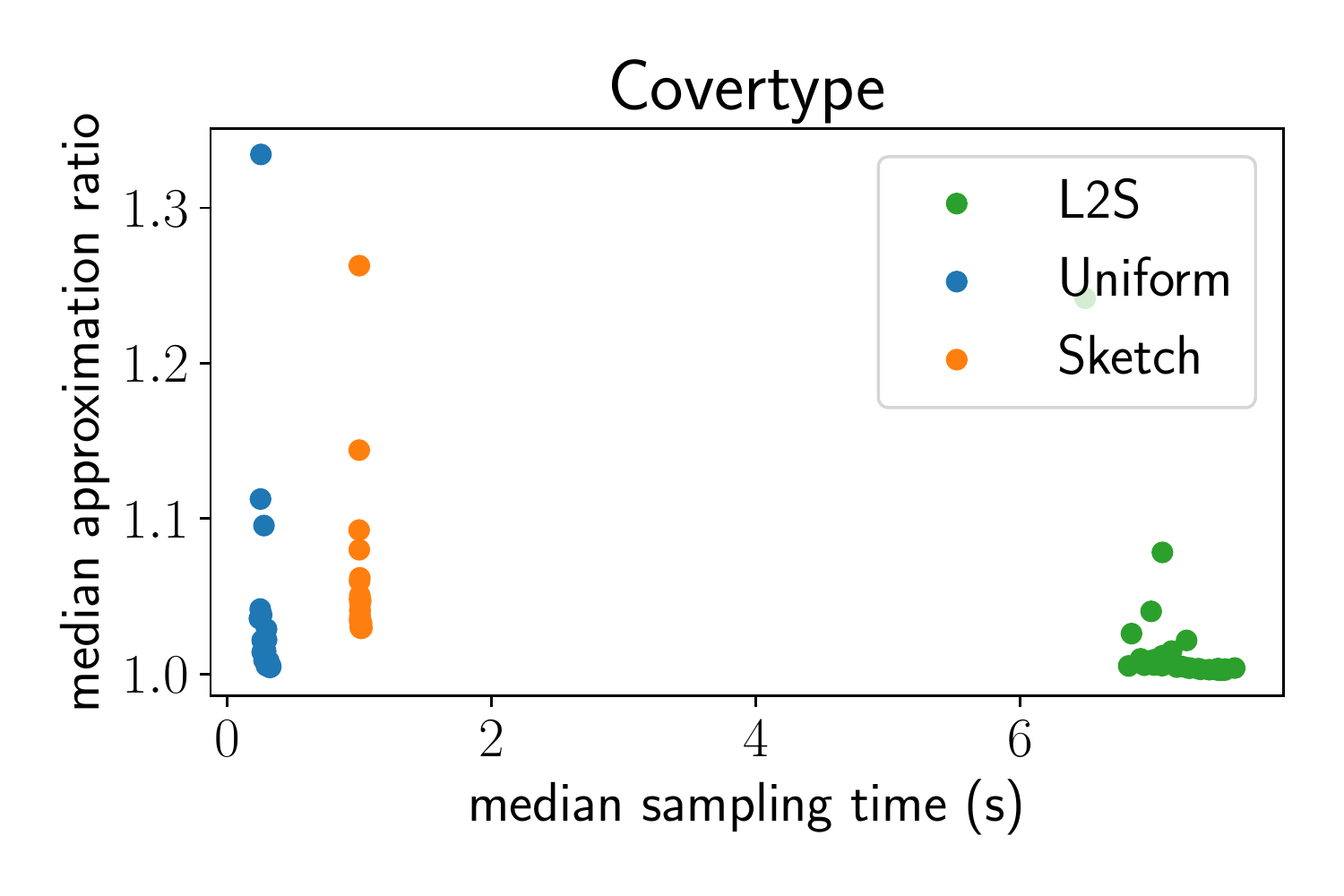}&
\includegraphics[width=0.44\linewidth]{newfigures/webspam_libsvm_desparsed_sampling_time_plot.pdf}\\
\includegraphics[width=0.44\linewidth]{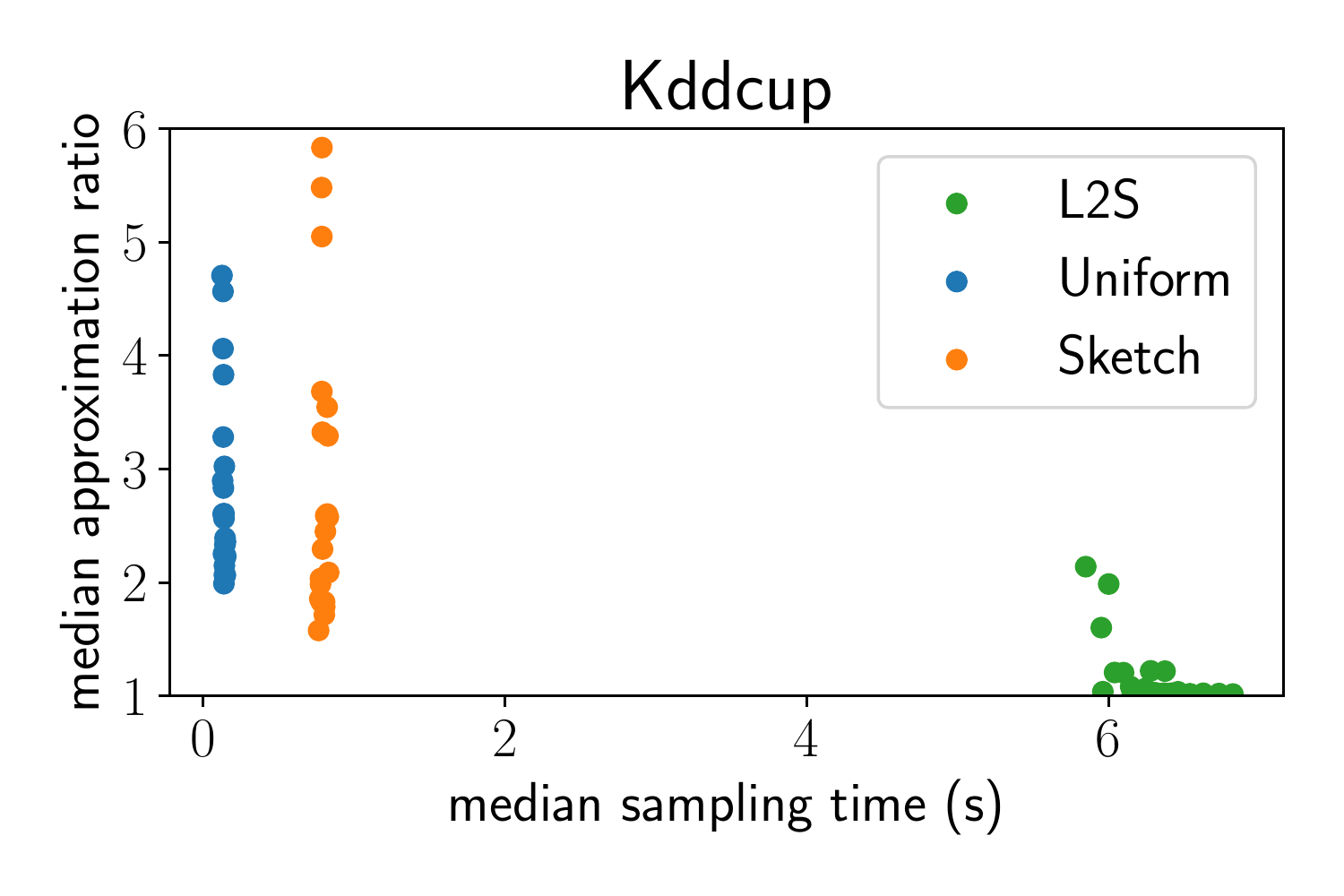}&
\includegraphics[width=0.44\linewidth]{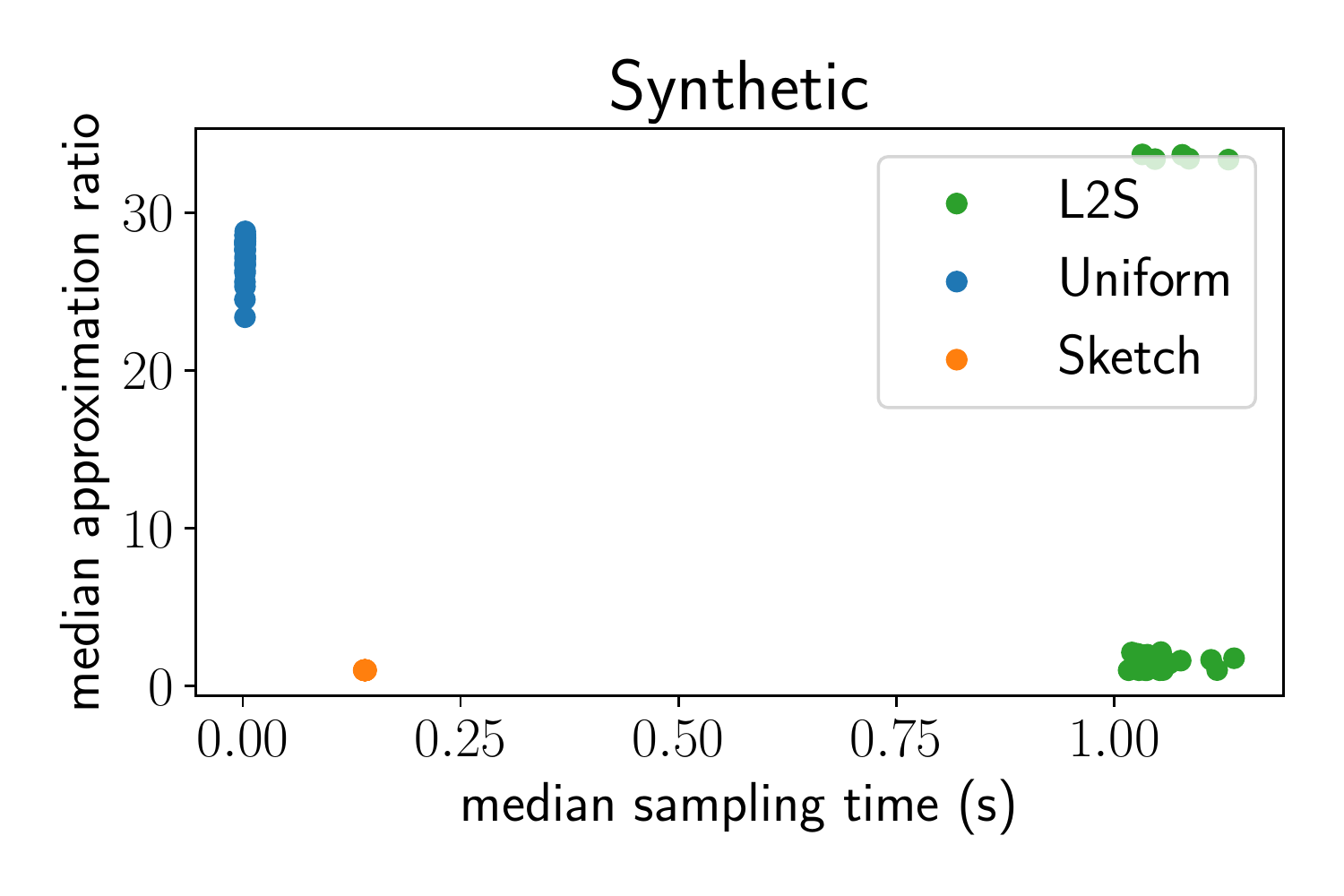}
\end{tabular}
\caption{Comparison of sketching resp. sampling times vs. accuracy.
}
\end{center}
\end{figure*}

\begin{figure*}[ht!]
\begin{center}
\begin{tabular}{cc}
\includegraphics[width=0.44\linewidth]{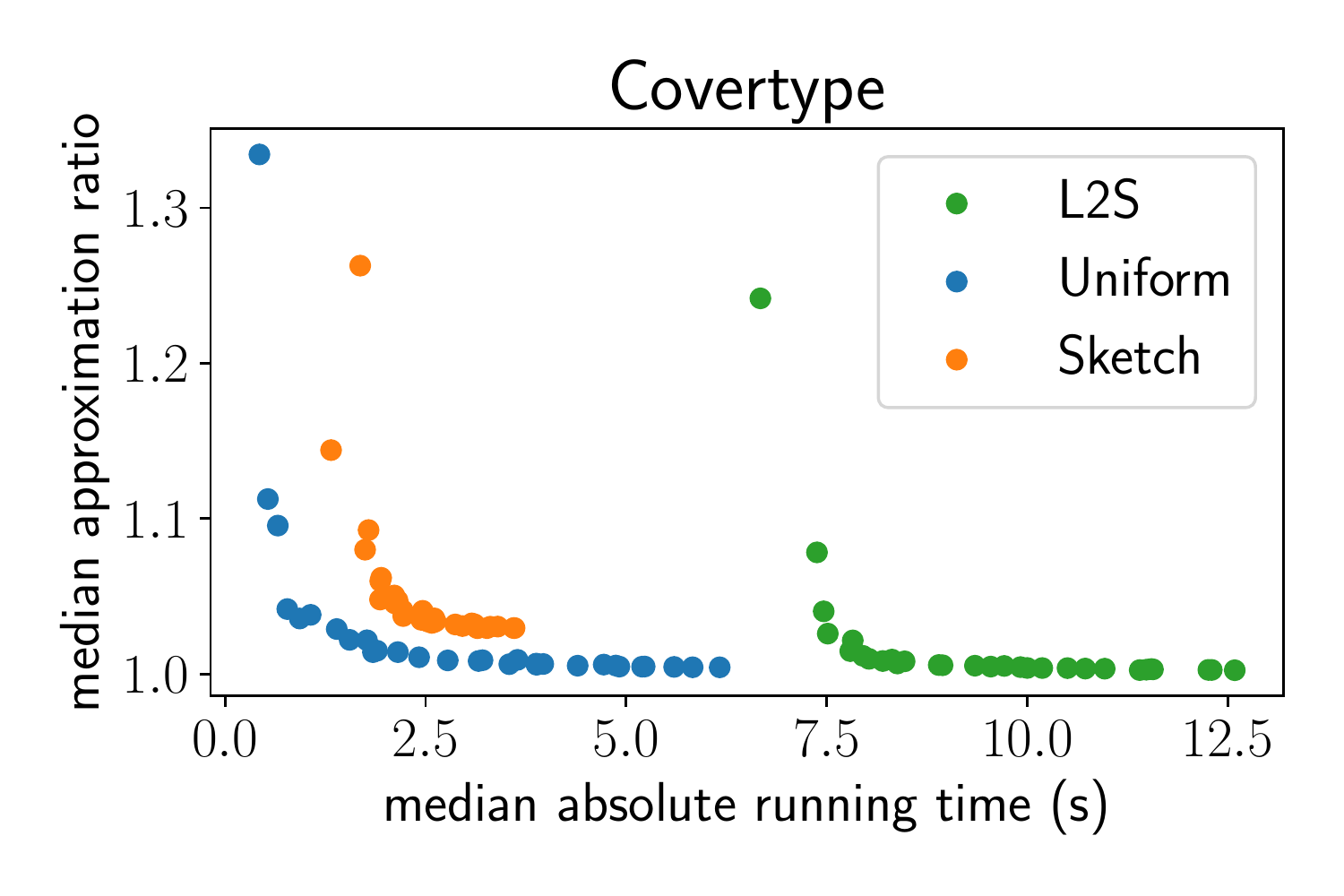}&
\includegraphics[width=0.44\linewidth]{newfigures/webspam_libsvm_desparsed_total_time_plot.pdf}\\
\includegraphics[width=0.44\linewidth]{newfigures/kddcup_sklearn_total_time_plot.pdf}&
\includegraphics[width=0.44\linewidth]{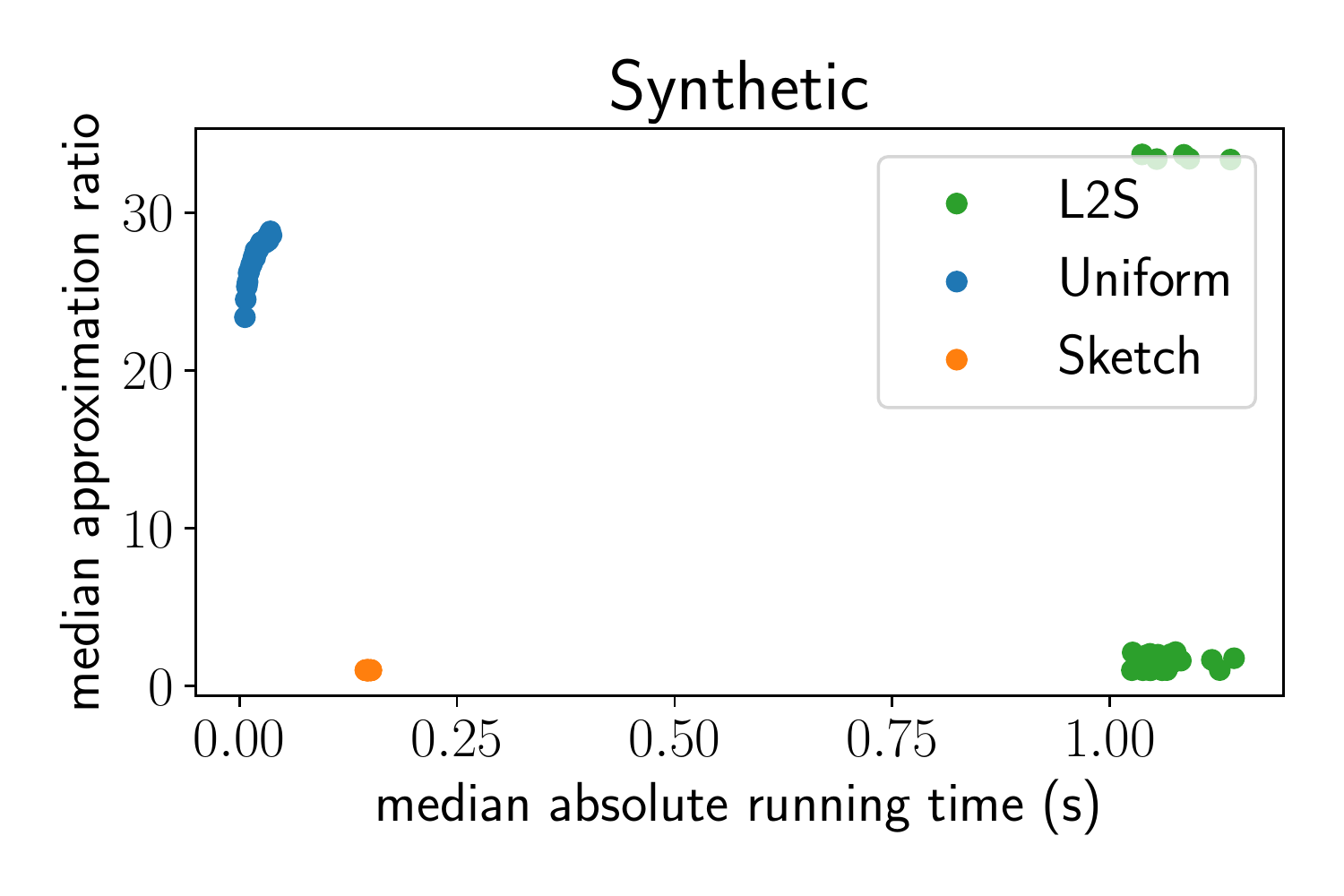}
\end{tabular}
\caption{Comparison of total running times including optimization vs. accuracy.
}
\end{center}
\end{figure*}

\end{document}